\newtheorem{lemma}{Lemma}[chapter]
\newtheorem{theorem}[lemma]{Theorem}
\newtheorem{proposition}[lemma]{Proposition}
\newtheorem{corollary}[lemma]{Corollary}
\newtheorem{conjecture}[lemma]{Conjecture}
\theoremstyle{definition}
\newtheorem{definition}[lemma]{Definition}
\theoremstyle{definition}
\newtheorem{example}{Example}[chapter]
\theoremstyle{remark}
\newtheorem*{remark}{Remark}
\begin{document}


\begin{titlepage}
\begin{center}
\huge Non-Deterministic Communication Complexity of Regular Languages\\
\vfill
\large Anil Ada \\
\vfill
\textit{School of Computer Science\\
McGill University, Montr\'eal\\}
January, 2008\\
\vfill
A thesis submitted to the Faculty of Graduate Studies and Research in partial fulfillment of the requirements of the degree of Master of Science.\\
\vfill
Copyright \copyright Anil Ada 2007.
\end{center}
\vspace*{0.5 in}
\end{titlepage}


\pagenumbering{roman}

\pagenumbering{roman} \setcounter{page}{1}
\thispagestyle{plain}
\chapter*{Abstract\markboth{Abstract}{Abstract}}

The notion of communication complexity was introduced by Yao in his seminal paper \cite{Yao79}. In \cite{BFS86}, Babai Frankl and Simon developed a rich structure of communication complexity classes to understand the relationships between various models of communication complexity. This made it apparent that communication complexity was a self-contained mini-world within complexity theory. In this thesis, we study the place of regular languages within this mini-world. In particular, we are interested in the non-deterministic communication complexity of regular languages.

We show that a regular language has either $O(1)$ or $\Omega(\log n)$ non-determi- nistic complexity. We obtain several linear lower bound results which cover a wide range of regular languages having linear non-deterministic complexity. These lower bound results also imply a result in semigroup theory: we obtain sufficient conditions for not being in the positive variety $Pol(\mathcal{C}\it{om})$.

To obtain our results, we use algebraic techniques. In the study of regular languages, the algebraic point of view pioneered by Eilenberg (\cite{Eil74}) has led to many interesting results. Viewing a semigroup as a computational device that recognizes languages has proven to be prolific from both semigroup theory and formal languages perspectives. In this thesis, we provide further instances of such mutualism.
\chapter*{R\'esum\'e}

La notion de complexit\'e de communication a d'abord \'et\'e introduite 
par Yao \cite{Yao79}. Les travaux fondateurs de Babai et al. \cite{BFS86} ont 
d\'evoil\'e une riche structures de classes de complexit\'e de 
communication qui permettent de mieux comprendre la puissance de divers 
modèles de complexit\'e de communication. Ces r\'esultats ont fait de la 
complexit\'e de communication une sorte de maquette à petite échelle du 
monde de la complexité. Dans ce m\'emoire, nous \'etudions la place des 
langages r\'eguliers dans cette maquette. Plus pr\'ecis\'ement, nous 
chercherons à d\'eterminer la complexit\'e de communication 
non-d\'eterministe de ces langages.

Nous montrons qu'un langage r\'egulier a une complexit\'e de 
communication soit $O(\log n)$, soit $\Omega(\log n)$. Nous 
\'etablissons de plus des bornes inf\'erieures lin\'eaires sur la 
complexit\'e non-d\'eterministe d'une vaste classe de langages. 
Celles-ci fournissent \'egalement des conditions suffisantes pour qu'un 
langage donn\'e n'appartienne pas à la vari\'et\'e positive $Pol(\mathcal{C}\it{om})$.

Nos r\'esultats se basent sur des techniques alg\'ebriques. Dans 
l'\'etude des langages r\'eguliers, le point de vue alg\'ebrique, 
d\'evelopp\'e initialement par Eilenberg \cite{Eil74} s'est r\'ev\'el\'e 
comme un outil central.  En effet, on peut voir un semigroupe fini comme 
une machine capable de reconna\^\i tre des langages et cette perspective 
a permis des avanc\'ees tant en th\'eorie des semigroupes qu'en 
th\'eorie des langages formels. Dans ce m\'emoire, nous \'etablissons de 
nouveaux exemples de ce mutualisme.


\chapter*{Acknowledgments}

First, I would like to express my deepest gratitude to Prof. Denis Th\'{e}rien. I thank him for trusting me and accepting me as his student. His enthusiasm for complexity theory and mathematics was highly contagious and is one of the reasons I am in this field. I also thank him for supporting me financially.

I am indebted to my co-supervisor Prof. Pascal Tesson for many things. I thank him for introducing me to the subject of this thesis. I am grateful for the extremely useful discussions we had, which taught me a lot of things. I also thank him for his constructive comments on the earlier drafts of the thesis.

I am very fortunate to have met Arkadev Chattopadhyay, L\'{a}szl\'{o} Egri, Navin Goyal and Mark Mercer in the complexity theory group. Thanks to my office mate L\'{a}szl\'{o} for discussions about mathematics and many other topics. I thank Arkadev for discussing the thesis problem with me and for sharing his keen insight on various things. Thanks to Navin and Mark for generously sharing their knowledge. I have learned a lot from all of them.

I thank the academic and administrative staff of the computer science department. I have met many wonderful people over the years and I feel privileged to be a part of this family.

I would also like to thank all my friends in Montr\'{e}al for making life fun for me here.

\hfill

Finally, biggest thanks go to my parents. Their love and support never wavered and this has made everything possible.


\tableofcontents


\pagenumbering{arabic}

\pagestyle{fancy}                       
\fancyfoot{}                            
\renewcommand{\chaptermark}[1]{         
\markboth{\MakeUppercase{\chaptername\ } \thechapter.\ #1}{}}
\renewcommand{\sectionmark}[1]{        
\markright{\thesection.\ #1}}
\fancyhead[LO,RE]{\thepage}    
\fancyhead[LE]{\rightmark}      
\fancyhead[RO]{\leftmark}     

\chapter{Introduction}\label{chapter1}

\section{Computational Complexity Theory}

The theory of computation is one of the fundamental branches of computer science that is concerned with the computability and complexity of problems in different computational models. Computability theory focuses on the question of whether a problem can be solved in a certain computational model. On the other hand, complexity theory seeks to determine how much resource is sufficient and necessary for a computable problem to be solved in a computational model.

Simply put, a \emph{computing device} is a machine that performs calculations automatically: it can be as complicated as a personal computer and as simple as an automatic door. In theoretical computer science, a \emph{computational model} is a pure mathematical definition which models a real-world computing device. This abstraction is necessary in order to rigorously study computation, its power and limitations. The most studied computational model (which physically corresponds to the everyday computers we use) is the Turing Machine. The most studied resources are time and space (memory) measured with respect to the input size. Based on these resources, different complexity classes can be defined. For instance, $P$ and $NP$ are classes of problems that can be solved in polynomial (in the input size) time using a deterministic and a non-deterministic Turing Machine respectively. Whether these classes are equal or not is without a doubt one of the biggest open questions in computer science and mathematics.

There are various interesting computational models including (but not limited to) Turing Machines, finite automata, context-free grammars, boolean circuits and branching programs. Their countless applications span computer science. For instance, when designing a new programming language one would find grammars useful. Finite automata and regular languages have applications in string searching and pattern matching. When trying to come up with an efficient algorithm, the theory of NP-completeness can be insightful. Many cryptographic protocols rely on theoretical principles. All these applications aside, the mathematical elegance and aesthetic inherent in theory of computation is enough to attract many minds around the world. And perhaps the main reason that computer science is called a ``science" is because of the study of theoretical foundations of computer science.

Despite the intense efforts of many researchers, our understanding and knowledge of computational complexity is quite limited. Similar to the $P$ versus $NP$ question, there are many other core questions (in different computational models) that beg to be answered. The focus of research in complexity theory is twofold. Given a certain problem, a computational model and a resource:
\begin{itemize}
\item What is the maximum amount of resource we need to solve the problem in the computational model?
\item What is the minimum amount of resource we need to solve the problem in the computational model?
\end{itemize}
The ultimate goal is to find matching upper and lower bounds. The first question can be answered by depicting a method\footnote{In the Turing Machine computational model, the method is called an \emph{algorithm}.} of solving the problem and analyzing the amount of resource this method consumes. Almost always, the more challenging question is the second one. Proving results of the form ``Problem $p$ requires at least $x$ resource." requires us to argue against all possible methods that solve the problem. In most computational models, this is intrinsically hard. Yet it should be also noted that complexity theory is a relatively new field and therefore can be considered as an amenable discipline of mathematics.

\section{Communication Complexity}

In this thesis, we will be studying a computational model which emulates distributed computing: communication protocols. In this model, there are usually two computers that are trying to collaboratively evaluate the value of a given function. The difficulty is that the input is distributed among the two computers in a predetermined adversarial way so that neither computer can evaluate the value of the function by itself. Therefore, in order to determine the value of the function, these computers need to communicate over a network. The communication will be carried out according to a protocol that has been agreed upon beforehand. The resource we are interested in is the number of bits that is communicated i.e. we would like to determine the \emph{communication complexity} of a given function.

As an example, consider two files that reside in two computers. Suppose we wanted to know if these two files were copies of each other. How many bits would the computers need to communicate in order to conclude that the files are the same or not? What is the best protocol for the computers to accomplish this task?

Note that the scenario here is quite different from \emph{information theory}. In information theory, the goal is to robustly transmit a predetermined message through a noisy channel and there is no function to be computed. In the communication complexity setting, the channel of communication is not noisy. What is sent through the channel is determined by the protocol and it usually changes according to the inputs of the computers and the communication history.

There are various models for communication complexity. The first defined was the 2-player deterministic model. Since then, non-deterministic, randomized, multi-party, distributional, simultaneous and many more models have been defined and analyzed.

Although the mathematical theory of communication complexity was first introduced in light of its applications to parallel computers (\cite{Yao79}), it has been shown to have many more applications where the need for communication is not explicit. These applications include time/space lower bounds for VLSI chips (\cite{KN97}), time/space tradeoffs for Turing Machines (\cite{BNS92}), data structures (\cite{KN97}), boolean circuit lower bounds (\cite{Gro92}, \cite{HG91},\cite{Nis93},\cite{RM97}), pseudorandomness (\cite{BNS92}), separation of proof systems (\cite{BPS07}) and lower bounds on the size of polytopes representing $NP$-complete problems (\cite{Yan91}).

\section{Algebraic Automata Theory}

One of the fundamental (and simplest) computational models is the finite automaton and it is usually the first model in theory of computation that computer science students are introduced to. The word ``finite" refers to the memory of the machine and finite automata are models for computers with an extremely limited amount of memory (for example an automatic door). Even though it is a quite limited model, its well-known applications include text processing, compilers and hardware design.

In a nutshell, finite automata are abstract machines such that given a word over some alphabet as an input, it either accepts or rejects the word after processing each letter of the word sequentially. The set of all words that a finite automaton accepts is called the \emph{language} corresponding to the finite automaton and we say that the language is \emph{recognized} by this automaton. A language recognized by some finite automaton is called a \emph{regular language}.

Algebra has always been an important tool in the study of computational complexity. In the study of regular languages, semigroup theory\footnote{A semigroup is a set equipped with a binary associative operation.} has been the dominant tool. It should be mentioned that semigroups have shed new light not only on regular languages but on computational theory in general. On top of this, it is also true that computational theory has led to advances in the study of semigroup theory (\cite{TT04}).

The link between semigroups and regular languages has been established by viewing a semigroup as a computational machine that accepts/rejects words over some alphabet. In this context, it is not difficult to prove that the family of languages that finite semigroups recognize is exactly the regular languages. In fact, the connection between finite automata and semigroups is much more profound. There are several reasons why this point of view is beneficial. First of all, the semigroup approach to regular languages allows one to use tools from semigroup theory while investigating the properties of these languages. Eilenberg showed that there is a one to one correspondence between certain robust and natural classes of languages and semigroups. This has organized and heightened our understanding of regular languages. Furthermore, in certain computational models, the complexity of a regular language can be parametrized by the complexity of the corresponding semigroup and so this provides us alternate avenues to analyze the complexity of regular languages. Often the combinatorial descriptions of regular languages suffice to obtain upper bounds on their complexity. The algebraic point of view proves to be useful when proving hardness results. Communication complexity is one of the computational models where this is the case.

\section{Outline}

In this thesis, we study the non-deterministic communication complexity of regular languages. The ultimate goal is to find functions $f_1(n), f_2(n), ... , f_k(n)$ such that each regular language has $\Theta (f_i(n))$ non-deterministic communication complexity for some $i \in \{1,2,...,k\}$. Furthermore, we would like a characterization of the languages with $\Theta(f_i(n))$ complexity for all $i \in \{1,2,...,k\}$. In \cite{TT03}, this goal was reached for the following communication models: deterministic, simultaneous, probabilistic, simultaneous probabilistic and Mod$_p$-counting. Obtaining a similar result for the non-deterministic model requires a refinement of the techniques used in \cite{TT03}.

The study of the non-deterministic communication complexity of regular languages from an algebraic point of view is important for several reasons. We can summarize it by stating that it increases our understanding of regular languages and non-deterministic communication complexity. 

From regular languages perspective, our results yield sufficient algebraic conditions for not being in a certain class of languages. This is an interesting result within algebraic automata theory. Furthermore, given the fact that communication complexity has many ties with other computational models, understanding the communication complexity of regular languages helps us understand the power of regular languages in different computational models and where they stand within the complexity theory frame.

From a communication complexity perspective, there are several interesting consequences. In \cite{TT03}, it was shown that in the regular languages setting, $\Theta(\log \log n)$ probabilistic communication complexity coincides with $\Theta(\log n)$ simultaneous communication complexity. Results about the non-deterministic communication complexity leads to further such correspondences which allows us the compare different communication models within the regular languages framework. Even though regular languages are ``simple" with respect to Turing Machines for example, they provide a non-trivial case-study of non-deterministic communication complexity since there are both ``hard" and ``easy" regular languages with respect to this model. Therefore, a complete characterization of regular languages in this model is likely to force one to develop new lower bound techniques and study functions (for example promise functions) other than the commonplace ones which have been intensively studied.

Through the notion of \emph{programs over monoids} (\cite{Bar86}), a connection between algebraic automata theory and circuit complexity has been formed. For example algebraic characterizations of some of the most studied circuit classes $AC^0$, $ACC^0$ and $NC^1$ have been obtained (\cite{BT87}). The connection between communication complexity and circuit complexity is well known. Currently, techniques from communication complexity provide one of the most powerful tools for proving circuit lower bounds (\cite{Gro92},\cite{HG91},\cite{Nis93},\cite{RM97}). Algebraic characterization of regular languages with respect to communication complexity completes a full circle and further strengthens our understanding of the three fields.

\begin{center}
\includegraphics[scale=0.8]{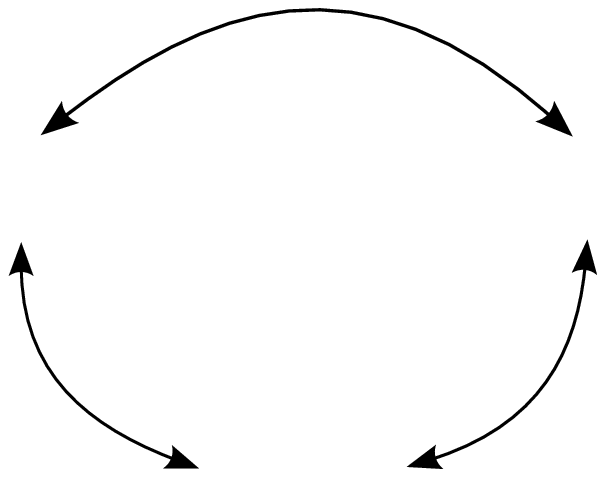}
\rput{0}(-4,1.5){Circuits}
\rput{0}(-7,3.8){Communication Complexity}
\rput{0}(-1,3.8){Algebraic Automata Theory}
\end{center}

The breakdown of the thesis is as follows. In Chapter 2, we give an introduction to communication complexity and present the fundamental techniques in this field. Chapter 3 is devoted to the basics of algebraic automata theory. The main purpose of these two chapters is to deliver the background material needed for Chapter 4. In Chapter 4, we present the results obtained about the non-deterministic communication complexity of regular languages. Finally we conclude in Chapter 5. 
\chapter{Communication Complexity}\label{chapter2}

In this chapter, we present the notion of communication complexity as introduced by Yao in \cite{Yao79}. We start in Section 2.1 with the deterministic model in which we look at the fundamental concepts. In Section 2.2, we move to the non-deterministic model which is the model of interest for this work. In Section 2.3, we briefly mention other popular communication models. In Section 2.4, we introduce the notion of a reduction which plays a key role in our arguments in Chapter 4. We also define communication complexity classes and see a beautiful analogy between these classes and Turing Machine classes. Finally we summarize this chapter in Section 2.5.

We refer the reader to the much celebrated book by Kushilevitz and Nisan \cite{KN97} for an in depth survey of the subject. One can also find and excellent introduction in the lecture notes by Ran Raz \cite{Raz04}. We mostly use the notation used in \cite{KN97}.

\section{Deterministic Model}

\subsection{Definition}

In the two-party communication complexity model, we have two players (usually referred to as Alice and Bob) and a function $f: X \times Y \to Z$. Alice is given $x \in X$ and Bob is given $y \in Y$. Both know the function $f$ and their goal is to \emph{collaboratively} compute $f(x,y)$ i.e. they both want to know the value $f(x,y)$. In order to do this they have to communicate (for most functions) since neither of them see the whole input. We are only interested in the number of bits that they need to communicate to compute $f(x,y)$. Thus the complexity of their individual computations are irrelevant and we assume that both Alice and Bob have unlimited computational power.

The communication of Alice and Bob is carried out according to a \index{protocol} protocol $\mathcal{P}$ that both players have agreed upon beforehand. The protocol $\mathcal{P}$ specifies in each step the value of the next bit communicated as a function of the input of the player who sends it and the sequence of previously communicated bits by the two players. The protocol also determines who sends the next bit as a function of the bits communicated thus far.

More formally, a protocol is a 5-tuple of functions $(c_A,c_B,n,f_A,f_B)$ such that:
\begin{itemize}
\item At each step of the communication, $c_A$ takes as input the communication history thus far and the input for Alice and returns the bit that Alice will communicate (similarly for $c_B$ and Bob).
\item $n$ takes as input the communication history thus far and decides whether the communication is over or not. If not, it decides who speaks next.
\item After the communication is over, $f_A$ takes as input the communication history and the input for Alice and returns one bit (similarly for $f_B$ and Bob). This bit is the output of the protocol and the values of $f_A$ and $f_B$ should be the same.
\end{itemize}

Unless stated otherwise, the functions we consider in this chapter are of the form $f:\{0,1\}^n \times \{0,1\}^n \to \{0,1\}$. Since the output of the function is just one bit, we can assume that the last bit communicated is this value.


Let $\mathcal{P}(x,y)$ be the output of the protocol $\mathcal{P}$, i.e. the last bit communicated. Then we say that $\mathcal{P}$ is a protocol for $f$ if for all $x,y \in \{0,1\}^n$, $\mathcal{P}(x,y) = f(x,y)$. The cost of $\mathcal{P}$ is defined as
\[
cost(\mathcal{P}) := \max_{(x,y)\in X \times Y} \textrm{number of bits communicated for $(x,y)$}.
\]
The \index{communication complexity ! deterministic} \emph{deterministic communication complexity} of a function $f$, denoted as $D(f)$, is defined as
\[
D(f) := \min_{\mathcal{P} \textrm{ protocol for $f$}} cost(\mathcal{P})
\]

\begin{example}
Define the \index{EQUALITY} EQUALITY function as
\[
EQ(x,y) := \left\{
          \begin{array}{ll}
          1 & \quad \textrm{if $x = y$,}  \\
          0 & \quad \textrm{otherwise.} \\
          \end{array}
          \right.
\]
An obvious upper bound for $D(EQ)$ is $n+1$ since one of the players, say Alice, can just send all her bits to Bob and Bob can simply compare the string he has with the string Alice has sent. If the strings are equal he can send 1 to Alice and otherwise he can send 0. In fact, this protocol gives an upper bound for any boolean function. Once one of the players knows all the input, s/he can compute the value of the function and send this value to the other player. The number of bits communicated is $n+1$.
\end{example}

Intuitively, one expects that $D(EQ) = n+1$, i.e. $n+1$ is also a lower bound for $D(EQ)$. Although this intuition is correct, how can one rigorously prove this lower bound?

\subsection{Lower Bound Techniques}

As in any other computational model, proving tight lower bounds for the complexity of explicit functions in the communication model is usually a non-trivial task. Nevertheless, there are a number of effective techniques one can use to accomplish this. Now we explore three of these methods: the disjoint cover method, the rectangle size method and the fooling set method.

\begin{figure}
\[
\begin{psmatrix}%
[rowsep=0pt, colsep=3.5pt]
 & 000 & 001 & 010 & 010 & 100 & 101 & 110 & 111 \\
000 & 1 & 0 & 1 & 1 & 0 & 0 & 0 & 1 \\
001 & 0 & 0 & 0 & 0 & 1 & 0 & 1 & 0 \\
010 & 0 & 0 & 0 & 0 & 0 & 0 & 1 & 1 \\
011 & 1 & 0 & 0 & 1 & 1 & 0 & 0 & 1 \\
100 & 1 & 0 & 0 & 0 & 0 & 0 & 1 & 0 \\
101 & 0 & 1 & 1 & 0 & 0 & 0 & 0 & 1 \\
110 & 0 & 0 & 0 & 1 & 0 & 1 & 1 & 1 \\
111 & 0 & 0 & 1 & 1 & 1 & 0 & 0 & 1
\end{psmatrix}
\pspolygon(-5.9,5)(0.1,5)(0.1,-0.4)(-5.9,-0.4)
\pspolygon[linestyle=dashed]%
(-5.1,3)(-5.1,4.3)(-2.9,4.3)(-2.9,3)
\pspolygon[linestyle=dashed]%
(-2.1,3)(-2.1,4.3)(-1.4,4.3)(-1.4,3)
\pspolygon[linestyle=dashed]%
(-5.1,2.35)(-5.1,1.7)(-2.9,1.7)(-2.9,2.35)
\pspolygon[linestyle=dashed]%
(-2.1,2.35)(-2.1,1.7)(-1.4,1.7)(-1.4,2.35)
\]
\label{inputmatrix}
\caption{Example of an input matrix and a 0-monochromatic rectangle.}
\end{figure}

For a function $f$, define the \index{input matrix} \emph{input matrix} by $A^f_{xy} = f(x,y)$ where the rows are indexed by $x \in X$ and the columns are indexed by $y \in Y$. We say that $R$ is a \index{rectangle} \emph{rectangle} if $R = S \times T$ for some $S \subseteq X$ and $T \subseteq Y$. This is equivalent to saying that $(x_1,y_1) \in R$ and $(x_2,y_2) \in R$ together imply $(x_1,y_2) \in R$. We say that $R$ is \index{monochromatic rectangle} \emph{monochromatic} with respect to $f$ if for some $z \in Z$ we have $A^f_{xy} = z$ for all $(x,y) \in R$ (see Figure \ref{inputmatrix}).

Given $f$, let $\mathcal{P}$ be a protocol for $f$. For simplicity let us assume that the players send bits alternately. Also without loss of generality we can assume Alice (who has the input $x$) sends the first bit. Thus at step 1, the protocol partitions $X \times Y = (X_0 \times Y) \cup (X_1 \times Y)$ such that
\[ \forall x \in X_0, \textrm{ Alice sends 0}, \]
\[ \forall x \in X_1, \textrm{ Alice sends 1}. \]
At the second step it is Bob's turn to send a bit so the protocol partitions both $X_0 \times Y = (X_0 \times Y_{00}) \cup (X_0 \times Y_{01})$ and $X_1 \times Y = (X_1 \times Y_{10}) \cup (X_1 \times Y_{11})$. Here, if the first communicated bit was a 0, then
\[ \forall y \in Y_{00}, \textrm{ Bob sends 0}, \]
\[ \forall y \in Y_{01}, \textrm{ Bob sends 1}, \]
and if the first communicated bit was a 1, then
\[ \forall y \in Y_{10}, \textrm{ Bob sends 0}, \]
\[ \forall y \in Y_{11}, \textrm{ Bob sends 1}. \]
In general, if it is Alice's turn to speak and the bits communicated thus far are $b_1,b_2,...b_k$, then Alice partitions $X_{b_1,...,b_{k-1}} \times Y_{b_1,...,b_k}$ into $X_{b_1,...,b_k,0} \times Y_{b_1,...,b_k}$ and $X_{b_1,...,b_k,1} \times Y_{b_1,...,b_k}$. A \index{protocol partitioning tree} protocol partitioning tree nicely illustrates what happens (see Figure \ref{protocoltree}).

\begin{figure}
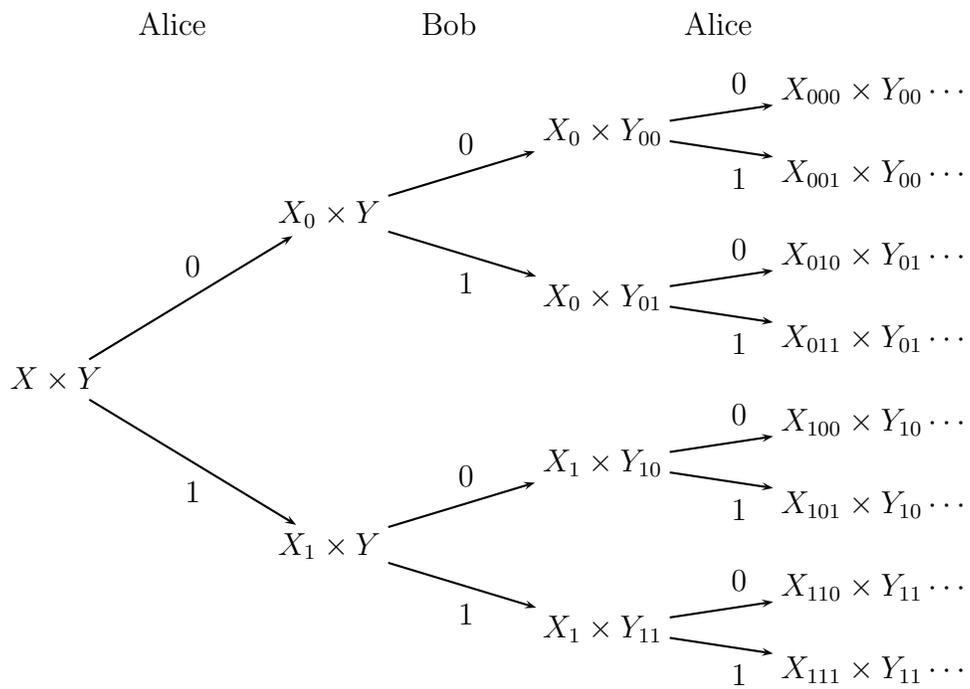

\pstree[treemode=R,levelsep=20ex,edge=none]{\TR{\hspace{1.7cm}Alice}}{
    \pstree{\TR{\hspace{1.8cm}Bob}}{
        \TR{\hspace{1.7cm}Alice}
    }
}
\\\\
\psset{arrows=->}
\pstree[treemode=R,levelsep=20ex,nodesep=3pt]{\TR{$X \times Y$}}{
	\pstree{\TR{$X_0 \times Y$}\taput{0}}{
		\pstree{\TR{$X_0 \times Y_{00}$}\taput{0}}{
			\TR{$X_{000} \times Y_{00}\cdots$}\taput{0}
			\TR{$X_{001} \times Y_{00}\cdots$}\tbput{1}
		}
		\pstree{\TR{$X_0 \times Y_{01}$}\tbput{1}}{
			\TR{$X_{010} \times Y_{01}\cdots$}\taput{0}
			\TR{$X_{011} \times Y_{01}\cdots$}\tbput{1}
		}
	}
    \pstree{\TR{$X_1 \times Y$}\tbput{1}}{
		\pstree{\TR{$X_1 \times Y_{10}$}\taput{0}}{
			\TR{$X_{100} \times Y_{10}\cdots$}\taput{0}
			\TR{$X_{101} \times Y_{10}\cdots$}\tbput{1}
		}
		\pstree{\TR{$X_1 \times Y_{11}$}\tbput{1}}{
			\TR{$X_{110} \times Y_{11}\cdots$}\taput{0}
			\TR{$X_{111} \times Y_{11}\cdots$}\tbput{1}
		}
	}
}
\caption{Protocol partitioning tree.}
\label{protocoltree}
\end{figure}

Observe that each node in the protocol partitioning tree is a rectangle and two nodes intersect if and only if one is the ancestor of the other. In particular, the leaves of the tree are disjoint rectangles. The same bits are communicated for all the inputs in a leaf so $\mathcal{P}(x,y)$ is the same for all these inputs, i.e. the leaves are monochromatic. The height of the tree is equal to $cost(\mathcal{P})$. Thus we have proved the following lemma which is a key combinatorial property of a protocol.

\begin{lemma}
A protocol $\mathcal{P}$ for $f$ with $cost(\mathcal{P}) = c$ partitions the input matrix $A^f$ into at most $2^c$ monochromatic rectangles.
\end{lemma}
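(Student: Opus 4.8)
The plan is to verify the claim by carefully tracking what the protocol partitioning tree establishes, since the preceding discussion has already constructed this tree and recorded its essential features. Specifically, I would argue that the leaves of the tree furnish the desired partition into monochromatic rectangles, and that their number is bounded by $2^c$.

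First I would observe that each node of the tree is a rectangle. This is already asserted in the text immediately before the lemma, and it follows by induction on the depth: the root $X \times Y$ is trivially a rectangle, and the text describes how each internal node $X_{b_1,\dots,b_{k-1}} \times Y_{b_1,\dots,b_k}$ is split by the speaking player into two sets that again have product form (Alice refines the $X$-coordinate, Bob refines the $Y$-coordinate). Thus the product structure is preserved at every split.

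Next I would argue that the leaves partition $X \times Y$ and are monochromatic. Disjointness and covering follow because at each node the two children partition that node, so descending the tree yields a partition of $X \times Y$ into the leaves. For monochromaticity, I would invoke the key property stated in the text: all inputs $(x,y)$ lying in a single leaf generate exactly the same communication transcript $b_1,\dots,b_c$, and since the last bit equals the protocol's output $\mathcal{P}(x,y)$, every input in that leaf produces the same value; because $\mathcal{P}$ is a protocol for $f$, we have $f(x,y) = \mathcal{P}(x,y)$ constant on the leaf, so the leaf is monochromatic with respect to $f$.

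Finally I would bound the number of leaves. Since $cost(\mathcal{P}) = c$, every root-to-leaf path has length at most $c$, so the tree has height at most $c$; a binary tree of height at most $c$ has at most $2^c$ leaves. Combining the three observations gives a partition of $A^f$ into at most $2^c$ monochromatic rectangles, as claimed. I expect no serious obstacle here, as the heavy lifting is done by the tree construction preceding the statement; the only point requiring slight care is confirming that the leaves genuinely cover all of $X \times Y$ (rather than merely being disjoint), which follows from the fact that on every input the protocol halts after finitely many steps and thereby selects a unique leaf.
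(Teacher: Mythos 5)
Your proposal is correct and follows essentially the same route as the paper: the lemma is derived from the protocol partitioning tree, whose leaves form a partition of $X \times Y$ into disjoint rectangles that are monochromatic (since all inputs in a leaf share the same transcript and hence the same output $\mathcal{P}(x,y) = f(x,y)$), with the bound $2^c$ coming from the tree having height at most $cost(\mathcal{P}) = c$. The additional care you take in justifying covering and in passing from constancy of $\mathcal{P}$ to constancy of $f$ only makes explicit what the paper leaves implicit.
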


A \index{monochromatic disjoint cover} \emph{monochromatic disjoint cover} is a partition of a matrix into disjoint monochromatic rectangles. We denote by $C^D(f)$, the minimum number of rectangles in any monochromatic disjoint cover of $A^f$. With this definition and the previous lemma at hand, we can present the first lower bound technique.

\index{disjoint cover method}
\begin{corollary}[Disjoint Cover Method]\label{disjointcovermethod}
For a function $f$ we have $D(f) \geq \log_2 C^D(f)$.
\end{corollary}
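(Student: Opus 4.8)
The plan is to read this off almost immediately from the preceding Lemma, since a monochromatic disjoint cover is exactly the combinatorial object that lemma produces. First I would let $\mathcal{P}$ be an optimal protocol for $f$, so that $cost(\mathcal{P}) = D(f)$; such a protocol exists because $D(f)$ is defined as a minimum over all protocols for $f$. The existence of this witness is the only ``choice'' in the argument.

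Next I would invoke the Lemma directly with $c = D(f)$: the protocol $\mathcal{P}$ partitions the input matrix $A^f$ into at most $2^{D(f)}$ monochromatic rectangles. The key observation — already established in the discussion preceding the Lemma via the protocol partitioning tree — is that the leaves of the tree are pairwise disjoint rectangles whose union is all of $X \times Y$, and each is monochromatic. Hence this collection is genuinely a \emph{partition} into disjoint monochromatic rectangles, i.e. a monochromatic disjoint cover of $A^f$, not merely a cover.

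I would then finish by comparing against the definition of $C^D(f)$ as the \emph{minimum} number of rectangles in any monochromatic disjoint cover of $A^f$. Since the cover furnished by $\mathcal{P}$ has at most $2^{D(f)}$ rectangles, minimality gives $C^D(f) \leq 2^{D(f)}$. Taking $\log_2$ of both sides (a monotone operation) yields $\log_2 C^D(f) \leq D(f)$, which is exactly the claimed bound $D(f) \geq \log_2 C^D(f)$.

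There is essentially no obstacle here: the entire content of the statement is carried by the Lemma, and the corollary is a one-line consequence of combining that bound with the definition of $C^D(f)$ as a minimum. The only point worth stating carefully is the direction of the inequality — $C^D(f)$ being a minimum means the protocol's cover gives an \emph{upper} bound on $C^D(f)$, and it is precisely this direction that converts the counting bound into the desired lower bound on $D(f)$.
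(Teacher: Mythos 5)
Your proposal is correct and is exactly the paper's intended argument: the paper states this corollary as an immediate consequence of the preceding lemma (an optimal protocol of cost $D(f)$ yields a monochromatic disjoint cover with at most $2^{D(f)}$ rectangles, so $C^D(f) \leq 2^{D(f)}$ by minimality). Your careful note about the direction of the inequality is a faithful expansion of the same one-line deduction, not a different route.
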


With this tool it is now easy to show a linear lower bound for $D(EQ)$. Observe that the input matrix for EQUALITY is a $2^n$ by $2^n$ identity matrix. No 1-monochromatic rectangle can contain more than one 1. Thus any monochromatic disjoint cover has $2^n$ 1-monochromatic rectangles and at least one 0-monochromatic rectangle. So $D(EQ) \geq \lceil \log_2 (2^n + 1) \rceil = n + 1$ as predicted.

Although every protocol for a function induces a monochromatic disjoint cover of the input matrix, simple examples show that the converse is not true. So if some of the monochromatic disjoint covers do not correspond to any protocol, how good can the disjoint cover method be? The next theorem states that the gap is not very large.

\begin{theorem}\label{halstenberg}
For a function $f$, we have $D(f) = O( \log_2^2 C^D(f) )$.
\end{theorem}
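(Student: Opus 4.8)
The plan is to turn a small monochromatic disjoint cover into a cheap deterministic protocol by a round-based search that repeatedly shrinks the set of candidate rectangles. First I would fix an optimal partition of $A^f$ into $t := C^D(f)$ disjoint monochromatic rectangles and observe that once Alice and Bob know the true input $(x,y)$ lies in a region meeting only \emph{one} rectangle of the cover, they know $f(x,y)$, since that rectangle is monochromatic. The protocol maintains a current rectangle $A\times B$ (with $A\subseteq X$, $B\subseteq Y$) that both players know and that is guaranteed to contain $(x,y)$; this is maintainable because each player updates $A$ resp.\ $B$ using only the bits exchanged so far. The potential we drive down is $k$, the number of cover rectangles meeting $A\times B$; initially $k\le t$, and we halt when $k=1$.

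The core of the argument is a halving step: using $O(\log t)$ bits in one phase, the players shrink $A\times B$ to a sub-rectangle still containing $(x,y)$ that meets at most a constant fraction of the rectangles (say $\tfrac34 k$). Granting such a step, after $O(\log t)$ phases the potential reaches $1$ and the value is determined, so the total cost is $O(\log t)\cdot O(\log t)=O(\log^2 t)=O(\log^2 C^D(f))$, which is the claim of Theorem~\ref{halstenberg}. The logarithmic cost per phase comes from each player naming, with $\lceil\log_2 t\rceil$ bits, the index of a cover rectangle (or the side of a balanced split); crucially each player can evaluate the split locally, since Alice's decision depends only on $x$ together with data (the cover, $A$, $B$) known to both, and symmetrically for Bob.

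The halving step itself is the main obstacle, and it is where the disjoint monochromatic structure must be exploited. The useful structural fact is that along any single row of $A\times B$ the cover rectangles meeting that row have pairwise disjoint column-projections: two disjoint rectangles $S_i\times T_i$ and $S_j\times T_j$ that share a row have $S_i\cap S_j\neq\emptyset$, hence $T_i\cap T_j=\emptyset$; and symmetrically along any column. These row- and column-partitions are what one hopes to use to split $A$ or $B$ so that the part containing $(x,y)$ meets only a constant fraction of the rectangles. The difficulty is that a naive balanced split of the rows or columns can cut through many rectangles at once, so both parts still meet nearly all $k$ rectangles and no progress is made. I would therefore look for a \emph{balanced separator}, for instance the four-way split of $A\times B$ induced by a well-chosen cover rectangle $R=S\times T$ (rows by $S$, columns by $T$): $R$ lies in only one of the four pieces, so the other three already lose a rectangle, and the goal is to choose $R$ so that \emph{whichever} piece $(x,y)$ lands in discards a constant fraction of the $k$ rectangles. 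Proving that such a separator always exists is the crux and the step I expect to require the most care.

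Finally, I would record the alternative accounting that makes the bound transparent. A disjoint monochromatic cover of size $t$ simultaneously yields a $1$-cover and a $0$-cover of the $1$- and $0$-entries of $A^f$, each of size at most $t$. Running the same search so as to drive down the number of live $1$-rectangles and the number of live $0$-rectangles separately gives a bound of the form $D(f)=O(\log C^0(f)\cdot \log C^1(f))$, from which the theorem follows since $C^0(f),\,C^1(f)\le C^D(f)$.
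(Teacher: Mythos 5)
Your proposal reproduces the outer shell of the paper's argument --- $O(\log t)$ phases of $O(\log t)$ bits each, driving down a potential by a constant factor per phase --- but the entire mathematical content of the theorem is the halving step, and you do not prove it: you write ``Granting such a step'' and later concede that ``proving that such a separator always exists is the crux.'' Worse, the specific separator you propose need not exist. Consider the partition of $A^f$ into all singleton cells $\{(i,j)\}$ (each trivially monochromatic): for any rectangle $R=S\times T$ of this partition, the quadrant $(X\setminus S)\times(Y\setminus T)$ still meets all but a vanishing fraction of the rectangles, so an input landing there discards essentially nothing, and no choice of $R$ makes \emph{every} quadrant lose a constant fraction. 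Nothing in your write-up invokes optimality of the cover in a way that would exclude such configurations (and optimality is not even preserved when the cover is restricted to a sub-box during the recursion). Your closing paragraph, the $O(\log C^0(f)\cdot\log C^1(f))$ accounting, does not rescue this, since it relies on ``running the same search,'' i.e., on the same unproven step.

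The paper's proof avoids the need for any balanced separator by making the potential \emph{asymmetric}: it tracks the set $\mathcal{C}$ of live $0$-monochromatic rectangles rather than all rectangles meeting a shrinking box. The structural fact you correctly identified (two disjoint rectangles sharing a row have disjoint column projections) is used in one-sided form: any $1$-monochromatic rectangle is row-disjoint or column-disjoint from each $0$-rectangle, hence intersects at most half of $\mathcal{C}$ in rows or at most half of $\mathcal{C}$ in columns. In each round, Alice searches for a $1$-rectangle containing her row that is row-disjoint from at least half of $\mathcal{C}$ and names it with $\log_2 C^D(f)$ bits; if she fails, Bob does the symmetric search by columns; if both fail, they may safely output $f(x,y)=0$, because if $f(x,y)=1$ then the $1$-rectangle of the partition containing $(x,y)$ would itself have been a valid candidate for one of the two players. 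Progress (halving of $|\mathcal{C}|$) is thus guaranteed whenever the protocol continues, with no existence claim about balanced splits; this is precisely the idea missing from your proposal.
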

\begin{proof}
For any function $f$, we present a protocol for it with complexity $O( \log_2^2 C^D(f) )$. The protocol consists of at most $\log_2 C^D(f)$ rounds and in each round at most $\log_2 C^D(f) + O(1)$ bits are communicated. The basic idea is as follows: \\
Alice and Bob agree upon an optimal disjoint monochromatic cover beforehand. They try to figure out whether $(x,y)$ lies in a 0-monochromatic rectangle or a 1-monochromatic rectangle. The protocol proceeds in rounds. If $f(x,y)=1$ then in each round they successfully eliminate at least half of the 0-monochromatic rectangles. At the end, all 0-monochromatic rectangles are eliminated and they conclude $f(x,y)=1$. If on the other hand $f(x,y)=0$, then in one of the rounds they are not able to eliminate at least half of the 0-monochromatic rectangles. At this point they conclude $f(x,y)=0$.\\
Before giving the details of a round, we make two crucial observations. The first observation implies the second one. The correctness of the protocol follows from the second observation. \\
\underline{Observation 1:} Suppose $R_0 = S_0 \times T_0$ is a 0-monochromatic rectangle and $R_1 = S_1 \times T_1$ is a 1-monochromatic rectangle. Then either $R_0$ and $R_1$ are disjoint in rows or they are disjoint in columns, i.e. either $S_0$ and $S_1$ are disjoint or $T_0$ and $T_1$ are disjoint.\\
\underline{Observation 2:} Let $\mathcal{C}$ be any collection of 0-monochromatic rectangles and $R_1$ any 1-monochromatic rectangle. Then either
\begin{itemize}
\item[-] $R_1$ intersects with at most half of the rectangles in $\mathcal{C}$ in rows or
\item[-] $R_1$ intersects with at most half of the rectangles in $\mathcal{C}$ in columns.
\end{itemize}
Otherwise there is at least one rectangle $R_0$ in $\mathcal{C}$ such that $R_0$ and $R_1$ intersect both in rows and columns. This contradicts the first observation.\\
Now we can describe how a round is carried out. Initially $\mathcal{C}$ contains all the 0-monochromatic rectangles.
\begin{itemize}
\item[A.] If $\mathcal{C} = \emptyset$ then Alice communicates to Bob that $f(x,y) = 1$ and the protocol ends. Otherwise, Alice tries to find a 1-monochromatic rectangle $R_1 = S_1 \times T_1$ such that $x \in S_1$ and $R_1$ intersects with at most half of the rectangles in $\mathcal{C}$ in rows. If such a rectangle exists, then Alice sends its name ($\log_2 C^D(f)$ bits) to Bob and they both update $\mathcal{C}$ so it contains all the rectangles that intersect with $R_1$ in rows (the other rectangles cannot contain $(x,y)$). At this point the round is over since they successfully eliminated at least half of the rectangles in $\mathcal{C}$. If Alice is unable find such a rectangle then she communicates this to Bob.
\item[B.] At this point we know Alice could not find a 1-monochromatic rectangle to end the round so Bob tries to end the round by finding a 1-monochromatic rectangle $R_1 = S_1 \times T_1$ such that $y \in T_1$ and $R_1$ intersects with at most half of the rectangles in $\mathcal{C}$ in columns. If he finds such a rectangle, he communicates its name to Alice and they both update $\mathcal{C}$ so it contains all the rectangles that intersect with $R_1$ in columns. After this point the round is over. If he cannot find such a rectangle this means both Alice and Bob failed and therefore he communicates to Alice that $f(x,y) = 0$ because by the second observation, he knows that there is no 1-monochromatic rectangle containing $(x,y)$.
\end{itemize}
\end{proof}

In most cases it is hard to exactly determine $C^D(f)$. So the natural next step is to find lower bounds on $C^D(f)$ which in turn gives lower bounds on $D(f)$. (This is actually what we did for the EQUALITY function.)

An obvious way of bounding (from below) the number of monochromatic rectangles needed in a monochromatic disjoint cover is to bound (from above) the size of every monochromatic rectangle. In other words, if every monochromatic rectangle in the input matrix has size less than or equal to $s$, then we need at least $2^{2n} / s$ monochromatic rectangles in a monochromatic disjoint cover of the matrix. Here `size' refers to the number of pairs $(x,y)$ in the rectangle and we can interpret this as a measure $\mu$. The above actually generalizes to any kind of measure.

\index{rectangle size method}
\begin{proposition}[Rectangle Size Method]\label{rectanglesize}
Let $\mu$ be a measure defined on the space $X \times Y$. If all monochromatic rectangles $R$ (with respect to $f$) are such that $\mu(R) \leq s$, then $D(f) \geq \log_2 ( \mu(X \times Y) / s )$.\\
In particular, if $\mu$ is a probability and every monochromatic rectangle $R$ satisfies $\mu(R) \leq \epsilon$, then $D(f) \geq \log_2 1 / \epsilon$.
\end{proposition}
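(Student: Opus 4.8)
The plan is to reduce the statement directly to the partitioning property already established in Lemma 2.1 (equivalently, to Corollary~\ref{disjointcovermethod}), and then to exploit additivity of the measure $\mu$ across a partition. First I would fix an optimal protocol $\mathcal{P}$ for $f$, so that $cost(\mathcal{P}) = D(f)$. By Lemma 2.1, this protocol partitions the input matrix $A^f$ into at most $2^{D(f)}$ monochromatic rectangles $R_1, \ldots, R_k$ with $k \leq 2^{D(f)}$. The point I want to emphasize is that these leaf rectangles are \emph{pairwise disjoint} and their union is all of $X \times Y$ — this is precisely what "partition" means in the lemma, and it is the property that makes the measure argument go through.

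Next I would invoke finite additivity of $\mu$. Since $\{R_1,\ldots,R_k\}$ is a partition of $X \times Y$ into disjoint sets, additivity gives $\mu(X \times Y) = \sum_{i=1}^{k} \mu(R_i)$. Each $R_i$ is monochromatic with respect to $f$, so the hypothesis of the proposition yields $\mu(R_i) \leq s$ for every $i$. Summing the bound over the (at most $2^{D(f)}$) rectangles gives $\mu(X \times Y) = \sum_{i} \mu(R_i) \leq k\,s \leq 2^{D(f)} s$. Rearranging to $2^{D(f)} \geq \mu(X \times Y)/s$ and taking base-$2$ logarithms produces $D(f) \geq \log_2\!\big(\mu(X \times Y)/s\big)$, which is exactly the claimed lower bound. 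The "in particular" clause then follows by specializing to a probability measure, where $\mu(X \times Y) = 1$ and one takes $s = \epsilon$, so that $D(f) \geq \log_2(1/\epsilon)$.

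The argument is short because Lemma 2.1 has already done the combinatorial work of converting an arbitrary protocol into a monochromatic partition. Consequently there is no substantive obstacle here; the only step that deserves care is the invocation of additivity, which relies on two facts that must be stated cleanly — that the $R_i$ genuinely form a disjoint cover (guaranteed by Lemma 2.1) and that $\mu$ is finitely additive over this finite family of rectangles (part of $\mu$ being a measure on $X \times Y$). Once those hypotheses are in place, the inequality $\mu(X\times Y) = \sum_i \mu(R_i) \le 2^{D(f)} s$ is immediate, and everything else is a one-line rearrangement.
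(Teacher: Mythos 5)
Your proof is correct and follows essentially the same route as the paper: the paper also derives the bound by noting that the disjoint monochromatic rectangles induced by a protocol (Lemma 2.1, equivalently Corollary~\ref{disjointcovermethod}) must number at least $\mu(X \times Y)/s$ by additivity of $\mu$, since each has measure at most $s$. The only cosmetic difference is that the paper phrases this as a lower bound on $C^D(f)$ and then applies the disjoint cover method, whereas you apply Lemma 2.1 to an optimal protocol directly; the underlying counting argument is identical.
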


\begin{example}
Let us see an application of the rectangle size method by proving a linear lower bound for the communication complexity of the \index{DISJOINTNESS} DISJOINTNESS function. We define DISJOINTNESS as
\[
DISJ(x,y) := \left\{
          \begin{array}{ll}
          1 & \quad \textrm{if $x \cap y = \emptyset$,}  \\
          0 & \quad \textrm{otherwise.} \\
          \end{array}
          \right.
\]
where $x$ and $y$ are viewed as subsets of $[n]$ ($x_i = 1$ if $x$ contains the element $i \in [n]$). We claim that any 1-monochromatic rectangle $R = S \times T$ has size at most $2^n$. It is easy to show that the number of $(x,y)$'s such that $x \cap y = \emptyset$ is $\sum_{j=0}^n {n \choose j} 2^{n-j} = 3^n$. Now if for all $x$ and $y$ that intersect we set $\mu(x,y) = 0$ and for all $x$ and $y$ that are disjoint we set $\mu(x,y) = 1$ then $\mu(X \times Y) = 3^n$ and the above claim together with Proposition \ref{rectanglesize} imply $D(DISJ) = \Omega(n)$.\\
{\it Proof of claim:} Suppose $|S| = k$ and $|\cup_{x \in S} x| = c$. Then clearly $k \leq 2^c$. Also $|T| \leq 2^{n-c}$ since every set in $T$ must be disjoint from every set in $S$. Thus the size of the rectangle is $|S| \cdot |T| \leq k 2^{n-c} \leq 2^c 2^{n-c} = 2^n$.
\end{example}

The last lower bound technique we look at in this section is the well-known fooling set technique. It is a direct consequence of the disjoint cover method and in fact it is a special case of Proposition \ref{rectanglesize}. First we make the formal definition of a fooling set.

\begin{definition}
A set $F \subseteq X \times Y$ is a \index{fooling set} \emph{fooling set} for $f$ if the following conditions are satisfied.
\begin{enumerate}
\item For all $(x,y) \in F$, $f(x,y) = z$ for some $z \in Z$.
\item For all distinct $(x_1,y_1),(x_2,y_2) \in F$ either $f(x_1,y_2)\neq z$ or $f(x_2,y_1)\neq z$.
\end{enumerate}
\end{definition}

By the definition of a fooling set, no two elements in $F$ can be in the same monochromatic rectangle. Therefore there must be at least $|F|$ many monochromatic rectangles in any monochromatic disjoint cover of the input matrix. So by Corollary \ref{disjointcovermethod} we get the following fact.

\index{fooling set method}
\begin{lemma}[Fooling Set Method]\label{foolingset}
If $F$ is a fooling set for $f$ then $D(f) \geq \log_2 |F|$.
\end{lemma}

To see that the fooling set method is indeed a special case of Proposition \ref{rectanglesize}, for a fooling set $F$, let $\mu(x,y) = c > 0$ for every $(x,y) \in F$ and for every $(x,y) \notin F$ set $\mu(x,y) = 0$. Then any monochromatic rectangle $R$ satisfies $\mu(R) \leq c$ and therefore \[ D(f) \geq \log_2 ( \mu(X \times Y) / c ) = \log_2 ( c|F| / c ) = \log_2 |F|. \]

\begin{example}
Define the \index{LESS-THAN} LESS-THAN function as
\[
LT(x,y) := \left\{
          \begin{array}{ll}
          1 & \quad \textrm{if $x \leq y$,}  \\
          0 & \quad \textrm{otherwise.} \\
          \end{array}
          \right.
\]
where $x$ and $y$ are viewed as binary numbers. We can show that $LT$ has linear deterministic communication complexity by the fooling set technique. Let $F = \{(x,x) : x \in \{0,1\}^n\}$. It is easy to see that $F$ is a fooling set. Clearly $|F|=2^n$ and this proves our claim. In fact $F$ is also a fooling set for the EQUALITY function.
\end{example}

From our discussion in this section, it is clear that we can exploit the nice combinatorial structure of protocols to prove tight lower bounds for explicit functions. In the next section, we see that most of the techniques seen in this section can be applied to the non-deterministic model as well.

\section{Non-Deterministic Model}

\subsection{Definition}

The definition of the non-deterministic communication model is analogous to the non-deterministic model in the Turing Machine world. There are several ways of defining non-determinism, all of which are equivalent. Here we will present the one that best suits our needs.

Intuitively, non-determinism can be viewed as a certificate verification process\footnote{Equivalently one can view it as a communication game in which the players are allowed to take non-deterministic steps.}: A third player (referred to as God) gives a proof (bit string) that $f(x,y)=z$ to both Alice and Bob. If indeed $f(x,y)=z$, then Alice and Bob must be able to convince themselves that this is the case by communicating with each other. If on the other hand $f(x,y)\neq z$, then the verification process should fail and Alice and Bob should be able to conclude that the proof was wrong. We consider the bits sent by God as a part of the communicated bits.

More formally, in the non-deterministic setting, Alice and Bob communicate according to a non-deterministic protocol $\mathcal{P}^z$. This protocol differs from the deterministic one as follows. $\mathcal{P}^z$ takes three inputs, $x,y$ and $s$, where $x$ and $y$ are perceived as the inputs for Alice and Bob respectively, and $s$ is some bit string which we think of as the ``proof string". $\mathcal{P}^z$ specifies in each step the value of the next bit communicated as a function of the input of the player who sends it, the sequence of previously communicated bits as well as $s$. It also determines who will send the next bit as a function of the communicated bits thus far. So it differs from a deterministic protocol because what a player sends also depends on the string $s$. We will denote the output of the protocol by $\mathcal{P}^z(x,y,s)$.

We say that $\mathcal{P}^z$ is a non-deterministic protocol for $f$ if for all $(x,y)$ such that $f(x,y) = z$, there exists a string $s$ such that $\mathcal{P}^z(x,y,s) = z$, and for all $(x,y)$ such that $f(x,y) \neq z$ we have $\mathcal{P}^z(x,y,s) \neq z$ for any $s$.

The the cost of $\mathcal{P}^z$ is defined as
\[ cost(\mathcal{P}^z) := \max_{\begin{subarray}{c}
(x,y): \\
f(x,y)=z
\end{subarray}} \min_{\begin{subarray}{c}
s: \\
\mathcal{P}^z(x,y,s)=z
\end{subarray}} |s| + \textrm{ no of communicated bits for $(x,y,s)$.}
\]

We define the \index{communication complexity ! non-deterministic} \emph{non-deterministic communication complexity} of $f$ as
\[ N^1(f) := \min_{\mathcal{P}^1 \textrm{ non-deterministic protocol for $f$}} cost(\mathcal{P}^1) 
\]
The co-non-deterministic communication complexity of $f$ is defined similarly and is denoted by $N^0(f)$.

\begin{example}\label{DISJnondeterministic}
Let us show $N^0(DISJ) = O(\log_2 n)$ by exhibiting a proof and a verification protocol. God can prove $DISJ(x,y) = 0$ by telling Alice and Bob the index $i$ in which $x$ and $y$ intersect. This proof is $O(\log_2 n)$ bits long and Alice and Bob can convince themselves that the proof is correct by exchanging the bits $x_i$ and $y_i$. If $DISJ(x,y)=1$, then given any index as a proof, Alice and Bob can detect that the proof is wrong. (Any other kind of proof is considered as a wrong proof.)
\end{example}

Unlike the deterministic communication complexity, we can get an exact characterization of non-deterministic communication complexity in terms of monochromatic rectangles. We denote by $C^z(f)$ the minimum number of $z$-monochromatic rectangles in any \index{monochromatic cover} monochromatic cover of the $z$-inputs of $f$ (observe that here we dropped the word ``disjoint" since we allow the rectangles to intersect). This quantity exactly determines $N^z(f)$.

\begin{figure}
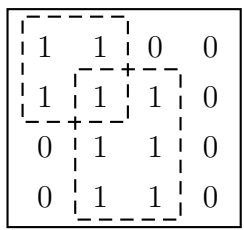

\[
\begin{psmatrix}%
[rowsep=0pt, colsep=15pt]
1 & 1 & 0 & 0 \\
1 & 1 & 1 & 0 \\
0 & 1 & 1 & 0 \\
0 & 1 & 1 & 0
\end{psmatrix}
\pspolygon(-2.8,2.5)(0.3,2.5)(0.3,-0.4)(-2.8,-0.4)
\pspolygon[linestyle=dashed]%
(-2.6,2.4)(-1.2,2.4)(-1.2,1)(-2.6,1)
\pspolygon[linestyle=dashed]%
(-1.9,1.7)(-0.5,1.7)(-0.5,-0.3)(-1.9,-0.3)
\]
\caption{An example of a monochromatic cover of the 1-inputs.}
\end{figure}

\begin{proposition}
$\log_2 C^z(f) \leq N^z(f) \leq \log_2 C^z(f) + 2$.
\end{proposition}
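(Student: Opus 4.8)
The plan is to prove the two inequalities separately, treating $N^z(f)$ and $C^z(f)$ as two sides of the same combinatorial object: a monochromatic cover of the $z$-inputs.

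\textbf{Lower bound ($\log_2 C^z(f) \leq N^z(f)$).} First I would show that any non-deterministic protocol $\mathcal{P}^z$ induces a monochromatic cover of the $z$-inputs whose size is controlled by $cost(\mathcal{P}^z)$. The idea is to fix a proof string $s$ and observe that, with $s$ held constant, $\mathcal{P}^z$ behaves exactly like a deterministic protocol; hence by the reasoning behind Lemma~1.\ref{} (the protocol partitioning tree) the set of accepted inputs $\{(x,y) : \mathcal{P}^z(x,y,s) = z\}$ decomposes into monochromatic rectangles, all of which are $z$-monochromatic by the definition of a protocol for $f$ (no $(x,y)$ with $f(x,y)\neq z$ is ever accepted). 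For each $z$-input $(x,y)$ there exists, by definition, at least one $s$ with $\mathcal{P}^z(x,y,s) = z$, so the union of these rectangles over all accepting leaves and all $s$ covers precisely the $z$-inputs. Counting: if $cost(\mathcal{P}^z) = c$, then for a fixed $(x,y)$ the relevant $s$ has $|s| + (\text{communicated bits}) \leq c$, so the total number of (proof, leaf) pairs — and hence monochromatic rectangles — is at most $2^c$. This gives $C^z(f) \leq 2^{N^z(f)}$, i.e. $\log_2 C^z(f) \leq N^z(f)$.

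\textbf{Upper bound ($N^z(f) \leq \log_2 C^z(f) + 2$).} Conversely I would build a non-deterministic protocol from an optimal cover. Let $R_1,\dots,R_m$ be a monochromatic cover of the $z$-inputs with $m = C^z(f)$ rectangles. Alice and Bob agree on this cover beforehand. God's proof is simply the name (index) of a rectangle, which costs $\lceil \log_2 m \rceil$ bits. Given a proof $i$, each player checks membership in $R_i = S_i \times T_i$ locally: Alice verifies $x \in S_i$ and Bob verifies $y \in T_i$, each sending one bit to report success. The protocol accepts iff both report membership. Correctness: if $f(x,y) = z$, then $(x,y)$ lies in some $R_i$ (cover property), so God names that $i$ and both checks pass; if $f(x,y) \neq z$, no rectangle of the cover contains $(x,y)$ (the rectangles are $z$-monochromatic), so every proof fails at least one check. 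The total cost is $\lceil \log_2 m \rceil + 2 \leq \log_2 C^z(f) + 2$.

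\textbf{Main obstacle.} The genuinely subtle point is the lower bound: I must argue carefully that fixing the proof string $s$ really does reduce $\mathcal{P}^z$ to a deterministic protocol and that no non-$z$ input sneaks into any accepting rectangle — this is exactly where the defining property of a non-deterministic protocol for $f$ (namely $\mathcal{P}^z(x,y,s) \neq z$ whenever $f(x,y) \neq z$, for \emph{all} $s$) does the work, ensuring every accepting rectangle is $z$-monochromatic rather than merely a rectangle. The counting bound also needs mild care, since the $2^c$ comes from bounding the combined length of the proof and the transcript together; I expect this to be the place where the additive constant and the absence of a ``$+2$'' on the lower side must be justified honestly rather than waved through.
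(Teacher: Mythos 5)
Your proposal is correct and follows essentially the same argument as the paper: for the lower bound, fixing the proof string and using the protocol partitioning tree so that accepting transcripts yield $z$-monochromatic rectangles covering the $z$-inputs; for the upper bound, having God name a rectangle of an optimal cover, which Alice and Bob each verify with one bit. Your version even makes explicit (where the paper is terser) why accepting rectangles are $z$-monochromatic, and your glossing of the counting details (the ceiling in the rectangle name, prefix-freeness of proof-plus-transcript strings) is at the same level of rigor as the paper's own proof.
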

\begin{proof}
\hfill
\begin{itemize}
\item $\log_2 C^z(f) \leq N^z(f)$\\
We have seen in the deterministic case that a certain communication pattern corresponds to a certain monochromatic rectangle. This situation is not much different in the non-deterministic model. In this case, what Alice and Bob send in each step also depends on the proof bits. So for every fixed proof string, there corresponds a protocol partitioning tree as in Figure \ref{protocoltree}.

Now observe that for this particular proof, every communication pattern that convinces Alice and Bob leads to a $z$-monochromatic rectangle. (Other communication patterns may not lead to a monochromatic rectangle since the proof we fixed may not be a proof for all $(x,y)$ with $f(x,y)=z$.) So each convincing communication pattern (including the proof) corresponds to a $z$-monochromatic rectangle. Since for every $(x,y)$ such that $f(x,y)=z$ there must be a proof that convinces Alice and Bob, all the convincing communication patterns together correspond to a covering of the $z$-inputs. Here the rectangles are allowed to intersect since for some $(x,y)$ with $f(x,y)=z$, there might be more than one proof that leads Alice and Bob to be convinced. There are at most $2^{N^z(f)}$ communication patterns and therefore $C^z(f) \leq 2^{N^z(f)}$.

\item $N^z(f) \leq \log_2 C^z(f) + 2$\\
Fix any optimal monochromatic cover of the $z$-inputs. If God sends Alice and Bob the name of a monochromatic rectangle $R = S \times T$ that $(x,y)$ lies in, then Alice can check that $x \in S$ and if so, she can send 1 to Bob. Bob can similarly check if $y \in T$ and send 1 to Alice if this is the case.
\end{itemize}
\end{proof}

\subsection{Power of Non-Determinism}

A natural question that arises in this context is: how much power does non-determinism give? Non-determinism in the finite automaton computational model does not give extra power with respect to the class of languages recognized. In the Turing Machine model, it is not known whether non-determinism provides significantly more power. In the communication complexity model we can answer this question and prove that non-determinism is strictly more powerful. First we observe that the gap between determinism and non-determinism cannot be more than exponential.

\begin{proposition}
For any $z \in \{0,1\}$, $D(f) \leq C^z(f) + 1$.
\end{proposition}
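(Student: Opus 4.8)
The plan is to construct an explicit deterministic protocol for $f$ whose cost is at most $C^z(f)+1$. First I would have Alice and Bob agree in advance on a fixed optimal monochromatic cover of the $z$-inputs, say $R_1,\dots,R_m$ with $m=C^z(f)$ and each $R_i = S_i\times T_i$ a $z$-monochromatic rectangle. The guiding idea is the same dichotomy exploited in Example~\ref{DISJnondeterministic}: by the covering property, a pair $(x,y)$ satisfies $f(x,y)=z$ \emph{if and only if} $(x,y)\in R_i$ for some $i$, so deciding membership in the cover is equivalent to computing $f$.

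Next I would describe the communication. Alice, who holds $x$, can locally determine for each $i$ whether $x\in S_i$. She sends these $m$ bits to Bob, one per rectangle. Bob, who holds $y$, now knows the set $I=\{\, i : x\in S_i\,\}$, and he can locally check for each $i\in I$ whether $y\in T_i$. He then tests whether there exists $i\in I$ with $y\in T_i$: if so, $(x,y)\in R_i$ and hence $f(x,y)=z$; if not, $(x,y)$ lies in no rectangle of the cover and hence $f(x,y)\neq z$. In either case Bob has determined $f(x,y)$, so he sends this single bit back to Alice. The total number of communicated bits is $m+1 = C^z(f)+1$, giving $D(f)\le C^z(f)+1$.

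The only point I expect to need care is the equivalence ``$f(x,y)=z \iff (x,y)$ is covered'': the forward direction is exactly the covering hypothesis, and the reverse direction uses that every $R_i$ is $z$-monochromatic, so any covered pair evaluates to $z$. There is no real obstacle beyond this — the content is entirely in observing that membership of $(x,y)$ in a rectangle $R_i=S_i\times T_i$ factors as the conjunction of the row condition $x\in S_i$ (checkable by Alice alone) and the column condition $y\in T_i$ (checkable by Bob alone), which is precisely the defining property of a rectangle. This factorization is what makes a single bit per rectangle suffice and yields the claimed bound.
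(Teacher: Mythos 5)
Your proof is correct and is essentially the same as the paper's: Alice sends one bit per rectangle of the optimal cover indicating whether $x$ lies in its row set ($C^z(f)$ bits), and Bob then determines whether $(x,y)$ is covered and sends back the answer. The paper states this more tersely, but the protocol and the cost accounting are identical to yours.
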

\begin{proof}
Alice and Bob agree on an optimal cover of the $z$-inputs. Alice communicates to Bob the $z$-monochromatic rectangles that $x$ lies in (this requires $C^z(f)$ bits of communication). Bob, with this information, can determine if there is a $z$-monochromatic rectangle that $(x,y)$ lies in and send the answer to Alice.
\end{proof}

The above is actually tight. For example the EQUALITY function satisfies $D(EQ) = n+1$ and $N^0(EQ) \leq \log_2 n + 2$ (similar protocol to the one in Example \ref{DISJnondeterministic}).

Can it be the case that both $N^0(f)$ and $N^1(f)$ are exponentially smaller than $D(f)$? The answer to this question is given by the next theorem.

\begin{theorem}\label{detvsnondet}
For every function $f:X \times Y \to \{0,1\}$,
\[ D(f) = O( N^0(f) N^1(f) ). \]
\end{theorem}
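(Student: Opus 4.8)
The plan is to mimic the divide-and-conquer protocol of Theorem \ref{halstenberg}, but now driving it with the two optimal non-deterministic covers at once: the cover of the $0$-inputs will bound the number of rounds, while the cover of the $1$-inputs will bound the cost of each round. Fix an optimal cover $\mathcal{C}^0$ of the $0$-inputs by $C^0(f)$ zero-monochromatic rectangles and an optimal cover $\mathcal{C}^1$ of the $1$-inputs by $C^1(f)$ one-monochromatic rectangles, on which Alice and Bob agree in advance. By the Proposition relating $N^z(f)$ to $C^z(f)$ we have $\log_2 C^0(f) \le N^0(f)$ and $\log_2 C^1(f) \le N^1(f)$, so it suffices to build a deterministic protocol of cost $O(\log_2 C^0(f) \cdot \log_2 C^1(f))$.

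The protocol maintains a collection $\mathcal{T} \subseteq \mathcal{C}^0$ of surviving $0$-rectangles, initialized to all of $\mathcal{C}^0$, under the invariant that $\mathcal{T}$ still contains every rectangle of $\mathcal{C}^0$ that contains $(x,y)$. Each round tries either to halve $|\mathcal{T}|$ or to decide the output. First Alice searches for a one-rectangle $R = A \times B \in \mathcal{C}^1$ with $x \in A$ that is row-disjoint from at least half the rectangles in $\mathcal{T}$; if she finds one she sends its name ($O(\log_2 C^1(f))$ bits) and both players delete from $\mathcal{T}$ every $S$ that is row-disjoint from $R$. If Alice fails she signals this and Bob performs the symmetric search for an $R$ with $y \in B$ that is column-disjoint from at least half of $\mathcal{T}$, deleting the column-disjoint rectangles if successful. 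If both fail, they output $0$; and if at the start of a round $\mathcal{T} = \emptyset$, they output $1$.

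Two facts drive correctness, and establishing them is the crux. \emph{Safety}: whenever Alice deletes $S$ because it is row-disjoint from her chosen $R$ with $x \in A$, we have $x \notin \text{rows}(S)$, hence $(x,y) \notin S$, so the invariant is preserved (symmetrically for Bob); consequently if $\mathcal{T}$ ever empties, no rectangle of $\mathcal{C}^0$ can contain $(x,y)$, forcing $f(x,y) = 1$. \emph{Progress}: if $f(x,y) = 1$ then some $R^* = A^* \times B^* \in \mathcal{C}^1$ contains $(x,y)$, and by Observation 1 in the proof of Theorem \ref{halstenberg} every $S \in \mathcal{T}$ is row-disjoint or column-disjoint from $R^*$; hence at least half of $\mathcal{T}$ is row-disjoint from $R^*$ or at least half is column-disjoint from $R^*$. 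Since $x \in A^*$ and $y \in B^*$, the witness $R^*$ is available to Alice in the first case and to Bob in the second, so the two players cannot both fail. Thus whenever they do output $0$, indeed $f(x,y) = 0$.

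Finally I would count the cost. Each successful round halves $|\mathcal{T}|$, which starts at $C^0(f)$, so there are at most $\log_2 C^0(f) + 1$ successful rounds, and any unsuccessful round ends the protocol at once. Each round costs $O(\log_2 C^1(f))$ bits, namely a rectangle name plus a constant number of control bits, giving total cost $O(\log_2 C^0(f) \cdot \log_2 C^1(f)) = O(N^0(f) N^1(f))$. The main obstacle is pinning down the round invariant so that the deletions are simultaneously safe (never discarding a $0$-rectangle through $(x,y)$) and productive (removing at least half of $\mathcal{T}$); this is exactly the point where the row/column disjointness of a $0$-rectangle and a $1$-rectangle is indispensable.
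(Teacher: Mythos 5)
Your proposal is correct and is exactly the argument the paper intends: the paper proves Theorem \ref{detvsnondet} by invoking the proof of Theorem \ref{halstenberg}, and your protocol is precisely that round-elimination argument adapted so that the $0$-cover (of size $C^0(f) \leq 2^{N^0(f)}$) bounds the number of rounds while naming rectangles from the $1$-cover (of size $C^1(f) \leq 2^{N^1(f)}$) bounds the cost per round. The safety/progress analysis via the row/column disjointness observation matches the paper's Observations 1 and 2, so there is nothing to add.
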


The proof of this theorem is the same as the proof of Theorem \ref{halstenberg}. It was shown in \cite{Fur87} that this bound is tight.

Observe that there are two reasons why non-determinism is more powerful than determinism:
\begin{enumerate}
\item non-determinism is one sided in the sense that we only need to cover the $z$-inputs,
\item the $z$-monochromatic rectangles in the cover are allowed to overlap.
\end{enumerate}
From our discussion above it should be clear that the ultimate power comes from the first point. In the EQUALITY example we see that it is ``easy" to cover the 0-inputs in the sense that we do not need exponentially many 0-monochromatic rectangles to cover the 0-inputs. The hardness lies in covering the 1-inputs. The exponential gap is a product of this fact. The power of a cover against a disjoint cover is only quadratic as implied by Theorem \ref{detvsnondet}.

\subsection{Lower Bound Techniques}

In the deterministic model, we saw the rectangle size method as a lower bound technique. It is clear that the same approach gives a lower bound for the non-deterministic communication complexity as well. If every $z$-monochromatic rectangle has size less than or equal to $s$ and there are $k$ $z$-inputs, then we need at least $k/s$ many rectangles to cover these inputs. The non-deterministic version of Proposition \ref{rectanglesize} is as follows.

\begin{proposition}
Let $K \subseteq X \times Y$ be the set of all $z$-inputs and let $\mu$ be a measure defined on the space $K$. If all $z$-monochromatic rectangles $R$ satisfy $\mu(R) \leq s$, then $N^z(f) \geq \log_2(\mu(K)/s)$.
\end{proposition}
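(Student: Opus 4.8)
The plan is to mirror the argument used in the deterministic rectangle size method (Proposition \ref{rectanglesize}), adapting it to the non-deterministic setting where we only cover the $z$-inputs and allow the covering rectangles to overlap. The key observation, already established in the proof of the characterization $\log_2 C^z(f) \leq N^z(f) \leq \log_2 C^z(f) + 2$, is that $N^z(f) \geq \log_2 C^z(f)$. So it suffices to bound $C^z(f)$ from below, i.e. to show that any monochromatic cover of the $z$-inputs must use at least $\mu(K)/s$ rectangles.

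The main step is a subadditivity argument on the measure $\mu$. Let $R_1, R_2, \ldots, R_m$ be any collection of $z$-monochromatic rectangles that covers $K$, so $K = \bigcup_{i=1}^m R_i$ (each $R_i \subseteq K$ since the cover consists of $z$-monochromatic rectangles and $K$ is exactly the set of $z$-inputs). By monotonicity and finite subadditivity of the measure, I would write
\[
\mu(K) = \mu\Big( \bigcup_{i=1}^m R_i \Big) \leq \sum_{i=1}^m \mu(R_i).
\]
Now apply the hypothesis $\mu(R_i) \leq s$ to each term, obtaining $\mu(K) \leq m \cdot s$, and hence $m \geq \mu(K)/s$. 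Since this holds for every monochromatic cover, it holds in particular for an optimal one, giving $C^z(f) \geq \mu(K)/s$.

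Finally, combining this with $N^z(f) \geq \log_2 C^z(f)$ and using monotonicity of the logarithm yields
\[
N^z(f) \geq \log_2 C^z(f) \geq \log_2 \big( \mu(K)/s \big),
\]
which is the desired bound. I do not anticipate a serious obstacle here: the only subtlety is that in the non-deterministic model the covering rectangles may overlap, so one cannot use exact additivity as in the disjoint deterministic cover; the argument must rely on subadditivity $\mu(\bigcup R_i) \leq \sum \mu(R_i)$ rather than on equality. This is precisely why the inequality direction works and the bound is stated with $C^z$ (an overlapping cover) rather than $C^D$. The subadditivity is automatic for any measure $\mu$, so the proof goes through cleanly.
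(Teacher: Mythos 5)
Your proof is correct and follows essentially the same route as the paper: the paper's (implicit) argument is exactly that each $z$-monochromatic rectangle in a cover of $K$ has measure at most $s$, so any cover needs at least $\mu(K)/s$ rectangles, and then one applies the bound $N^z(f) \geq \log_2 C^z(f)$ from the covering characterization of non-deterministic complexity. Your explicit appeal to subadditivity of $\mu$ (rather than additivity, which fails for overlapping rectangles) is precisely the right justification for the counting step the paper leaves informal.
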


It can be shown that the rectangle size method in the non-deterministic case is almost tight. Suppose we choose the best possible measure $\mu$ (i.e. the one that gives the best bound) and the maximum size (with respect to $\mu$) of a $z$-monochromatic rectangle is $s$. Then we have:

\begin{theorem}[see \cite{KN97}]
$N^z(f) \leq \log_2(\mu(K)/s) + \log_2 n + O(1)$.
\end{theorem}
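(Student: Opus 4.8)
The plan is to reduce the statement to a bound on the covering number $C^z(f)$ and then to control the gap between this integral covering number and its fractional relaxation. By the proposition characterizing $N^z(f)$ (namely $N^z(f) \le \log_2 C^z(f) + 2$), it suffices to prove that for the best measure $\mu$ one has $C^z(f) \le (\mu(K)/s)\cdot O(n)$, since taking $\log_2$ of both sides then yields $N^z(f) \le \log_2(\mu(K)/s) + \log_2 n + O(1)$.

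First I would pin down what ``best possible measure'' means quantitatively. Let $\mathcal{R}$ denote the collection of all $z$-monochromatic rectangles. Consider the fractional covering linear program: minimise $\sum_{R \in \mathcal{R}} w_R$ over $w_R \ge 0$ subject to $\sum_{R \ni (x,y)} w_R \ge 1$ for every $z$-input $(x,y) \in K$; call its optimum $\tau^*$. Its LP dual is: maximise $\sum_{(x,y) \in K}\mu(x,y)$ over measures $\mu \ge 0$ subject to $\mu(R) \le 1$ for every $R \in \mathcal{R}$. Normalising any measure so that $s = \max_{R} \mu(R) = 1$ shows that $\mu(K)/s$ is precisely the dual objective, so the \emph{best} measure attains $\mu(K)/s = \tau^*$ by strong LP duality (these are finite programs).

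Next I would bound the integral covering number by the fractional one via the standard greedy/averaging argument. Run the greedy cover: repeatedly add the rectangle covering the largest number of still-uncovered $z$-inputs. If $U \subseteq K$ is the current uncovered set, then testing the optimal fractional cover against $U$ gives $\sum_{R} w_R\,|R \cap U| = \sum_{(x,y)\in U}\sum_{R \ni (x,y)} w_R \ge |U|$ while $\sum_R w_R = \tau^*$, so by averaging some rectangle satisfies $|R \cap U| \ge |U|/\tau^*$. Hence each greedy step shrinks $|U|$ by a factor $(1 - 1/\tau^*)$, and after $t$ steps $|U| \le |K|\,e^{-t/\tau^*}$; this drops below $1$ once $t > \tau^* \ln |K|$, giving $C^z(f) \le \tau^*(1 + \ln|K|)$. (Every singleton $z$-input is itself a $z$-monochromatic rectangle, so the process can always cover any remaining point.)

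Finally I would assemble the pieces. Since $K \subseteq \{0,1\}^n \times \{0,1\}^n$ we have $|K| \le 2^{2n}$, so $1 + \ln|K| \le 1 + 2n\ln 2 = O(n)$ and therefore $\log_2(1 + \ln|K|) = \log_2 n + O(1)$. Combining $C^z(f) \le \tau^*(1+\ln|K|)$ with $\tau^* = \mu(K)/s$ and the reduction of the first paragraph yields the claimed bound. I expect the main obstacle to be the second and third steps taken together: correctly identifying $\mu(K)/s$ for the optimal measure as the fractional cover number through LP duality, and then proving the logarithmic integrality gap $C^z(f) \le \tau^*(1+\ln|K|)$ by the averaging argument, which is where essentially all the quantitative content resides.
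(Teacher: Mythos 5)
The paper itself gives no proof of this theorem---it is stated as a citation to Kushilevitz and Nisan---so there is no internal argument to compare against; your proposal supplies the missing proof, and it is correct. Moreover it is essentially the standard argument behind the cited result: (1) reduce to bounding the cover number via $N^z(f) \le \log_2 C^z(f) + 2$; (2) observe that the rectangle-size bound $\mu(K)/s$, after normalizing $s=1$, is exactly the objective of the LP dual of the fractional covering program, so the best measure attains the fractional cover number $\tau^*$ (strong duality applies since both programs are finite, and the primal is feasible because singleton $z$-inputs are themselves $z$-monochromatic rectangles); (3) bound the integrality gap by the greedy/averaging argument, $C^z(f) \le \tau^*(1+\ln|K|)$, using $\tau^* \ge 1$; and (4) use $|K| \le 2^{2n}$ to get $\log_2(1+\ln|K|) = \log_2 n + O(1)$. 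All four steps check out: the averaging inequality $\max_R |R \cap U| \ge |U|/\tau^*$ follows correctly from testing the optimal fractional cover against the uncovered set, and the geometric decay gives termination after $\tau^*\ln|K|+1$ steps. The only cosmetic caveat is that dual feasibility should formally read $\mu(R \cap K) \le 1$, but since every $z$-monochromatic rectangle of a total function lies inside $K$ (and $\mu$ is supported on $K$ in any case), this coincides with your constraint $\mu(R) \le 1$.
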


There are examples that show that we cannot do better than this.

The fact that we can use the rectangle size method here implies that we can also use the fooling set method. However, as the next proposition shows, the quality of the fooling set method is questionable.

\begin{proposition}[see \cite{KN97}]
Almost all functions $f: \{0,1\}^n \times \{0,1\}^n \to \{0,1\}$ satisfy $N^1(f) = \Omega(n)$ but the size of their largest fooling set is $O(n)$.
\end{proposition}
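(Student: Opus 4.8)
The plan is to argue by counting, treating $f$ as a uniformly random Boolean matrix and showing that each of the two ``bad'' events---having small $N^1(f)$, or having a large $1$-fooling set---has probability tending to $0$ as $n \to \infty$. Throughout I identify $f$ with a $2^n \times 2^n$ matrix whose $2^{2n}$ entries are independent fair coins, so there are $2^{2^{2n}}$ functions in total.

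For the lower bound $N^1(f) = \Omega(n)$, I would use $N^1(f) \ge \log_2 C^1(f)$ and count how many functions admit a small cover of their $1$-inputs. A rectangle $S \times T$ is determined by $S, T \subseteq \{0,1\}^n$, so there are $(2^{2^n})^2 = 2^{2^{n+1}}$ rectangles. If $N^1(f) \le c$ then $C^1(f) \le 2^c$, so the set of $1$-inputs is a union of at most $2^c$ rectangles; choosing these rectangles (with repetition) shows the number of such functions is at most $(2^{2^{n+1}})^{2^c} = 2^{2^{\,n+1+c}}$. Dividing by $2^{2^{2n}}$, the fraction is $2^{-(2^{2n} - 2^{\,n+1+c})}$, which tends to $0$ whenever $n + 1 + c < 2n$, i.e. whenever $c \le n - 2$. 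Hence all but a vanishing fraction of functions satisfy $N^1(f) \ge n - 1 = \Omega(n)$.

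For the fooling-set upper bound I would first observe that a $1$-fooling set $F$ must use distinct rows and distinct columns: if two of its pairs shared a row (or a column) then both cross-entries between them would equal $1$, violating condition 2 of the fooling-set definition. Thus a fooling set of size $m$ is a generalized diagonal, specified by an ordered choice of $m$ distinct rows and $m$ distinct columns ($\le 2^{2nm}$ configurations). For a fixed configuration and random $f$, the probability it is a $1$-fooling set is the probability that all $m$ diagonal entries equal $1$, namely $2^{-m}$, times the probability that for each of the $\binom{m}{2}$ pairs the two symmetric off-diagonal entries are not both $1$, namely $(3/4)^{\binom{m}{2}}$; since the rows and columns are distinct, all the matrix entries involved are distinct, so these events are independent and the probabilities multiply. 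The expected number of $1$-fooling sets of size $m$ is therefore at most $2^{2nm} \cdot 2^{-m} \cdot (3/4)^{\binom{m}{2}}$, whose base-$2$ logarithm is $2nm - m - \alpha\binom{m}{2}$ with $\alpha = 2 - \log_2 3 > 0$. Taking $m = Cn$, the coefficient $\alpha C^2/2$ of $n^2$ coming from $-\alpha\binom{m}{2}$ exceeds the coefficient $2C$ coming from $2nm$ once $C$ is a large enough constant, so the whole expression is $-\Theta(n^2)$ and the expectation is $2^{-\Omega(n^2)}$. By Markov's inequality the probability that a random $f$ has any fooling set of size $\ge m = O(n)$ then tends to $0$, using that every subset of a fooling set is again a fooling set (so ruling out size exactly $m$ rules out all larger sizes).

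Combining the two estimates by a union bound, almost every $f$ simultaneously satisfies $N^1(f) = \Omega(n)$ and has largest fooling set of size $O(n)$, which is the claimed gap. The main obstacle I expect is the second part: one must carefully justify the independence of the off-diagonal events (which rests on the distinctness of rows and columns forced by the definition) and choose the constant $C$ so that the $-\Theta(n^2)$ contribution genuinely swamps the $+\Theta(n^2)$ counting term---getting this balance right is exactly what pins the fooling set down to $O(n)$ rather than merely $o(2^n)$.
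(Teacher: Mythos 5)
The paper itself offers no proof of this proposition---it is stated with a pointer to \cite{KN97} and never argued---so there is nothing internal to compare against; your proof stands on its own, and it is correct. Both halves check out: the counting bound for the first half is sound (the $1$-set of a function with $C^1(f)\le 2^c$ \emph{equals} the union of its cover rectangles, so such functions number at most $2^{2^{n+1+c}}$, giving a vanishing fraction once $c\le n-2$), and the second half correctly exploits the fact that condition 2 of the fooling-set definition forces distinct rows and distinct columns, which is exactly what makes the $2^{-m}(3/4)^{\binom{m}{2}}$ factorization legitimate: the diagonal cells and the $\binom{m}{2}$ off-diagonal cell-pairs are pairwise disjoint sets of independent coins. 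The balance $\alpha C^2/2 > 2C$ with $\alpha = 2-\log_2 3$ indeed pins the fooling set to $O(n)$, and your observation that a size-$m$ fooling set contains one of every smaller size correctly converts the bound at $m=\lceil Cn\rceil$ into a bound on the maximum. One small omission: the paper's definition of a fooling set quantifies over $z\in Z$, so ``largest fooling set'' also covers $0$-fooling sets ($z=0$), which you never address. The fix is trivial---the identical computation with the roles of $0$ and $1$ swapped gives the same $2^{-m}(3/4)^{\binom{m}{2}}$ bound, and one extra term in the final union bound absorbs it---but you should say so explicitly, since as written your argument only bounds $1$-fooling sets.
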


We finish off this section by looking at the non-deterministic communication complexity of the \index{PROMISE-DISJOINTNESS} PROMISE-DISJOINTNESS function. This function is defined the same way as the DISJOINTNESS function but the input space is different: it is the union of the following two sets $A$ and $B$.
\[ A := \{(x,y) \in \{0,1\}^n \times \{0,1\}^n : x \cap y = \emptyset \}. \]
\[ B := \{(x,y) \in \{0,1\}^n \times \{0,1\}^n : |x \cap y| = 1 \}. \]
In other words Alice and Bob are promised that if they get an input that intersects, then the size of the intersection is exactly 1.

Now we show that the PROMISE-DISJOINTNESS ($PDISJ$) function has linear non-deterministic complexity. This fact is used in Chapter \ref{complexityofregularlanguages} to prove linear lower bounds for the complexity of certain regular languages. To show the linear lower bound, we use a result that implies a linear lower bound on the randomized communication complexity of the DISJOINTNESS function. Before we can state this result, we first need to define two measures on $\{0,1\}^n \times \{0,1\}^n$.
\[
\mu_A(x,y) := \left\{
          \begin{array}{ll}
          \frac{1}{|A|} & \quad \textrm{if $(x,y) \in A$,}  \\
          0 & \quad \textrm{otherwise.} \\
          \end{array}
          \right.
\]
\[
\mu_B(x,y) := \left\{
          \begin{array}{ll}
          \frac{1}{|B|} & \quad \textrm{if $(x,y) \in B$,}  \\
          0 & \quad \textrm{otherwise.} \\
          \end{array}
          \right.
\]

\begin{lemma}[see \cite{Raz04}]
For any rectangle $R = S \times T$, if \[\mu_A(R) > 2^{-n/100}\] then
\[ \mu_B(R) > \frac{1}{100} \mu_A(R) \]
(for $n$ large enough).
\end{lemma}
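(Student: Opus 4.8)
The plan is to pass to a coordinatewise description. A disjoint pair $(x,y)$ is a $3$-coloring of $[n]$ by the labels $\{x\text{-only}, y\text{-only}, \text{neither}\}$, so $\mu_A$ is the distribution giving each coordinate one of these three labels independently and uniformly, while $\mu_B$ is the distribution that picks a coordinate $i\in[n]$ uniformly, declares it common to $x$ and $y$, and $3$-colors the remaining $n-1$ coordinates uniformly. Set $p := \mu_A(R)=\Pr_{\mu_A}[x\in S,\ y\in T]$. First I would couple the two measures: sample a full $3$-coloring $c$ (giving an $A$-pair $(x,y)$) together with an independent uniform $i\in[n]$, then override coordinate $i$ to be common, obtaining $(x',y')$. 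A short check shows $(x',y')$ is distributed exactly as $\mu_B$, whence
\[
\mu_B(R) \;=\; \frac1n\sum_{i=1}^n \Pr_{c_{-i}}\!\left[\, x_{-i}\cup\{i\}\in S,\ y_{-i}\cup\{i\}\in T \,\right],
\]
where $c_{-i}$ is a uniform $3$-coloring of $[n]\setminus\{i\}$ and $x_{-i},y_{-i}$ are the resulting partial sets.

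Next I would reduce the target to a local comparison. Since $\mu_B(R)$ is exactly the average over $i$ of the ``$i$-added-to-both-sides'' density, the desired bound $\mu_B(R)\ge\tfrac1{100}p$ is equivalent to showing that this average is at least $\tfrac1{100}p$. When $S$ and $T$ are upward closed the bound is immediate and needs no largeness hypothesis: adding an element never ejects a set, so the augmented density dominates the unaugmented one and in fact $\mu_B(R)\ge\mu_A(R)$. The whole difficulty is that $S$ and $T$ are arbitrary, so adding the common coordinate $i$ can push $x$ out of $S$ or $y$ out of $T$.

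This is where the hypothesis $\mu_A(R)>2^{-n/100}$ enters, and the non-monotone case is the main obstacle. The naive hope that augmentation preserves membership is simply false: for the single layer $S=T=\{u:|u|=n/3\}$ augmentation always changes the cardinality and so never preserves membership, yet the lemma still holds because passing to $\mu_B$ only shifts the typical cardinalities by one. Hence the correct local statement is not preservation but comparability: the augmented density $\Pr_{c_{-i}}[x_{-i}\cup\{i\}\in S]$ must be within a constant factor of the unaugmented density $\Pr_{c_{-i}}[x_{-i}\in S]$, and likewise for $T$, and the two sides must then be combined even though $x_{-i}$ and $y_{-i}$ are negatively correlated (they are always disjoint). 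Largeness guarantees that $S$ and $T$ are spread out enough in the biased cube for such a constant-factor comparison to survive, and $2^{-n/100}$ is the scale at which the unavoidable error terms become negligible.

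To make this precise I would invoke Razborov's combinatorial estimate --- equivalently, a small-set-expansion / hypercontractive inequality on the $p=\tfrac13$ biased hypercube --- to bound how the product measure of $S$, and of $T$, transforms under the single-coordinate shift, and then treat the intersection using the disjointness correlation between the two sides. Tracking the constants through this estimate, so that they resolve to $\tfrac1{100}$ and the exponent $n/100$, is the technical heart and the step I expect to be hardest; the $3$-coloring reformulation, the coupling, and the reduction to the local comparison are essentially bookkeeping by comparison.
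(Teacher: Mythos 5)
Your reformulation is correct as far as it goes: the identification of $\mu_A$ with a uniform $3$-coloring of $[n]$, the coupling that produces $\mu_B$ by forcing a uniformly random coordinate to be common, the resulting averaging formula for $\mu_B(R)$, and the observation that upward-closed $S,T$ make the bound trivial are all sound bookkeeping. The problem is that everything after that is not a proof but a restatement of the problem. The step you call "Razborov's combinatorial estimate" \emph{is} the lemma you are being asked to prove: this lemma (stated in the thesis with a citation to \cite{Raz04} rather than proved there) is precisely the core combinatorial assertion of Razborov's lower bound for DISJOINTNESS. Invoking it as a black box makes the argument circular. Moreover, the standard proof is not a small-set-expansion or hypercontractivity argument on the biased cube; it is an elementary but genuinely clever conditioning argument --- condition on a random partition of the coordinates so that $x$ and $y$ become \emph{independent}, then run a "bad rows / bad columns" analysis showing that, unless $S$ or $T$ has exponentially small measure, the conditional probability that the distinguished coordinate lies in $x$ (respectively in $y$) cannot be systematically depressed. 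None of that machinery, which is where all the difficulty lives, appears in your outline.

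There is also a concrete error in the intermediate statement you propose to prove. Your "local comparison" --- that for each $i$ the augmented density $\Pr_{c_{-i}}[\,x_{-i}\cup\{i\}\in S\,]$ is within a constant factor of the unaugmented density, "and likewise for $T$" --- is false as stated, even under the largeness hypothesis. Take $S=\{x : 1\notin x\}$ and $T$ the set of all $y$; then $\mu_A(R)=2/3 > 2^{-n/100}$, yet for $i=1$ the augmented density is exactly $0$ while the unaugmented density is $2/3$. Only an \emph{averaged-over-$i$} comparison can hold, and --- worse --- the two sides cannot be handled separately and then "combined": conditioned on the $3$-coloring of $[n]\setminus\{i\}$, membership $x_{-i}\cup\{i\}\in S$ and $y_{-i}\cup\{i\}\in T$ are entangled through the disjointness constraint, and controlling that joint behavior is exactly what Razborov's partition-conditioning is designed to do. So the proposal is missing the one idea that makes the lemma true, and the intermediate target it sets up would need to be replaced, not merely quantified more carefully.
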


In particular, if $\mu_A(R) > 2^{-n/100}$ then $R$ contains elements from both $A$ and $B$. Therefore to cover the inputs in $A$ with 1-monochromatic rectangles\footnote{In this setting 1-monochromatic rectangles can contain any element from $(X \times Y) \backslash B$}, we need exponentially many rectangles. This shows $N^1(PDISJ) = \Omega(n)$.

\section{Other Models}

In this section, we mention some of the most interesting and well-studied communication complexity models.

\index{communication complexity ! randomized}
\subsubsection*{Randomized Communication Complexity}

In the randomized setting, Alice and Bob both have access to random bit strings that are generated according to some probability distribution. These random strings are private to them and are independent. What Alice and Bob communicate depends on these random strings as well as their input and the previously communicated bits. We say that $\mathcal{P}$ is a protocol for $f$ with $\epsilon$ error if the following holds.
\[ \forall (x,y) \in X \times Y, \quad \textrm{Pr}[ \mathcal{P}(x,y) = f(x,y) ] \geq 1 - \epsilon \]
The cost of $\mathcal{P}$ is defined as the maximum number of bits communicated where the maximum is taken over all possible random strings and all inputs $(x,y)$. The randomized communication complexity of $f$ is
\[ R(f) := \min_{\mathcal{P} \textrm{ protocol for $f$ with error } 1/3} cost(\mathcal{P}).\]

One can also define the one sided error randomized complexity. We say that $\mathcal{P}$ is a protocol for $f$ with one sided $\epsilon$ error if the following holds.
\[ \forall (x,y) \in X \times Y \textrm{ with $f(x,y)=0$}, \quad \textrm{Pr}[ \mathcal{P}(x,y) = 0 ] = 1 \textrm{ and} \]
\[ \forall (x,y) \in X \times Y \textrm{ with $f(x,y)=1$}, \quad \textrm{Pr}[ \mathcal{P}(x,y) = 1 ] \geq 1 - \epsilon. \]
Then the one sided randomized communication complexity of $f$ is
\[ R^1(f) := \min_{\mathcal{P} \textrm{ protocol for $f$ with one sided error } 1/2} cost(\mathcal{P}). \]

There are also variations of the randomized model in which Alice and Bob have access to one public random string. (For a comparison see \cite{New91}.)

\index{communication complexity ! distributional}
\subsubsection*{Distributional Communication Complexity}

In this setting, the definition of the cost of a protocol and the communication complexity of a function are the same as the deterministic model. The difference is that we relax the condition
\[ \forall (x,y) \in X \times Y, \quad \mathcal{P}(x,y) = f(x,y) \]
to
\[ \textrm{Pr}_{\mu}[\mathcal{P}(x,y) = f(x,y)] \geq 1-\epsilon \]
for a given probability distribution $\mu$ on the input space $X \times Y$ and a constant $\epsilon$. The distributional communication complexity of a function is denoted by $D_{\epsilon}^\mu (f)$.

\index{communication complexity ! multiparty}
\subsubsection*{Multiparty Communication Complexity}

A natural way of generalizing the two player model to $k$-players is as follows. $k$-players try to compute a function $f: X_1 \times X_2 \times ... \times X_k \to Z$ where player $i$ gets $x_i \in X_i$ and communication is established by broadcasting (every player receives the communicated bit). Observe that as the number of players increases, the power of the model decreases. 

Another way of generalizing the two party model to $k$ players was proposed in \cite{CFL83}. This model is referred to as ``number on the forehead" model because here each player $i$ sees every input but $x_i$. This can be viewed as each player having their input on their forehead and not being able to see it. The power of this model increases as the number of players increases. In this setting, coming up with lower bounds is considerably harder. However, these lower bounds imply lower bounds in other computational models such as circuits and bounded-width branching programs. This is one of the reasons why this model has attracted more interest than the natural generalization mentioned previously. There are applications in time-space tradeoffs for Turing Machines (\cite{BNS92}), length-width tradeoffs for branching programs (\cite{BNS92}), circuit complexity (\cite{HG91}, \cite{Gro92}, \cite{Nis93}, \cite{Gro98}), proof complexity (\cite{BPS07}) and pseudorandom generators (\cite{BNS92}), to cite only a few.

\section{Communication Complexity Classes}

It is possible to define complexity classes with respect to communication complexity once we settle what it means to be ``easy" or ``tractable". Communication complexity classes were introduced in \cite{BFS86} in which ``tractable" was defined to be $polylog(n)$ complexity. That is, a function is tractable if its complexity is $O(\log^c n)$ for some constant $c$. From this foundation, one can build communication complexity classes analogous to $P, NP, coNP, BPP, RP$ and many more. For example $P^{cc} = \{ f : D(f) = polylog(n)\}$. The correspondence between some of the complexity classes and the complexity measures can be summarized as follows.
\[
\begin{array}{| c || c | c | c | c | c |}
\hline
\textrm{\bf{Complexity class}} & P^{cc} & NP^{cc} & coNP^{cc} & BPP^{cc} & RP^{cc} \\
\hline
\textrm{\bf{Complexity measure}} & D & N^1 & N^0 & R & R^1 \\
\hline
\end{array}
\]

The relationship between these classes are much better known than their Turing Machine counterparts since proving lower bounds for explicit functions is easier in the communication world. We have seen that the function NOT-EQUALITY satisfies $D(NEQ) = n+1$ and $N^1(NEQ) \leq \log_2 n + 1$. This proves $P^{cc} \neq NP^{cc}$. Since $N^0(NEQ) = O(n)$, we have $coNP^{cc} \neq NP^{cc}$. Theorem \ref{detvsnondet} shows that $P^{cc} = NP^{cc} \cap coNP^{cc}$. It can also be shown that $P^{cc} \neq RP^{cc}$ and $NP^{cc} \nsubseteq BPP^{cc}$.
\begin{remark}
It is also possible to define analogs of the polynomial hierarchy.
\end{remark}

Reducibility and completeness are fundamental concepts in the Turing Machine computational model so it is natural to define the communication complexity analogs.

The idea of reduction is as follows. Given two functions $f$ and $g$, $f$ \emph{reduces} to $g$ if Alice and Bob can privately convert their inputs $x$ and $y$ to $x'$ and $y'$ such that $f(x,y) = 1$ if and only if $g(x',y') = 1$. Suppose $f$ reduces to $g$ and that the inputs of length $n$ are converted into inputs of length $t(n)$. Then it is clear that if the communication complexity of $g$ is $O(h(n))$ then the communication complexity of $f$ is $O(h(t(n)))$. If the communication complexity of $f$ is $\Omega(h(n))$ then the communication complexity of $g$ is $\Omega(h(t^{-1}(n)))$.

Reductions of particular interest with respect to the communication complexity classes are those with $t(n) = 2^{\log_2^c n}$ for some constant $c$. The formal definition as given in \cite{BFS86} is as follows.

\begin{definition}
Let $t = 2^{\log_2^c n}$ for some constant $c$. A \index{rectangular reduction} \emph{rectangular reduction} from a function $f:\{0,1\}^n \times \{0,1\}^n \to \{0,1\}$ to a function $g:\{0,1\}^t \times \{0,1\}^t \to \{0,1\}$ is a pair of functions $a:\{0,1\}^n \to \{0,1\}^t$ and $b:\{0,1\}^n \to \{0,1\}^t$ such that $f(x,y) = 1$ if and only if $g(a(x), b(y)) = 1$.
\end{definition}

From this definition it is clear that if there is a rectangular reduction from $f$ to $g$ and $g \in P^{cc}$ then $f \in P^{cc}$. The same is true for $NP^{cc}$ (and in fact for every level of the polynomial hierarchy).

\begin{example}\label{reduction}
For a fixed constant $q > 1$, define the \index{INNER-PRODUCT} INNER-PRODUCT function as follows.
\[
IP_q(x,y) := \left\{
          \begin{array}{ll}
          1 & \quad \textrm{if $\sum_{i=1}^n x_iy_i \equiv 0 \mod q$,}  \\
          0 & \quad \textrm{otherwise.} \\
          \end{array}
          \right.
\]
We exhibit a reduction from $PDISJ$ to $IP_q$ such that an input of length $n$ is converted into an input of length $n+q$. Since $q$ is a constant, this proves that $N^1(IP_q) = \Omega(n)$.\\
Given $x$ and $y$, Alice and Bob each append $q$ 1's at the end of their inputs to obtain $x'$ and $y'$. If $PDISJ(x,y) = 1$ then clearly $IP_q(x',y') = 1$. If on the other hand $PDISJ(x,y)=0$, then we know that $x$ and $y$ intersect only at one position and therefore $x'$ and $y'$ will intersect in $q+1$ positions. This implies $IP_q(x',y') = 0$.
\end{example}

Having established the definition of a reduction, we can define the notion of completeness. For a class of functions $\mathcal{C}$, we say $f \in \mathcal{C}$ is \index{complete} \emph{complete} in $\mathcal{C}$ if there is a rectangular reduction from every function in $\mathcal{C}$ to $f$. In \cite{BFS86}, a complete function is found in every level of the polynomial hierarchy.

\section{Summary}

In this chapter we took a glimpse at the mini-world within complexity theory: communication complexity. The main focus in this area has been proving tight lower bounds for specific functions. We showed three lower bound techniques for the deterministic model. These were the disjoint cover method, rectangle size method and the fooling set method. We introduced the non-deterministic model and saw that the non-deterministic communication complexity of a function was essentially the number of $z$-monochromatic rectangles needed to cover the $z$-inputs. We saw that the rectangle size method, and therefore the fooling set method were also applicable in this setting. We looked at the power of non-determinism and observed that the possible exponential gap between the deterministic and the non-deterministic complexity arose from the fact that non-determinism was one sided. Later we touched on some other communication models: randomized complexity, distributional complexity and multiparty complexity. Finally we defined some of the communication complexity classes, $P^{cc},NP^{cc},coNP^{cc},BPP^{cc},RP^{cc}$, by considering $polylog(n)$ complexity as tractable. Natural definitions of reducibility and completeness were also introduced.

The deterministic and the non-deterministic communication complexities of the functions seen in this chapter are summarized with the following table.
\[
\begin{array}{|c|| c| c| c| c| c| c|}
\hline
 & EQ & NEQ & LT & DISJ & PDISJ & IP_q \\
\hline
\hline
D & \Theta(n) & \Theta(n) & \Theta(n) & \Theta(n) & \Theta(n) & \Theta(n) \\
\hline
N^1 & \Theta(n) & \Theta(\log_2 n) & \Theta(n) & \Theta(n) & \Theta(n) & \Theta(n) \\
\hline
\end{array}
\]

\chapter{Algebraic Automata Theory}\label{chapter3}

In this chapter, we introduce the reader to algebraic automata theory by presenting the fundamental concepts in this area. The heart of this theory is viewing a monoid as a language recognizer. Therefore we begin this chapter in Section 3.1 by explaining how a monoid can be viewed as a computational machine. Later we define the syntactic monoid of a language which is analogous to the minimal automaton. Then we define varieties and state the variety theorem which establishes a one to one correspondence between varieties of finite monoids and varieties of regular languages. This conveys the intimate relationship between finite monoids and regular languages. In Section 3.2, we extend the theory to ordered monoids since (as we see in Chapter 4) this provides the proper framework to analyze the non-deterministic communication complexity of regular languages.

We assume that the reader has basic knowledge in automata theory. For more details on the subjects covered in this chapter, see \cite{Pin86} and \cite{Pin97} for the ordered case.


\section{Monoids - Automata - Regular Languages}

\subsection{Monoids: A Computational Model}

Before we can present how a monoid can be viewed as a computational machine, we first need to formally define a monoid and a morphism between monoids. A \index{semigroup} \emph{semigroup} ($S$,$\cdot$) is a set $S$ together with an associative binary operation defined on this set. A \index{monoid} \emph{monoid} ($M$, $\cdot$) is a semigroup that has an identity: $\exists \; 1_M \in M$ which satisfies $1_M \cdot m = m \cdot 1_M = m$ for any $m \in M$. We denote a monoid by its underlying set and write $m_1m_2$ instead of $m_1 \cdot m_2$ when there is no ambiguity about the operation. Observe that a group is just a monoid in which each element has an inverse.

Given two monoids $M$ and $N$, a function $\varphi: M \to N$ is a \index{morphism ! between monoids} \emph{morphism} if $\varphi(1_M) = 1_N$ and if $\varphi$ preserves the operation, i.e. $\varphi(mm') = \varphi(m)\varphi(m')$ for any $m,m' \in M$.

We assume that the monoids we are dealing with are finite, with the exception of the free monoid $\Sigma^*$ which consists of all words (including the empty word $\epsilon$) over the alphabet $\Sigma$, with the underlying operation being concatenation. Observe that any function $\varphi: \Sigma \to M$ extends uniquely to a morphism $\Phi: \Sigma^* \to M$.

One branch of algebraic graph theory studies the connection between groups and corresponding Cayley graph representations of the groups. Similarly, monoids also have graph representations. Given a monoid $M$, we can construct a labeled multidigraph $G=(V,A)$ as follows. Let $V$ be the underlying set of the monoid and let $(m_1,m_2) \in A$ with label $m_3$ if $m_1m_3 = m_2$. See Figure \ref{cayley} for an example.

\begin{figure}
\begin{center}
\includegraphics[scale=0.8]{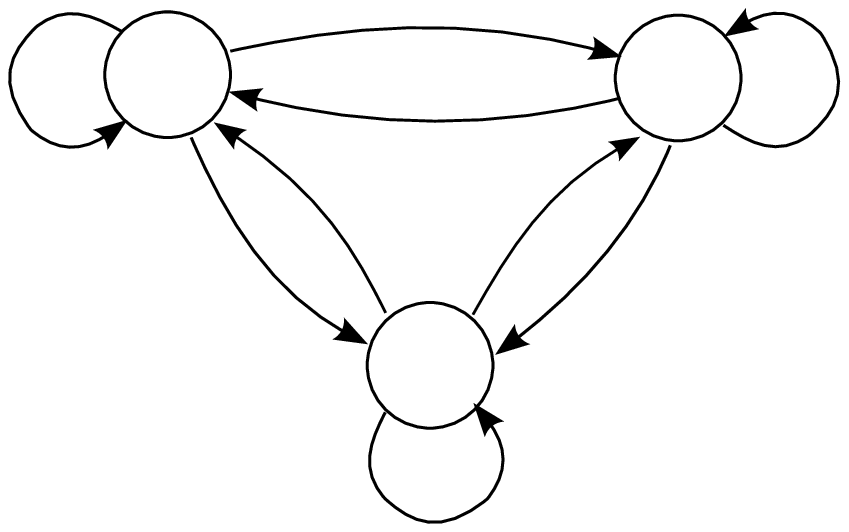}
\rput{0}(-4.1,4.8){1}
\rput{0}(-4.1,3.4){2}
\rput{0}(-6.25,4.1){$\mathbf{0}$}
\rput{0}(-7.75,4.1){0}
\rput{0}(-2.1,4.1){$\mathbf{1}$}
\rput{0}(-0.6,4.1){0}
\rput{0}(-4.1,1.75){$\mathbf{2}$}
\rput{0}(-4.1,0.2){0}
\rput{0}(-5.7,2.4){2}
\rput{0}(-2.5,2.4){1}
\rput{0}(-4.7,3){1}
\rput{0}(-3.5,3){2}
\end{center}
\caption{Graph of $(\mathbb{Z}_3, +)$.}
\label{cayley}
\end{figure}

Now the correspondence between monoids and automata should be clear since we can easily view the graph of $M$ as an automaton which recognizes a language over the alphabet $M$. All we need to do is declare the vertex $1_M$ as the initial state and agree upon a set of accepting vertices $F \subseteq M$. Observe that the graph of a monoid accepts a word $m_1m_2...m_n$ iff $m_1\cdot m_2 \cdot ... \cdot m_n \in F$. In fact, once we fix a function $\varphi : \Sigma \to M$, the graph of $M$ recognizes a language over the alphabet $\Sigma$: replace each arc's label by its preimage under $\varphi$ (now an arc can have more than one label). A word $s_1s_2...s_n \in \Sigma^*$ is accepted iff $\varphi(s_1) \cdot \varphi(s_2) \cdot ... \cdot \varphi(s_n) \in F$. 

If we allow the set of accepting states to vary and the function $\varphi : \Sigma \to M$ to vary (for fixed $\Sigma$ and $M$) then by viewing the monoid's graph as an automaton, we see that a single monoid can be used to recognize a family of languages over $\Sigma$. Each language in the family corresponds to a fixed set of accepting states and a fixed function $\varphi$. This leads to the more formal definition of recognition by a monoid. We say that a language $L \subseteq \Sigma^*$ is \emph{recognized} by a finite monoid $M$ if there exists a morphism $\Phi: \Sigma^* \to M$ and an accepting set $F \subseteq M$ such that $L = \Phi^{-1} (F)$. Similarly, we say that $\Phi: \Sigma^* \to M$ \emph{recognizes} $L$ if there exists $F \subseteq M$ such that $L = \Phi^{-1} (F)$. See Figure \ref{machine} for an alternative way of viewing $M$ as a language recognizer.

\begin{figure}
\begin{center}
\includegraphics[scale=1]{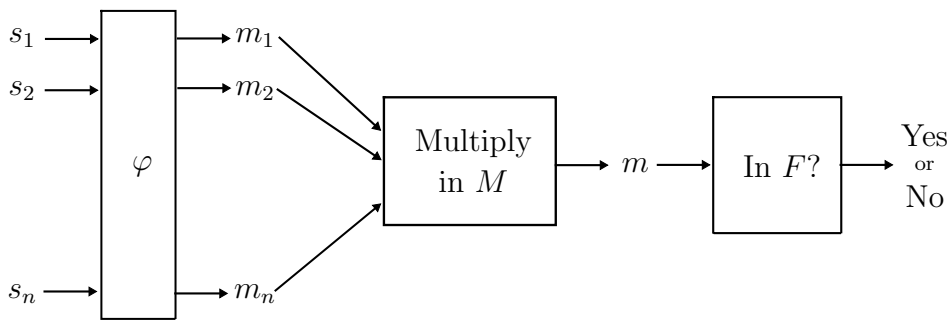}
\rput{0}(-11.4,2.5){$\mathbf{\varphi}$}
\rput{0}(-13,4.2){$s_1$}
\rput{0}(-13,3.5){$s_2$}
\rput{0}(-13,0.8){$s_n$}
\rput{0}(-9.9,4.2){$m_1$}
\rput{0}(-9.9,3.5){$m_2$}
\rput{0}(-9.9,0.8){$m_n$}
\rput{0}(-7,2.8){Multiply}
\rput{0}(-7,2.3){in $M$}
\rput{0}(-4.85,2.5){$m$}
\rput{0}(-2.9,2.5){In $F$?}
\rput{0}(-1,2.9){Yes}
\rput{0}(-1,2.5){\scriptsize{or}}
\rput{0}(-1,2.1){No}
\end{center}
\caption{Another way of viewing a monoid as a machine.}
\label{machine}
\end{figure}

Given any monoid morphism $\Psi: M \to N$, the \index{nuclear congruence} \emph{nuclear congruence} with respect to $\Psi$ is denoted by $\equiv_{\Psi}$ and is defined by $m \equiv_{\Psi} m'$ if $\Psi(m) = \Psi(m')$. We say that a set of words is \index{homogeneous set} \emph{homogeneous} with respect to $L$ if either every word in the set is in $L$ or none of the words is in $L$. Now observe that $\Phi: \Sigma^* \to M$ recognizes $L$ if and only if the nuclear congruence classes of $\Phi$ are homogeneous. This fact is used in the upcoming proofs.

From the earlier discussion, we can conclude that if $L$ is recognized by a finite monoid, then it is recognized by a finite automaton (the graph of the monoid) and therefore it is regular. In fact, the converse is also true.

\begin{theorem}\label{regular}
$L$ is regular if and only if a finite monoid recognizes $L$.
\end{theorem}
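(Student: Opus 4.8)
The plan is to prove the two directions separately. The forward direction---that recognition by a finite monoid forces regularity---has essentially been handled by the discussion preceding the statement: given a finite monoid $M$ recognizing $L$ through a morphism $\Phi\colon\Sigma^*\to M$ and an accepting set $F$, the graph of $M$ with initial state $1_M$ and accepting states $F$ is a finite automaton accepting exactly $L$, so $L$ is regular. The substance of the theorem therefore lies in the converse.

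For the converse I would start from a deterministic finite automaton $\mathcal{A}=(Q,\Sigma,\delta,q_0,F)$ recognizing $L$, which exists because $L$ is regular, and manufacture a finite monoid from its transitions. Each word $w\in\Sigma^*$ induces a map $\delta_w\colon Q\to Q$ sending a state $q$ to the state reached from $q$ after reading $w$, with $\delta_\epsilon=\mathrm{id}_Q$. The key identity is $\delta_{uv}=\delta_v\circ\delta_u$, so the collection $M(\mathcal{A})=\{\delta_w : w\in\Sigma^*\}$ is closed under composition and, under the product realizing $\delta_u\delta_v=\delta_{uv}$, forms a monoid (the \emph{transition monoid}) with identity $\delta_\epsilon$. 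This monoid is finite, since it is a subset of the set of all functions $Q\to Q$, of which there are only $|Q|^{|Q|}$.

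It then remains to exhibit the recognizing data. Define $\Phi\colon\Sigma^*\to M(\mathcal{A})$ by $\Phi(w)=\delta_w$; the identity above makes $\Phi$ a monoid morphism, and it is precisely the unique extension of the assignment $a\mapsto\delta_a$ on letters. Taking $F'=\{\,f\in M(\mathcal{A}) : f(q_0)\in F\,\}$, a word $w$ lies in $L$ exactly when $\mathcal{A}$ accepts it, that is, when $\delta_w(q_0)\in F$, equivalently $\Phi(w)\in F'$; hence $L=\Phi^{-1}(F')$ and $M(\mathcal{A})$ recognizes $L$. Equivalently, one checks that the nuclear congruence classes of $\Phi$ are homogeneous with respect to $L$, which is the recognition condition recorded earlier.

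I do not anticipate a genuine difficulty: the construction is routine and the verifications are immediate. The single point requiring care is finiteness of the transition monoid, and this is exactly where finiteness of $Q$ enters---it is the crux of the whole direction. As an alternative I would keep in mind bypassing the automaton entirely and working with the syntactic congruence $u\sim_L v$, defined by $xuy\in L\Leftrightarrow xvy\in L$ for all $x,y\in\Sigma^*$: one shows it is a congruence of finite index whenever $L$ is regular and takes $M=\Sigma^*/\!\sim_L$, obtaining the canonical \emph{syntactic monoid}. The transition-monoid route is the more economical of the two given the machinery developed so far.
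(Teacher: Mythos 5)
Your proposal is correct and follows essentially the same route as the paper: the forward direction is delegated to the preceding graph-of-monoid discussion, and the converse is proved via the transformation (transition) monoid $\{\delta_w : w\in\Sigma^*\}$ of a DFA, with the canonical map $w\mapsto\delta_w$ as the recognizing morphism. Your version is in fact slightly more careful on two minor points—getting the composition order right ($\delta_{uv}=\delta_v\circ\delta_u$, with the monoid product defined accordingly) and exhibiting the accepting set $F'=\{f : f(q_0)\in F\}$ explicitly rather than only invoking homogeneity of the nuclear congruence classes.
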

\begin{proof}
If $L$ is regular then it is recognized by a finite automaton. The definition of an automaton includes the transition function $\delta: Q \times \Sigma \to Q$ where $Q$ is the set of states. This function can be naturally extended to $\delta: Q \times \Sigma^* \to Q$. In other words, every word in $\Sigma^*$ defines a function from $Q$ to $Q$. Let $\delta_w : Q \to Q$, $q \mapsto \delta(q, w)$, be the function corresponding to the word $w$. Then it is easy to see that the set $T := \{\delta_w : w \in \Sigma^*\}$ is a monoid with the operation being composition of functions. Furthermore $T$ is finite since $Q$ is finite. We call $T$ the \index{transformation monoid} \emph{transformation monoid} of the automaton.

We claim that the transformation monoid $T$ recognizes $L$. To see this let $\Phi: \Sigma^* \to T$ be the canonical mapping: $w \mapsto \delta_w$. $\Phi$ is a morphism since
\[ \Phi(uv) = \delta_{uv} = \delta_u \circ \delta_v = \Phi(u) \circ \Phi(v).
\]
If $\Phi(u) = \Phi(v)$ then $u \in L$ iff $v \in L$ so the nuclear congruence classes are homogeneous and thus $\Phi$ recognizes $L$, which means $T$ recognizes $L$. 
\end{proof}

Theorem \ref{regular} constitutes the foundation of algebraic automata theory. It shows that finite monoids and finite automata have the same computational power with respect to the class of languages recognized. The proof reveals the strong link between monoids and automata. In fact, this link can be seen to be much stronger via the relation between the combinatorial properties\footnote{Regular languages are definable by regular expressions which are combinatorial descriptions of the language.} of $L$ and the algebraic properties of a monoid recognizing $L$. With the purpose of exploring this relation, we define the syntactic monoid of a regular language.

\subsection{The Syntactic Monoid}

For every regular language there is a minimal automaton that recognizes it. Similarly, every regular language has a ``minimal" monoid that recognizes it. We call this monoid the syntactic monoid and it is unique.

The \index{syntactic congruence} \emph{syntactic congruence} associated with a language $L \subseteq \Sigma^*$ is denoted by $\equiv_L$ and $x \equiv_L y$ if for all $u,v \in \Sigma^*$ we have $uxv \in L$ iff $uyv \in L$. It is straightforward to check that this relation is indeed a congruence. The \index{syntactic monoid} \emph{syntactic monoid} of $L$ is the quotient monoid $\Sigma^* / \equiv_L$ and is denoted by $M(L)$. Let $[w]$ represent the congruence class of $w$ with respect to the syntactic congruence. There is a well-defined operation: $[w][u] = [wu]$, so the canonical surjective mapping $\Phi: \Sigma^* \to M(L)$,  $w \mapsto [w]$, is a morphism. Observe that any congruence class of $\equiv_L$ is homogeneous (i.e. the nuclear congruence classes are homogeneous) so $\Phi$ recognizes $L$. We call $\Phi$ the \index{syntactic morphism}\emph{syntactic morphism}. 

It is quite easy to verify the following fact.

\begin{proposition}
$L$ is regular if and only if its syntactic monoid is finite.
\end{proposition}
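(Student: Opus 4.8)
The statement to prove is: $L$ is regular if and only if its syntactic monoid $M(L)$ is finite.

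The plan is to prove both directions by leveraging Theorem~\ref{regular}, which establishes that $L$ is regular if and only if some finite monoid recognizes $L$. The key observation connecting that theorem to this proposition is that the syntactic morphism $\Phi \colon \Sigma^* \to M(L)$ always recognizes $L$ (this was noted just before the statement: every congruence class of $\equiv_L$ is homogeneous). So the entire content of the proposition reduces to controlling the \emph{size} of $M(L)$.

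For the backward direction, suppose $M(L)$ is finite. Since the syntactic morphism recognizes $L$ and $M(L)$ is a finite monoid, $L$ is recognized by a finite monoid, and hence $L$ is regular by Theorem~\ref{regular}. This direction is essentially immediate from the setup.

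For the forward direction, suppose $L$ is regular. By Theorem~\ref{regular} there is some finite monoid $N$ recognizing $L$, say via a morphism $\Psi \colon \Sigma^* \to N$ with $L = \Psi^{-1}(F)$. First I would show that $\Psi$ refines the syntactic congruence: if $\Psi(x) = \Psi(y)$, then for all $u, v \in \Sigma^*$ we have $\Psi(uxv) = \Psi(u)\Psi(x)\Psi(v) = \Psi(u)\Psi(y)\Psi(v) = \Psi(uyv)$, so $uxv \in L \iff uyv \in L$ (as $L = \Psi^{-1}(F)$), giving $x \equiv_L y$. Consequently the map $[w]_{\equiv_L} \mapsto$ anything is controlled by $\Psi$: the number of syntactic congruence classes is at most the number of nuclear congruence classes of $\Psi$, which is at most $|N|$. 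Since $N$ is finite, $M(L) = \Sigma^*/\!\equiv_L$ is finite.

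The main (and only mild) obstacle is the refinement argument in the forward direction — verifying that the nuclear congruence of any recognizing morphism refines $\equiv_L$, which is really just an unwinding of the definition of $\equiv_L$ together with the fact that $\Psi$ is a morphism. Once that inequality on the number of classes is in hand, finiteness of $M(L)$ follows immediately. No genuine difficulty arises; the proposition is, as the text says, ``quite easy to verify,'' and the work lies entirely in assembling the definitions and citing Theorem~\ref{regular} in both directions.
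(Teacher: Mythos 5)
Your proof is correct. There is nothing in the paper to diverge from: the paper states this proposition without proof (``It is quite easy to verify the following fact''), and your argument is the standard one, built exactly from the tools the paper has just set up. Both directions are sound: the backward direction uses that the syntactic morphism recognizes $L$ together with Theorem~\ref{regular}, and your forward-direction refinement claim --- that $\Psi(x)=\Psi(y)$ forces $x \equiv_L y$ --- is literally the ``claim'' the paper proves inside Proposition~\ref{syntactic}. Indeed, you could shorten the forward direction by citing that proposition directly: $M(L)$ divides any monoid recognizing $L$, and a divisor of a finite monoid (a morphic image of a submonoid) is finite, so regularity of $L$ plus Theorem~\ref{regular} immediately bounds $|M(L)|$.
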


We say that a monoid $N$ \index{division ! monoids}\emph{divides} a monoid $M$ (denoted by $N \prec M$) if there exists a surjective morphism from a submonoid\footnote{A submonoid is a subset that contains the identity and is closed under the operation.} of $M$ onto $N$. Intuitively, this means that the multiplicative structure of $N$ is embedded in $M$. The syntactic monoid of $L$ recognizes $L$ and is the minimal monoid with this property with respect to division.

\begin{proposition}\label{syntactic}
$M(L)$ recognizes $L$ and divides any other monoid that also recognizes $L$.
\end{proposition}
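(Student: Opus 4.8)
The first assertion—that $M(L)$ recognizes $L$—was already settled in the construction of the syntactic morphism: the classes of $\equiv_L$ are homogeneous, so the canonical map $\Phi : \Sigma^* \to M(L)$ recognizes $L$. The substance of the proposition therefore lies in the divisibility claim, and the plan is to exhibit explicitly, for an arbitrary recognizing monoid, a submonoid of it together with a surjective morphism onto $M(L)$.

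So let $M$ be any monoid recognizing $L$, say via a morphism $\Psi : \Sigma^* \to M$ and an accepting set $F \subseteq M$ with $L = \Psi^{-1}(F)$. The key observation—and the crux of the argument—is that the nuclear congruence $\equiv_{\Psi}$ refines the syntactic congruence $\equiv_L$. To establish this I would suppose $\Psi(u) = \Psi(v)$ and check, for arbitrary $x,y \in \Sigma^*$, that $\Psi(xuy) = \Psi(x)\Psi(u)\Psi(y) = \Psi(x)\Psi(v)\Psi(y) = \Psi(xvy)$; since membership in $L$ is decided solely by whether the image lands in $F$, this forces $xuy \in L$ exactly when $xvy \in L$, which is precisely the statement $u \equiv_L v$.

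With this in hand I would take the submonoid $N := \Psi(\Sigma^*)$ of $M$ (it contains $1_M = \Psi(\epsilon)$ and is closed under the operation, being the image of a morphism) and define $\eta : N \to M(L)$ by $\eta(\Psi(w)) = \Phi(w)$. The refinement just proved is exactly what makes $\eta$ well defined: if two words share the same image under $\Psi$, they are $\equiv_L$-equivalent and hence share the same image under $\Phi$. Verifying that $\eta$ is a morphism is routine, following from $\Psi(uv) = \Psi(u)\Psi(v)$ and the analogous identity for $\Phi$, while surjectivity is immediate since $\Phi$ maps onto $M(L)$. Thus $\eta$ is a surjective morphism from a submonoid of $M$ onto $M(L)$, which is precisely the assertion $M(L) \prec M$.

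I expect no serious obstacle here. The only point requiring care is to identify the correct submonoid—namely the image $\Psi(\Sigma^*)$ rather than $M$ itself, since $\Psi$ need not be surjective—and to notice that the well-definedness of $\eta$ reduces exactly to the congruence-refinement observation, which is itself the heart of the proof.
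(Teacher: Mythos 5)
Your proposal is correct and follows essentially the same route as the paper's own proof: the key claim that $\Psi(u) = \Psi(v)$ implies $\Phi(u) = \Phi(v)$ (i.e.\ that the nuclear congruence refines the syntactic congruence), then the submonoid $N = \Psi(\Sigma^*)$ and the surjective morphism $\Psi(w) \mapsto \Phi(w)$ whose well-definedness is exactly that claim. The only cosmetic difference is that you prove the refinement directly from $L = \Psi^{-1}(F)$ while the paper argues by contradiction via homogeneity of the nuclear congruence classes; these are the same fact.
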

\begin{proof}
We have already proved that $M(L)$ recognizes $L$ so we prove the second statement.
	
Let $M$ be any monoid that recognizes $L$. So there exists a morphism $\Psi: \Sigma^* \to M$ recognizing $L$. Let $\Phi$ be the syntactic morphism. To show $M(L)$ divides $M$, we find a surjective morphism $\Upsilon$ from a submonoid $N$ of $M$ onto $M(L)$.

Before defining $\Upsilon$ we first prove the following claim: if $\Psi(a) = \Psi(b)$ then $\Phi(a) = \Phi(b)$. Suppose not, so there exists $a,b$ such that $\Psi(a) = \Psi(b)$ but $\Phi(a) \neq \Phi(b)$. By the definition of $\Phi$ this means that without loss of generality, there exists $u,v$ such that $uav \in L$ but $ubv \notin L$. We have
\[ \Psi(uav) = \Psi(u)\Psi(a)\Psi(v) = \Psi(u)\Psi(b)\Psi(v) = \Psi(ubv)
\]
so $\Psi$ maps $uav$ and $ubv$ to the same element. Since nuclear congruence classes (with respect to $\Psi$) must be homogeneous and $uav \in L$ but $ubv \notin L$, we get a contradiction.

Now let $N := \Psi(\Sigma^*)$. So $N$ is a submonoid of $M$. Define $\Upsilon: N \rightarrow M(L)$, $\Psi(w) \mapsto \Phi(w)$, i.e. $\Phi = \Upsilon \circ \Psi$.

\[
\begin{psmatrix}[mnode=R,colsep=2cm,rowsep=2cm]
\Sigma^* & M(L)\\
N\\
\end{psmatrix}
\psset{nodesep=0.3cm}
\everypsbox{\scriptstyle}
\ncLine{->}{1,1}{1,2}\Aput{\Phi}
\ncLine{->}{1,1}{2,1}\Bput{\Psi}
\ncLine{->}{2,1}{1,2}\Bput{\Upsilon}
\]

By claim $\Upsilon$ is well-defined. Since $\Phi$ is surjective, $\Upsilon$ is surjective. Furthermore,
\[ \Upsilon(\Psi(u)\Psi(v)) = \Upsilon(\Psi(uv)) = \Phi(uv) = \Phi(u)\Phi(v) = \Upsilon(\Psi(u))\Upsilon(\Psi(v))
\]
and so $\Upsilon$ is a morphism.
\end{proof}

If $M \prec N$ and $N \prec M$ then $M$ is isomorphic to $N$. So as claimed before, for every regular language there is a \emph{unique} (up to isomorphism) canonical monoid, the syntactic monoid, attached to it. An interesting property of $M(L)$ is that it is isomorphic to the transformation monoid of the minimal automaton recognizing $L$.


In the next subsection, we introduce the notion of language and monoid varieties. The combinatorial properties of a language are reflected on the algebraic properties of $M(L)$ and varieties are the proper framework to formalize this.

\subsection{Varieties}

We first give a brief overview of varieties and how monoid varieties and language varieties relate to each other.

A variety of languages is a family of languages that satisfy certain conditions. Similarly a variety of monoids is a family of monoids satisfying certain conditions. The variety theorem states that there is a one to one correspondence between varieties of regular languages and varieties of finite monoids: a variety of monoids $\mathbf{V}$ corresponds to the variety of regular languages $\mathcal{V}$ consisting of all the languages whose syntactic monoid is in $\mathbf{V}$. Consequently, we are able to state results of the form:

``A regular language belongs to the language variety $\mathcal{V}$ if and only if its syntactic monoid belongs to the monoid variety $\mathbf{V}$."

Many classes of languages that are defined combinatorially form language varieties and many classes of monoids that are defined algebraically form monoid varieties. So from above we can hope to reach results of the form:

``A regular language has the combinatorial property $P$ if and only if its syntactic monoid has the algebraic property $Q$."

Sch\"{u}tzenberger was the first to establish such a result: A regular language is star-free\footnote{A language is star-free if it can be defined by a extended regular expression without the Kleene star operation.} if and only if its syntactic monoid is finite and aperiodic\footnote{A monoid is aperiodic if no subset of it forms a non-trivial group.} (\cite{Sch65}). Several important classes of regular languages admit a similar algebraic characterization. This often yields decidability results which are not known to be obtainable by other means. For instance, by a result of McNaughton and Papert (\cite{MP71}), we know that regular languages definable by a first-order formula are exactly the star-free languages. This implies that we can decide if a regular language is first-order definable by checking if its syntactic monoid is aperiodic and this is the only known way of doing this. These types of algebraic characterizations of regular languages also provide one with powerful algebraic tools when analyzing and proving results about regular languages.

\subsubsection{Varieties of Finite Monoids}

A \index{variety ! of monoids} \emph{variety of finite monoids} is a family of finite monoids $\mathbf{V}$ that satisfies the following two conditions:
\begin{itemize}
\item[(i)] $\mathbf{V}$ is closed under division: if $M \in \mathbf{V}$ and $N \prec M$ then $N \in \mathbf{V}$,
\item[(ii)] $\mathbf{V}$ is closed under direct product: if $M_1, M_2 \in \mathbf{V}$ then $M_1 \times M_2 \in \mathbf{V}$.
\end{itemize}


\begin{example}
The following are some examples of varieties of monoids:
\begin{itemize}
\item $\mathbf{I}$ is the trivial variety consisting of only the trivial monoid $I = \{1\}$.
\item $\mathbf{M}$ is the variety containing all finite monoids.
\item $\mathbf{Com}$ is the variety of all commutative monoids.
\item $\mathbf{G}$ is the variety of groups.
\item $\mathbf{A}$ is the variety of aperiodic monoids.
\item $\mathbf{J}$ is the variety of monoids $M$ that satisfy $Mm_1M = Mm_2M \implies m_1= m_2$. We call these monoids $\mathcal{J}$-trivial.
\end{itemize}
\end{example}

There is a convenient way of defining varieties of monoids through \index{identities} \emph{identities}. The notion of identities can be presented in two ways. One involves topological semigroups (see \cite{Pin97}), which we wish to avoid. Therefore we use the presentation which we think is more intuitive.

Let $\Sigma$ be a countable alphabet and $u,v$ two words in $\Sigma^*$. We say that a monoid $M$ \emph{satisfies} the identity $u = v$ if for all morphisms $\varphi : \Sigma^* \to M$ we have $\varphi(u) = \varphi(v)$. This means that if we replace the letters of $u$ and $v$ with arbitrary (but consistent) elements of $M$ then we will arrive at an equality in $M$. For example a monoid is commutative if and only if it satisfies the identity $ab = ba$.

It can be shown that the family of finite monoids consisting of the monoids that satisfy the identity $u = v$ forms a variety. This variety is denoted by $\mathbf{V}(u,v)$.

Now let $(u_n,v_n)_{n>0}$ be a sequence of pair of words in $\Sigma^*$. Define
\[
\mathbf{W} := \lim_{n \to \infty} \mathbf{V}(u_n, v_n) = \bigcup_{m>0} \bigcap_{n \geq m} \mathbf{V}(u_n, v_n).
\]
Observe that $M \in \mathbf{W}$ if and only if there exists an $n_0 > 0$ such that for every $n > n_0$, $M$ satisfies $u_n = v_n$. Here we say that $\mathbf{W}$ is \index{ultimately defined ! unordered monoids} \emph{ultimately defined} by the sequence of identities $(u_n = v_n)_{n>0}$.

\begin{theorem}[see \cite{Pin86}]
Every variety of monoids is ultimately defined by some sequence of equations.
\end{theorem}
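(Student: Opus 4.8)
The plan is to produce the sequence of identities by approximating $\mathbf{V}$ from below by an increasing chain of subvarieties, each generated by a \emph{single} finite monoid and hence ``tame'', and then reading finitely many identities off each approximation and concatenating them. The guiding obstacle, which dictates this strategy, is that the relatively free monoid of $\mathbf{V}$ on finitely many generators is in general infinite (for instance the free commutative monoid $\mathbb{N}^k$ witnesses this for $\mathbf{Com}$), so one cannot simply take ``the'' free object of $\mathbf{V}$ and read off a finite defining set of identities. This is precisely why the theorem promises only an \emph{ultimate} definition by a sequence rather than a finite basis, and it is what the construction must work around.

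First I would build the approximating chain. For each $n$, since there are only finitely many monoids of size $\le n$ up to isomorphism, let $P_n$ be the direct product of all those lying in $\mathbf{V}$; closure under direct product gives $P_n \in \mathbf{V}$, and I let $\mathbf{V}_n$ denote the variety generated by $P_n$. These satisfy $\mathbf{V}_1 \subseteq \mathbf{V}_2 \subseteq \cdots$ with $\bigcup_n \mathbf{V}_n = \mathbf{V}$, because any $M \in \mathbf{V}$ occurs as a factor of $P_{|M|}$ and so $M \in \mathbf{V}_{|M|}$. Now I form the relatively free monoid $F_n$ of $\mathbf{V}_n$ on an alphabet $A_{m_n}$ of size $m_n \ge \max(n, g_n)$, where $g_n$ is a number of generators of $P_n$; the point is that $F_n$ is \emph{finite} (it divides a power of $P_n$ since $\mathbf{V}_n$ is generated by the single finite monoid $P_n$), while $P_n \prec F_n$ and $F_n \in \mathbf{V}_n \subseteq \mathbf{V}$.

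Next I would extract identities from each finite $F_n$. Writing $\pi_n : A_{m_n}^{*} \twoheadrightarrow F_n$ for the canonical surjection and choosing a transversal $\sigma_n : F_n \to A_{m_n}^{*}$ with $\sigma_n(1)=\epsilon$, let $\mathcal{E}_n$ be the finite set of identities
\[
\sigma_n(f)\,\sigma_n(g) = \sigma_n(fg)\quad (f,g\in F_n),\qquad a = \sigma_n(\pi_n(a))\quad (a\in A_{m_n}).
\]
The technical heart is the standard presentation lemma: these multiplication-table and generator relations generate the congruence $\ker \pi_n$, so if a monoid $M$ satisfies every identity in $\mathcal{E}_n$, then every morphism $A_{m_n}^{*}\to M$ factors through $\pi_n$, whence every submonoid of $M$ generated by at most $m_n$ elements is a quotient of $F_n$ and therefore divides $F_n$. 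I would then define the desired sequence $(u_N=v_N)_{N>0}$ simply by listing $\mathcal{E}_1$, then $\mathcal{E}_2$, then $\mathcal{E}_3$, and so on; no compression of each block into one identity is needed, which avoids the awkwardness of combining shared-variable identities.

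Finally I would verify $\mathbf{W}=\mathbf{V}$ by checking the two semantic conditions. If $M \in \mathbf{V}$ then for $n \ge |M|$ we have $M \in \mathbf{V}_{|M|}\subseteq \mathbf{V}_n$; since the identities of $\mathcal{E}_n$ hold in $F_n$ and $P_n \prec F_n$, they hold throughout $\mathbf{V}_n$ and hence in $M$, so $M$ satisfies all of $\mathcal{E}_n$ for every $n \ge |M|$, violating at most the finitely many earlier identities, giving $M \in \mathbf{W}$. Conversely, if $M \notin \mathbf{V}$ then, using $m_n \ge n \to \infty$, for all large $n$ the monoid $M$ is generated by at most $m_n$ elements; were $M$ to satisfy every identity of $\mathcal{E}_n$, the presentation lemma would force $M \prec F_n \in \mathbf{V}$ and hence $M \in \mathbf{V}$, a contradiction, so $M$ fails some identity of $\mathcal{E}_n$ for cofinitely many $n$ and thus fails infinitely many identities of the sequence, giving $M \notin \mathbf{W}$. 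I expect the main obstacle, beyond the infinitude issue already noted, to be reconciling the two notions of ``identity'': the identities of $\mathbf{V}_n$ need not hold across all of $\mathbf{V}$, and it is the careful pairing of ``$M$ eventually lands in $\mathbf{V}_n$'' with ``$M \notin \mathbf{V}$ is never captured once the alphabet is wide enough'' that makes the ultimate-definition bookkeeping work.
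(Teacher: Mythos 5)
The paper does not actually prove this theorem---it is quoted from Pin's book \cite{Pin86}---so there is no internal proof to compare against; your argument must be judged on its own, and it is correct. It is in essence the classical Eilenberg--Sch\"{u}tzenberger proof found in the cited source: approximate $\mathbf{V}$ from below by the varieties $\mathbf{V}_n$ generated by the finite products $P_n$, use the fact that the relatively free monoid $F_n$ on a finite alphabet over a variety generated by a single finite monoid is finite (it embeds in a finite power of $P_n$), present $F_n$ by finitely many multiplication-table and generator identities, and concatenate these finite blocks. The only step you delegate to the literature, the ``presentation lemma'' that the relations $\sigma_n(f)\sigma_n(g)=\sigma_n(fg)$ and $a=\sigma_n(\pi_n(a))$ generate $\ker\pi_n$, is indeed standard and follows by a short induction on word length, rewriting any word $w$ to $\sigma_n(\pi_n(w))$ one letter at a time; this is precisely where your normalization $\sigma_n(1)=\epsilon$ is needed for the base case. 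With that lemma in place, both directions of your final verification (members of $\mathbf{V}$ violate only the finitely many identities in the blocks before stage $|M|$; non-members, once the alphabet size exceeds $|M|$, would otherwise be forced to divide $F_n\in\mathbf{V}$) are sound.
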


For example the variety of commutative monoids is ultimately defined by the constant sequence $(ab = ba)_{n>0}$. A less trivial example is the variety of aperiodic monoids. It can be shown that a finite monoid is aperiodic if and only if for each $m \in M$ there exists $n \geq 0$ such that $m^n = m^{n+1}$. Consequently, the variety of aperiodic monoids is ultimately defined by the sequence $(a^n = a^{n+1})_{n>0}$. The variety of commutative aperiodic monoids is ultimately defined by the sequence
\[ a^1 = a^2, ab = ba, a^2 = a^3, ab = ba, a^3 = a^4, ab = ba, ... \]
In such a case, for clarity, we say that the variety is ultimately defined by two sequences.

An element $m$ of a monoid is called \index{idempotent} \emph{idempotent} if $m \cdot m = m$. In finite monoids, idempotents play a key role. For instance, every non-empty monoid contains an idempotent. Indeed, if we take any element $m$ of the monoid, then there exists a number $n > 0$ such that $m^n$ is an idempotent (in fact this is the unique idempotent generated by $m$). This implies that for any finite monoid, there is a number $k >0$ such that for every element $m$ in the monoid, we have that $m^k$ is an idempotent. We call $k$ an \index{exponent} \emph{exponent} of $M$. Observe that if $k$ is an exponent of $M$ then for any $n \geq k$, $n!$ is also an exponent of $M$.

We use $n!$ in many sequences of identities that ultimately define varieties of monoids. For example, the sequence $(x^{n!} y x^{n!} = x^{n!})_{n>0}$ ultimately defines the variety of \emph{locally trivial monoids}. From this, it should be clear that a monoid $M$ is locally trivial if and only if for every idempotent $e \in M$ and every element $m \in M$ we have $eme = e$. As a convention, a sequence of equations involving $n!$ is written by replacing $n!$ with $\omega$. So for example we use $x^{\omega} y x^{\omega} = x^{\omega}$ as an abbreviation for $(x^{n!} y x^{n!} = x^{n!})_{n>0}$. It is easy to see that the variety of groups $\mathbf{G}$ is ultimately defined by $x^{\omega} = \epsilon$.

Given a sequence of identities $E$, we denote by $[[E]]$ the variety that is ultimately defined by $E$. So for example we have $\mathbf{G} = [[x^{\omega} = \epsilon]]$ and the variety of locally trivial monoids is $[[ x^{\omega} y x^{\omega} = x^{\omega} ]]$.

\subsubsection{Varieties of Regular Languages}

Before we can define a variety of regular languages we need some preliminary definitions.

A \emph{class of regular languages} is a function $\mathcal{C}$ that maps every alphabet $\Sigma$ to a set of regular languages in $\Sigma^*$.

A set of languages in $\Sigma^*$ that is closed under finite intersection, finite union and complementation is called a \index{boolean algebra} \emph{boolean algebra}.


Now a \index{variety ! of regular languages} \emph{variety of regular languages} is a class of regular languages $\mathcal{V}$ that satisfies the following conditions:
\begin{itemize}
\item[(i)] For any alphabet $\Sigma$, $\mathcal{V}(\Sigma)$ is a boolean algebra.
\item[(ii)] $\mathcal{V}$ is closed under inverse morphisms: given any alphabets $\Sigma$ and $\Gamma$, for any morphism $\Phi : \Sigma^* \to \Gamma^*$, if $L \in \mathcal{V}(\Gamma)$ then $\Phi^{-1}(L) \in \mathcal{V}(\Sigma)$.
\item[(iii)] $\mathcal{V}$ is closed under left and right quotients: for $L \in \mathcal{V}(\Sigma)$ and $s \in \Sigma$, we have $s^{-1}L := \{w \in L | sw \in L \}$ and $Ls^{-1} := \{ w \in L | ws \in L\}$ are in $\mathcal{V}(\Sigma)$.
\end{itemize}

\begin{example}
The following are some examples of varieties of regular languages:
\begin{itemize}
\item The trivial variety: $\mathcal{V}(\Sigma) = \{\emptyset, \Sigma^*\}$.
\item The variety of all regular languages (each alphabet is mapped to all the regular languages over this alphabet).
\item The variety of star-free languages.
\item The variety of piecewise testable languages: A language is called \emph{piecewise testable} if there exists a $k \in \mathbb{N}$ such that membership of any word in the language depends on the set of subwords\footnote{A word $u = a_1...a_n$ is a subword of a word $x$ if $x = x_0a_1x_1a_2...a_nx_n$ for some words $x_1,...,x_n$.} of length at most $k$ occurring in that word. 
\end{itemize}
\end{example}

\subsubsection{The Variety Theorem}

For a given finite monoid variety $\mathbf{V}$, let $\mathcal{V}(\Sigma)$ be the set of languages in $\Sigma^*$ whose syntactic monoid belongs to $\mathbf{V}$. Alternatively, we can define $\mathcal{V}$ as follows.

\begin{proposition}
Let $\mathcal{C}(\Sigma)$ be the set of languages over $\Sigma$ that is recognized by a monoid in $\mathbf{V}$. Then $\mathcal{V} = \mathcal{C}$. 
\end{proposition}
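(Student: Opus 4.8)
The plan is to prove the set equality $\mathcal{V} = \mathcal{C}$ by establishing the two inclusions separately, relying entirely on Proposition \ref{syntactic} (which tells us that $M(L)$ recognizes $L$ and divides every monoid that recognizes $L$) together with the defining closure property (i) of a monoid variety.

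First I would handle the inclusion $\mathcal{V} \subseteq \mathcal{C}$. Fix an alphabet $\Sigma$ and take any $L \in \mathcal{V}(\Sigma)$, so by definition $M(L) \in \mathbf{V}$. By Proposition \ref{syntactic}, $M(L)$ recognizes $L$. Since $M(L)$ is a monoid in $\mathbf{V}$ that recognizes $L$, the language $L$ is recognized by a monoid in $\mathbf{V}$, which is exactly the condition for $L \in \mathcal{C}(\Sigma)$. This direction is essentially immediate once we invoke that the syntactic monoid is itself a recognizer.

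Next I would prove the reverse inclusion $\mathcal{C} \subseteq \mathcal{V}$, which is where the closure of the variety under division does the work. Take $L \in \mathcal{C}(\Sigma)$; then there is a monoid $M \in \mathbf{V}$ recognizing $L$. By the second statement of Proposition \ref{syntactic}, $M(L) \prec M$, i.e. $M(L)$ divides $M$. Because $\mathbf{V}$ is a variety of finite monoids it is closed under division (condition (i) in the definition), so $M \in \mathbf{V}$ forces $M(L) \in \mathbf{V}$. Hence $L \in \mathcal{V}(\Sigma)$ by the definition of $\mathcal{V}$.

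I do not anticipate any genuine obstacle here: the proposition is a direct corollary of the minimality of the syntactic monoid with respect to division. The only point worth stating carefully is that both inclusions hold for every alphabet $\Sigma$, so that the equality $\mathcal{V} = \mathcal{C}$ holds as an equality of classes of regular languages (functions on alphabets), not merely set-by-set. Closure condition (ii) of $\mathbf{V}$ under direct products is not needed for this statement; only closure under division and the two halves of Proposition \ref{syntactic} are used.
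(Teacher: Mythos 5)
Your proof is correct and follows exactly the same route as the paper: the inclusion $\mathcal{V} \subseteq \mathcal{C}$ via the fact that $M(L)$ recognizes $L$, and the inclusion $\mathcal{C} \subseteq \mathcal{V}$ via $M(L) \prec M$ together with closure of $\mathbf{V}$ under division. Your added remarks (that the equality holds alphabet-by-alphabet and that closure under direct product is not needed) are accurate but do not change the argument.
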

\begin{proof}
$\mathcal{V} \subseteq \mathcal{C}:$ If $L \in \mathcal{V}$ then $M(L) \in \mathbf{V}$. $M(L)$ recognizes $L$ so $L \in \mathcal{C}$.\\
$\mathcal{C} \subseteq \mathcal{V}:$ If $L \in \mathcal{C}$ then there exists $M \in \mathbf{V}$ recognizing $L$. $M(L) \prec M$ and $\mathbf{V}$ is closed under division so $M(L) \in \mathbf{V}$. Therefore $L \in \mathcal{V}$.
\end{proof}

Now we can state the variety theorem due to Eilenberg (\cite{Eil74}).

\begin{theorem}[The Variety Theorem]
$\mathcal{V}$ is a variety of languages and the mapping $\mathbf{V} \mapsto \mathcal{V}$ is one to one.
\end{theorem}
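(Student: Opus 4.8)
The plan is to establish two separate facts: that for each monoid variety $\mathbf{V}$ the associated class $\mathcal{V}$ satisfies the three defining conditions of a language variety (so that the mapping is well defined), and that distinct monoid varieties yield distinct language varieties. Throughout I would freely use the preceding proposition identifying $\mathcal{V}$ with the class $\mathcal{C}$ of languages recognized by \emph{some} monoid in $\mathbf{V}$, together with the facts that $\mathbf{V}$ is closed under division and direct product and that division is a transitive relation.

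For condition (i) I would check the three boolean operations separately. Complementation is immediate, since $L$ and $\Sigma^* \setminus L$ have the same syntactic congruence and hence the same syntactic monoid, so one lies in $\mathcal{V}(\Sigma)$ iff the other does. For union and intersection I would use the recognition description: if $\Phi_1 : \Sigma^* \to M_1$ and $\Phi_2 : \Sigma^* \to M_2$ recognize $L_1 = \Phi_1^{-1}(F_1)$ and $L_2 = \Phi_2^{-1}(F_2)$ with $M_1, M_2 \in \mathbf{V}$, then $(\Phi_1, \Phi_2) : \Sigma^* \to M_1 \times M_2$ recognizes $L_1 \cap L_2$ via $F_1 \times F_2$ and $L_1 \cup L_2$ via $(F_1 \times M_2) \cup (M_1 \times F_2)$; as $M_1 \times M_2 \in \mathbf{V}$, both results lie in $\mathcal{V}(\Sigma)$. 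For condition (ii), given a morphism $\psi : \Sigma^* \to \Gamma^*$ and $L \in \mathcal{V}(\Gamma)$ recognized by $\Phi : \Gamma^* \to M$ with $M \in \mathbf{V}$, the composite $\Phi \circ \psi$ recognizes $\psi^{-1}(L)$ by the same monoid $M$, so $\psi^{-1}(L) \in \mathcal{V}(\Sigma)$. For condition (iii), if $L = \Phi^{-1}(F)$ is recognized by $M \in \mathbf{V}$ and $m = \Phi(s)$, then each quotient is recognized by the same morphism $\Phi$ with a modified accepting set (for instance the left quotient uses $\{n \in M : mn \in F\}$), so both quotients lie in $\mathcal{V}(\Sigma)$. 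This shows $\mathcal{V}$ is a language variety.

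For injectivity, suppose $\mathbf{V}_1$ and $\mathbf{V}_2$ produce the same language variety and fix $M \in \mathbf{V}_1$; it suffices to show $M \in \mathbf{V}_2$. I would take the alphabet $\Sigma = M$ and the evaluation morphism $\eta : M^* \to M$, and for each $m \in M$ set $L_m = \eta^{-1}(m)$. Each $L_m$ is recognized by $M \in \mathbf{V}_1$, hence lies in the first language variety, hence in the second, so every syntactic monoid $M(L_m)$ lies in $\mathbf{V}_2$. Writing $\equiv$ for the intersection $\bigcap_{m} {\equiv_{L_m}}$ of the syntactic congruences, the product of the syntactic morphisms embeds $M^*/{\equiv}$ as a submonoid of $\prod_{m \in M} M(L_m)$. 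Moreover $\eta$ factors through $M^*/{\equiv}$: if $u \equiv v$ then, taking empty context words, $\eta(u) = m \iff \eta(v) = m$ for every $m$, forcing $\eta(u) = \eta(v)$, so $\eta$ descends to a surjective morphism $M^*/{\equiv} \to M$. Thus $M$ is a quotient of $M^*/{\equiv}$, which in turn divides $\prod_m M(L_m)$; by transitivity of division $M \prec \prod_m M(L_m)$, and since $\mathbf{V}_2$ is closed under product and division, $M \in \mathbf{V}_2$. By symmetry $\mathbf{V}_1 = \mathbf{V}_2$.

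I expect the injectivity step to be the main obstacle, and within it the key idea is the realization that although an individual monoid $M$ need not be the syntactic monoid of a single language, it can always be recovered up to division from the finite family of syntactic monoids of the languages $\eta^{-1}(m)$ arising from the evaluation morphism. Verifying that the product of syntactic morphisms is injective on $M^*/{\equiv}$ and that $\eta$ genuinely factors through this quotient are the two technical points to nail down; by contrast, the closure conditions in the first part are routine bookkeeping with morphisms and accepting sets.
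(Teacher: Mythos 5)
Your proposal is correct. Note that the paper itself offers no proof of this theorem: it is stated as a citation to Eilenberg \cite{Eil74}, so there is nothing internal to compare against. What you have written is the standard argument for exactly the two claims made in the statement (well-definedness and injectivity of $\mathbf{V} \mapsto \mathcal{V}$). The closure verifications in the first part are all sound: complementation via equality of syntactic congruences, union and intersection via the product morphism $(\Phi_1,\Phi_2)$ into $M_1 \times M_2$, inverse images via composition, and quotients via translated accepting sets $\{n \in M : mn \in F\}$; each step correctly invokes the proposition identifying $\mathcal{V}$ with the class of languages recognized by some member of $\mathbf{V}$. The injectivity argument is also complete and is indeed the crux: since a monoid $M \in \mathbf{V}_1$ need not itself be a syntactic monoid, you recover it up to division from the finitely many syntactic monoids $M(L_m)$, $m \in M$, of the languages $L_m = \eta^{-1}(m)$ cut out by the evaluation morphism $\eta : M^* \to M$. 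The two technical points you flag do hold: the product of syntactic morphisms has nuclear congruence exactly $\bigcap_m \equiv_{L_m}$, so $M^*/{\equiv}$ is isomorphic to a submonoid of $\prod_m M(L_m)$; and taking empty contexts in each $\equiv_{L_m}$ shows $\equiv$ refines the nuclear congruence of $\eta$, so $\eta$ descends to a surjection $M^*/{\equiv} \twoheadrightarrow M$. Since $M$ is finite, the product has finitely many factors, each $L_m$ is regular (being recognized by the finite monoid $M$), so $\prod_m M(L_m) \in \mathbf{V}_2$ by closure under finite direct product, and transitivity of division gives $M \in \mathbf{V}_2$. The only cosmetic caveat is that the paper's printed definition of quotients ($s^{-1}L := \{w \in L \mid sw \in L\}$) contains a typo; you correctly used the intended definition $\{w \in \Sigma^* \mid sw \in L\}$, which is what your accepting-set construction recognizes.
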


In light of this theorem, one can hope to explicitly make such correspondences. Two important correspondence results are the following.

\begin{theorem}[\cite{Sch65}]
The monoid variety $\mathbf{A}$ corresponds to the variety of star-free languages. Equivalently, a regular language is star-free if and only if its syntactic monoid is aperiodic.
\end{theorem}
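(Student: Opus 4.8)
The plan is to prove the two implications separately, since together they give the stated equivalence; the correspondence with the monoid variety $\mathbf{A}$ then follows from the Variety Theorem, whose associated language variety is by definition the class of languages whose syntactic monoid lies in $\mathbf{A}$. Throughout I work with a surjective morphism $\Phi:\Sigma^*\to M$ recognizing $L$, and I use the characterization recorded earlier that $M$ is aperiodic if and only if $m^n=m^{n+1}$ holds for all $m\in M$ and some uniform $n$.

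\emph{Star-free $\Rightarrow$ aperiodic.} First I would induct on the structure of a star-free expression for $L$. The base languages $\emptyset$, $\{\epsilon\}$ and the singletons $\{a\}$ all have aperiodic (indeed trivial or near-trivial) syntactic monoids. Since $\mathbf{A}$ is closed under division and direct product, the class of languages recognized by aperiodic monoids is closed under the Boolean operations: $L_1\cap L_2$ and $L_1\cup L_2$ are recognized by $M_1\times M_2\in\mathbf{A}$, and $\Sigma^*\setminus L_1$ by $M_1$ itself. The only delicate closure property is therefore concatenation, and for this I would invoke the Schützenberger product $M_1\Diamond M_2$ of two aperiodic monoids: it recognizes the product $L_1L_2$ and is itself aperiodic whenever $M_1$ and $M_2$ are, which keeps the induction inside $\mathbf{A}$.

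\emph{Aperiodic $\Rightarrow$ star-free.} This is the substantial direction and the main obstacle. Writing $L=\Phi^{-1}(F)=\bigcup_{m\in F}\Phi^{-1}(m)$, it suffices to prove that each class $\Phi^{-1}(m)$ is star-free, and I would do this by induction on $|M|$. The engine is the structure theory of finite monoids via Green's relations (see \cite{Pin86}): aperiodicity is equivalent to all $\mathcal{H}$-classes being trivial, which is exactly what forbids the periodic counting that would otherwise force a Kleene star. The key step is a decomposition lemma that factors each word of $\Phi^{-1}(m)$ at a distinguished occurrence of a letter whose image strictly decreases position in the $\mathcal{J}$-order, expressing $\Phi^{-1}(m)$ as a Boolean-plus-marked-product combination of languages of the form $K\,a\,L$ where $K$ and $L$ are recognized by proper submonoids or proper quotients of $M$. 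Since aperiodicity is inherited by submonoids and quotients (both being divisors of $M$), these smaller monoids are again aperiodic, so the induction hypothesis applies and the resulting combination is star-free.

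I expect essentially all the difficulty to reside in this decomposition lemma: one must choose the factorization point canonically enough that membership in $\Phi^{-1}(m)$ is determined by finitely many star-free conditions on the pieces, and one must verify that each piece genuinely lives in a strictly smaller monoid so that the induction is well-founded. The aperiodicity hypothesis enters precisely here, since $m^n=m^{n+1}$ guarantees that iterating a single letter eventually stabilizes rather than cycling; this is what makes the decomposition terminate and ensures that no starred, genuinely cyclic behaviour ever needs to be expressed.
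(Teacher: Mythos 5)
The first thing to note is that the paper does not prove this theorem at all: it is stated with a citation to \cite{Sch65} and used as a black box, so there is no internal proof to compare yours against. Judged on its own merits, your proposal correctly identifies the architecture of the classical argument, and the easy direction (star-free $\Rightarrow$ aperiodic) is essentially in final form: induction on a star-free expression, closure of recognizability by aperiodic monoids under Boolean operations via direct products and division, and the Sch\"utzenberger product to handle concatenation. Even here, though, you invoke without proof the two nontrivial facts that $M_1 \Diamond M_2$ recognizes $L_1L_2$ and is aperiodic whenever $M_1$ and $M_2$ are; these need to be established or explicitly cited for the induction to close.

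The genuine gap is in the hard direction. Everything rests on your ``decomposition lemma,'' and you neither state it precisely nor prove it --- you explicitly defer it while acknowledging that all the difficulty lives there. As it stands, the induction is not even well-posed: you say the pieces $K$ and $L$ are ``recognized by proper submonoids or proper quotients of $M$,'' but you give no reason such a factorization exists, no canonical choice of factorization point, and no argument that the recognizing monoids are strictly smaller (if some piece is only recognized by $M$ itself, the induction on $|M|$ does not terminate). For comparison, Sch\"utzenberger's classical proof keeps the monoid $M$ fixed and inducts instead on the $\mathcal{J}$-depth of the element $m$, expressing $\Phi^{-1}(m)$ by an explicit formula of the shape $(U\Sigma^* \cap \Sigma^* V) \setminus \Sigma^* W \Sigma^*$ whose ingredients involve only preimages of elements strictly $\mathcal{J}$-above $m$; aperiodicity enters through the lemma that $m = smt$ implies $m = sm = mt$ in a finite aperiodic monoid, which is the precise form of $\mathcal{H}$-triviality one needs. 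A proof by induction on $|M|$ along the lines you sketch does exist (via local divisors), but it requires exactly the careful construction you have omitted. Until that lemma is stated and proven, what you have is a correct roadmap of the known proof strategies, not a proof.
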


\begin{theorem}[\cite{Sim75}]
The monoid variety $\mathbf{J}$ corresponds to the variety of piecewise testable languages. Equivalently, a regular language is piecewise testable if and only if its syntactic monoid is $\mathcal{J}$-trivial.
\end{theorem}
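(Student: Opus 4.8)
The plan is to reformulate piecewise testability through the \emph{subword congruence}. For $k \geq 0$, write $u \sim_k v$ when $u$ and $v$ contain exactly the same set of subwords of length at most $k$; call this common set $\mathrm{sub}_k(u)$. First I would verify the standard combinatorial lemma that $\mathrm{sub}_k(uu') = \mathrm{sub}_k(vv')$ whenever $\mathrm{sub}_k(u) = \mathrm{sub}_k(v)$ and $\mathrm{sub}_k(u') = \mathrm{sub}_k(v')$, so that each $\sim_k$ is a congruence, and it has finite index since there are only finitely many subword sets over a finite alphabet. By definition a language is piecewise testable exactly when it is a union of classes of some $\sim_k$, i.e. when $\sim_k$ \emph{saturates} $L$; and since $\sim_k$ is a congruence, saturation is equivalent to the refinement $\sim_k\, \subseteq\, \equiv_L$. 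Thus the theorem reduces to showing that $M(L)$ is $\mathcal{J}$-trivial if and only if $\sim_k\, \subseteq\, \equiv_L$ for some $k$.

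For the direction piecewise testable $\Rightarrow \mathcal{J}$-trivial, let $N = \Sigma^*/\sim_k$ with quotient morphism $\eta$, and I would show $N$ is $\mathcal{J}$-trivial directly. If $N\eta(u)N = N\eta(v)N$, then $u \sim_k xvy$ and $v \sim_k x'uy'$ for some words $x,y,x',y'$; since every subword of $v$ of length $\leq k$ already occurs in $xvy$, we get $\mathrm{sub}_k(v) \subseteq \mathrm{sub}_k(xvy) = \mathrm{sub}_k(u)$, and symmetrically $\mathrm{sub}_k(u) \subseteq \mathrm{sub}_k(v)$, whence $u \sim_k v$ and $\eta(u)=\eta(v)$. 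So $N \in \mathbf{J}$. As $L$ is a union of $\sim_k$-classes, $N$ recognizes $L$, so by Proposition~\ref{syntactic} we have $M(L) \prec N$, and closure of $\mathbf{J}$ under division gives $M(L) \in \mathbf{J}$.

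The hard direction is $\mathcal{J}$-trivial $\Rightarrow$ piecewise testable: given that $M = M(L)$ is $\mathcal{J}$-trivial, I must produce a $k$ with $\sim_k\, \subseteq\, \equiv_L$, equivalently $u \sim_k v \Rightarrow \Phi(u) = \Phi(v)$ for the syntactic morphism $\Phi$. I would first extract the algebraic tools that $\mathcal{J}$-triviality supplies: $M$ is aperiodic (so $\Phi(x)^{\omega} = \Phi(x)^{\omega+1}$, since $\mathcal{J}$-trivial forces $\mathcal{H}$-trivial), and $(xy)^\omega = (yx)^\omega$ holds because $(xy)^\omega$ and $(yx)^\omega$ are $\mathcal{J}$-equivalent idempotents, hence equal. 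The core is then a proof by induction on $|\Sigma|$ that $\Phi(w)$ is determined by $\mathrm{sub}_k(w)$ for a suitable threshold $k = k(|\Sigma|,|M|)$. The base case $|\Sigma|=1$ follows from aperiodicity, since $\Phi(a^n)$ stabilizes once $n \geq |M|$ and $\mathrm{sub}_{|M|}$ records $\min(n,|M|)$. For the inductive step I would note that $u \sim_k v$ forces equal content $A = \alpha(u) = \alpha(v)$ (the content is read off the length-one subwords), then use the left and right basic factorizations — peeling off the maximal prefix and suffix whose content is a proper subset of $A$, which are governed by the induction hypothesis on smaller alphabets — to isolate the newly appearing letters, and let the idempotent and commutation identities collapse the repeated blocks so that the two images coincide.

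The main obstacle is precisely this last inductive argument, which is the genuine content of Simon's theorem: it is delicate to thread the induction through all proper subalphabets while simultaneously bounding $k$ and using the $\mathcal{J}$-trivial collapse $x^\omega = x^{\omega+1}$ and $(xy)^\omega=(yx)^\omega$ to absorb repetitions without destroying the subword information that distinguishes $\Phi$-classes. The two reductions above (the congruence lemma for $\sim_k$ and the division argument) are routine by comparison; the combinatorics of the factorization and its interaction with the natural $\mathcal{J}$-order of $M$ is where essentially all of the work lies.
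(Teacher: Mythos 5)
The paper contains no proof of this statement: it is imported as Simon's theorem with a citation to \cite{Sim75}, so there is no internal argument to compare yours against. Judged on its own terms, your proposal has the right architecture. The reformulation of piecewise testability via the relations $\sim_k$, the verification that each $\sim_k$ is a finite-index congruence, the equivalence of saturation with $\sim_k\,\subseteq\,\equiv_L$, and the direction piecewise testable $\Rightarrow$ $\mathcal{J}$-trivial are all correct and complete. In particular, your argument that $\Sigma^*/\sim_k$ is $\mathcal{J}$-trivial (from $u \sim_k xvy$ and $v \sim_k x'uy'$, deduce $\mathrm{sub}_k(u) = \mathrm{sub}_k(v)$, hence $\eta(u)=\eta(v)$) is exactly right, and combining it with Proposition \ref{syntactic} and closure of $\mathbf{J}$ under division finishes that direction cleanly. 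Your preparatory identities for the converse are also sound: aperiodicity gives $x^{\omega}=x^{\omega+1}$, and $(yx)^{\omega+1} = y(xy)^{\omega}x$ exhibits the $\mathcal{J}$-equivalence of the idempotents $(xy)^{\omega}$ and $(yx)^{\omega}$, which $\mathcal{J}$-triviality collapses to equality.

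The gap is the one you yourself flag: the converse, $\mathcal{J}$-trivial $\Rightarrow$ piecewise testable, is the entire content of Simon's theorem, and your proposal stops at naming a strategy rather than executing it. You do not state the induction hypothesis precisely (what quantity is determined by $\mathrm{sub}_k(w)$, and what is $k$ as a function of $|\Sigma|$ and $|M|$?), you do not define the left and right basic factorizations or prove that two $\sim_k$-equivalent words admit compatible factorizations, and the phrase ``let the idempotent and commutation identities collapse the repeated blocks'' is precisely where every known proof --- Simon's original combinatorial argument, the induction on content via basic factorizations, or the approach through the $\mathcal{J}$-order --- spends all of its effort. The danger is concrete: the factorization peels off prefixes over proper subalphabets, but the letters completing the content can occur in different positions in $u$ and $v$, and showing that the threshold $k$ survives passage to the factors (so that the induction hypothesis actually applies to them) requires a careful bookkeeping argument that your sketch does not supply. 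Until that inductive step is written out with an explicit bound on $k$ and a verified matching of factorizations, what you have is a correct plan for the easy half plus a statement of intent for the hard half, not a proof.
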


Furthermore we can restate Theorem \ref{regular} as follows.

\begin{theorem}
The monoid variety $\mathbf{M}$ corresponds to the variety of all regular languages. Equivalently, $L$ is regular if and only if its syntactic monoid is finite.
\end{theorem}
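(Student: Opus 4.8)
The plan is to recognize that this theorem is an immediate specialization of the Variety Theorem together with the earlier Proposition characterizing regularity by finiteness of the syntactic monoid; there is no new combinatorial or algebraic content to establish, only bookkeeping connecting results already in hand.

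First I would verify that $\mathbf{M}$, the family of all finite monoids, is genuinely a variety of finite monoids, so that the Variety Theorem is applicable. This is vacuous: since $\mathbf{M}$ contains \emph{every} finite monoid by definition, it is trivially closed under division (condition (i)) and under direct product (condition (ii)). Hence the Variety Theorem associates to $\mathbf{M}$ a unique language variety $\mathcal{V}$, where by construction $\mathcal{V}(\Sigma)$ is the set of languages over $\Sigma$ whose syntactic monoid lies in $\mathbf{M}$.

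Next I would observe that this membership condition collapses. Because $\mathbf{M}$ contains all finite monoids, a language $L$ satisfies $M(L) \in \mathbf{M}$ precisely when $M(L)$ is finite. Therefore $\mathcal{V}(\Sigma)$ is exactly the set of languages over $\Sigma$ with finite syntactic monoid. I would then invoke the Proposition stating that $L$ is regular if and only if its syntactic monoid is finite; this identifies $\mathcal{V}$ with the variety of all regular languages, which is precisely the asserted correspondence. The ``equivalently'' clause is nothing more than a restatement of that same Proposition, which in turn rests on Theorem \ref{regular}.

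The main (and essentially only) point requiring care is the first step: confirming that $\mathbf{M}$ really does satisfy the closure axioms for a monoid variety, since the whole argument is driven by applying the Variety Theorem to it. Once that is granted, the identification of the corresponding language variety is forced, and no further obstacle arises.
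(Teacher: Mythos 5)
Your proposal is correct and follows essentially the same route as the paper: the paper offers this theorem without a separate proof, presenting it as a restatement of Theorem \ref{regular} (regularity $=$ recognition by a finite monoid, equivalently finiteness of the syntactic monoid) through the correspondence $\mathbf{V} \mapsto \mathcal{V}$ underlying the Variety Theorem. Your write-up merely makes that bookkeeping explicit --- checking that $\mathbf{M}$ satisfies the variety axioms and identifying the associated language class via the finiteness of $M(L)$ --- which is exactly what the paper leaves implicit.
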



\section{Ordered Monoids}

In the previous section, we have seen that we can classify regular languages in terms of the monoids that recognize them. We were able to obtain algebraic characterizations for certain classes of languages: varieties of languages. Many interesting combinatorially defined classes of languages form varieties. But there are other combinatorially defined classes of languages that do not form a variety. Of particular interest are families of languages that are not necessarily closed under complementation but satisfy the other properties of a variety. We call such families ``positive varieties of languages". Is it possible to get a similar algebraic characterization for these languages as well? In particular, is there a result similar to Eilenberg's variety theorem that permits us to treat positive varieties?

Fortunately the answers to the above questions are ``yes". The idea is to attach an order on the monoids and adapt the definition of recognition by monoids to ordered monoids. This point of view is a generalization of the unordered case and allows us to make a one to one correspondence between varieties of ordered monoids and positive varieties of languages. This extension was introduced in \cite{Pin95}.

Intuitively speaking, the syntactic monoid has less information than the minimal automaton. One reason for this is that in the minimal automaton the accepting states are predetermined, but in the syntactic monoid the accepting set is not. As we see in the next subsection, the order on the monoid restricts the way we can choose the accepting set and consequently the ordered syntactic monoid recovers some of the missing information. This restriction lets us analyze classes of languages that are not closed under complementation.

In this section, we go over the definitions and the results seen thus far, and present the analogous ordered counterparts. We start with the notion of recognition by ordered monoids. Then we define the syntactic ordered monoid. Later we look at varieties of ordered monoids, positive varieties of languages and the variety theorem that establishes a one to one correspondence between these ordered monoid varieties and positive language varieties.

\subsection{Recognition by Ordered Monoids}

An \emph{order} relation on a set $S$ is a relation that is reflexive, anti-symmetric and transitive and it is denoted by $\leq$. We say that $\leq$ is a \emph{stable order relation} on a monoid $M$ if for all $x,y,z \in M$, $x \leq y$ implies $zx \leq zy$ and $xz \leq yz$.

An \index{ordered monoid} \emph{ordered monoid} $(M, \leq_M)$ is a monoid $M$ together with a stable order relation $\leq_M$ that is defined on $M$. A \index{morphism ! between ordered monoids} \emph{morphism of ordered monoids} $\Phi: (M, \leq_M) \to (N, \leq_N)$ is a morphism between $M$ and $N$ that also preserves the order relation, i.e. for all $m,m' \in M$, $m \leq_M m'$ implies $\Phi(m) \leq_N \Phi(m')$.

The free monoid $\Sigma^*$ will always be equipped with the equality relation. Observe that any morphism $\Phi: \Sigma^* \to M$ is also a morphism of ordered monoids $\Phi: (\Sigma^*,=) \to (M, \leq_M)$ for any stable order $\leq_M$ and vice versa.

A subset $I \subseteq M$ is called an \index{order ideal}\emph{order ideal} if for any $y \in I$, $x \leq_M y$ implies $x \in I$. Observe that every order ideal $I$ in a finite monoid $M$ has a generating set $x_1,...,x_k$ such that $I = \langle x_1,...,x_k \rangle := \{y \in M : \exists x_i \textrm{ with } y \leq_M x_i \}$. 

Now the concept of recognizability is very similar to the unordered case. We say that a language $L \subseteq \Sigma^*$ is recognized by an ordered monoid $(M, \leq_M)$ if there exists a morphism of ordered monoids $\Phi : (\Sigma^*, =) \to (M, \leq_M)$ and an order ideal $I \subseteq M$ such that $L = \Phi^{-1}(I)$. Equivalently, $L$ is recognized by $(M, \leq_M)$ if there exists a morphism $\Phi : \Sigma^* \to M$ and an order ideal $I \subseteq M$ such that $L = \Phi^{-1}(I)$. Observe that this is a generalization of the unordered case in the sense that any monoid is an ordered monoid with the equality order (the trivial order) and any subset of the monoid is an order ideal with respect to equality. Also note that in the unordered case, if $L$ is recognized by $M$, then so is the complement of $L$. In the ordered case, since we require the accepting set to be an order ideal, this statement is no longer true. This restriction on the accepting set allows the ordered monoid to keep more information about the automaton recognizing $L$. In this sense, one can think of the ordered case as a refinement of the unordered case.




\subsection{The Syntactic Ordered Monoid}

The definition of the syntactic congruence with respect to $L$ is as exactly as before: $x \equiv_L y$ if for all $u,v \in \Sigma^*$ we have $uxv \in L$ iff $uyv \in L$. Also the syntactic monoid is the quotient monoid $M(L) = \Sigma^* / \equiv_L$. To be able to get a similar variety theorem for classes of languages not closed under complementation, we need to define a stable order on $M(L)$ that allows us to obtain an ordered counterpart of the variety theorem.

First, break up $\equiv_L$: Let $x \preceq_L y$ if for all $u,v \in \Sigma^*$, $uyv \in L \implies uxv \in L$. So $x \equiv_L y$ if and only if $x \preceq_L y$ and $y \preceq_L x$. Now $\preceq_L$ induces a well-defined stable order $\leq_L$ on $M(L)$ given by
\[ [x] \leq_L [y] \textrm{ if and only if } x \preceq_L y. \]
It is straightforward to check that this is indeed a well-defined stable order. The ordered monoid $(M(L), \leq_L)$ is the \index{syntactic ordered monoid} \emph{syntactic ordered monoid} of $L$.

We say that an ordered monoid $(N, \leq_N)$ \index{division ! ordered monoids} \emph{divides} an ordered monoid $(M, \leq_M)$ if there exists a surjective morphism of ordered monoids from a submonoid\footnote{A submonoid of $(M, \leq_M)$ is a submonoid of $M$ with the order being the restriction of $\leq_M$ to the submonoid.} of $(M, \leq_M)$ onto $(N, \leq_N)$.

Now we state and prove the analog of Proposition \ref{syntactic}. We give the proof to demonstrate that slight modifications to the original proof suffices to obtain the proof for the ordered counterpart.

\begin{proposition}
$(M(L), \leq_L)$ recognizes $L$ and is the minimal ordered monoid with this property.
\end{proposition}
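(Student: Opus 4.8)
The plan is to mimic the proof of Proposition~\ref{syntactic} almost verbatim, inserting at each point the extra bookkeeping needed to carry the order. The recognition part is easy, and minimality reduces to a single refined claim that simultaneously handles well-definedness, order-preservation, and surjectivity.

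First, for recognition I would take the candidate accepting set to be $I := \{[w] : w \in L\}$ and check two things: that $I$ is an order ideal of $(M(L), \leq_L)$, and that $\Phi^{-1}(I) = L$, where $\Phi$ is the syntactic morphism. The second equality is immediate from homogeneity of the syntactic congruence classes, exactly as in the unordered case. For the first, suppose $[x] \leq_L [y]$ with $[y] \in I$; by definition of $\leq_L$ this means $x \preceq_L y$, and specializing the condition $uyv \in L \implies uxv \in L$ to $u = v = \epsilon$ gives $y \in L \implies x \in L$, so $[x] \in I$.

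For minimality, let $(M, \leq_M)$ be any ordered monoid recognizing $L$, say via a morphism $\Psi : \Sigma^* \to M$ and an order ideal $J$ with $L = \Psi^{-1}(J)$. As before I set $N := \Psi(\Sigma^*)$, a submonoid of $M$ with the restricted order, and define $\Upsilon : N \to M(L)$ by $\Psi(w) \mapsto \Phi(w)$. The whole proof hinges on the single refined claim
\[ \Psi(a) \leq_M \Psi(b) \implies [a] \leq_L [b]. \]
This one implication does everything. Taking $\Psi(a) = \Psi(b)$ (so that $\Psi(a) \leq_M \Psi(b)$ and $\Psi(b) \leq_M \Psi(a)$) yields $[a] = [b]$, so $\Upsilon$ is well defined; the implication itself is precisely the statement that $\Upsilon$ preserves order; and surjectivity of $\Upsilon$ follows from surjectivity of $\Phi$ exactly as in the unordered argument. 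That $\Upsilon$ is a monoid morphism is verified by the identical computation as in Proposition~\ref{syntactic}.

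To prove the claim, assume $\Psi(a) \leq_M \Psi(b)$ and fix $u, v \in \Sigma^*$ with $ubv \in L$; I must show $uav \in L$. Since $ubv \in L = \Psi^{-1}(J)$ we have $\Psi(ubv) \in J$. Stability of $\leq_M$ applied to $\Psi(a) \leq_M \Psi(b)$ gives
\[ \Psi(uav) = \Psi(u)\Psi(a)\Psi(v) \leq_M \Psi(u)\Psi(b)\Psi(v) = \Psi(ubv), \]
and because $J$ is an order ideal containing $\Psi(ubv)$, it must also contain $\Psi(uav)$, i.e. $uav \in L$. Hence $a \preceq_L b$, which is exactly $[a] \leq_L [b]$. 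I expect this claim to be the only real obstacle: the rest is a transcription of the unordered proof, and the delicate point is keeping the two order directions straight, since $x \preceq_L y$ \emph{reverses} the membership implication ($uyv \in L \implies uxv \in L$) relative to the order on the monoid. Using stability to move the inequality into the context $u \cdot v$ and then invoking the downward-closure of the ideal $J$ is precisely what makes the directions match up.
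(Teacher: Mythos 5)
Your proposal is correct and follows essentially the same route as the paper's proof: the same accepting ideal checked via the specialization $u = v = \epsilon$, the same submonoid $N = \Psi(\Sigma^*)$ and map $\Upsilon$, and the same key step combining stability of $\leq_M$ with downward-closure of the ideal $J$ (the paper phrases this step as a proof by contradiction, you phrase it directly, which is immaterial). Your observation that the order-preservation claim subsumes well-definedness of $\Upsilon$ is a nice small economy over the paper, which instead cites the unordered Proposition for that point, but it is not a different argument.
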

\begin{proof}
Let $\Phi: \Sigma^* \to M(L)$ be the surjective canonical mapping: $w \mapsto [w]$, i.e. $\Phi$ is the syntactic morphism. We already know that the congruence classes are homogeneous so all we need to show is that the accepting set $I$ is an order ideal, i.e. we need to show that if $[y] \in I$ and $[x] \leq_L [y]$, then $[x] \in I$. Since $[x] \leq_L [y]$, we have $x \preceq_L y$ and so for all $u,v \in \Sigma^*$, $uyv \in L \implies uxv \in L$. In particular $y \in L \implies x \in L$. Since $[y] \in I$, $y \in L$ and therefore $x \in L$ and so $[x] \in I$ as required.

Let $(M, \leq_M)$ be any monoid that recognizes $L$. So there exists a morphism $\Psi: \Sigma^* \to M$ and an order ideal $I \subseteq M$ such that $L = \Psi^{-1}(I)$. Let $\Phi$ be defined as above. To show $(M(L), \leq_L)$ divides $(M, \leq_M)$ we find a surjective morphism of ordered monoids $\Upsilon$ from a submonoid of $(M, \leq_M)$ onto $(M(L), \leq_L)$.

We let $N := \Psi(M)$ so $(N, \leq_M)$ is a submonoid of $(M,\leq_M)$. Define $\Upsilon$ to be the same function as the one we defined in the proof of Proposition \ref{syntactic}, so $\Upsilon$ is such that $\Upsilon: (N, \leq_M) \rightarrow (M(L), \leq_L)$, $\Psi(w) \mapsto \Phi(w)$, i.e. $\Phi = \Upsilon \circ \Psi$. As shown before, $\Upsilon$ is a well-defined surjective morphism. What remains to be shown is that $\Upsilon$ is a morphism of ordered monoids. For this, we need to show $\Psi(a) \leq_M \Psi(b) \implies \Upsilon(\Psi(a)) \leq_L \Upsilon(\Psi(b))$, i.e. $\Psi(a) \leq_M \Psi(b) \implies \Phi(a) \leq_L \Phi(b)$.

Suppose the above is not true. So there exists $a$ and $b$ such that  $\Psi(a) \leq_M \Psi(b)$ but $\Phi(a) \nleq_L \Phi(b)$. This means that $a \npreceq_L b$ and therefore there exists $u,v \in \Sigma^*$ such that $ubv \in L$ but $uav \notin L$. On the other hand, since $\leq_M$ is a stable order we have
\[ \Psi(uav) = \Psi(u)\Psi(a)\Psi(v) \leq_M \Psi(u)\Psi(b)\Psi(v) = \Psi(ubv). \]
$ubv \in L$ implies that $\Psi(ubv) \in I$ and by above and the fact that $I$ is an order ideal we must have that $\Psi(uav) \in I$. This is a contradiction since $uav \notin L$.
\end{proof}

\subsection{Varieties}

The definition of an ordered monoid variety is identical to the unordered case. We say that a family of ordered monoids $\mathbf{V}$ is a \index{variety ! of ordered monoids} \emph{variety of ordered monoids} if it is closed under division of ordered monoids and finite direct product\footnote{The order in a finite direct product $M_1 \times ... \times M_n$ is given by $(m_1,...,m_n) \leq (m_1',...,m_n')$ iff $m_i \leq m_i' \quad \forall i \in [n]$.}.

Similar to unordered monoid varieties, varieties of ordered monoids can be defined using identities. We say that $(M, \leq_M)$ \emph{satisfies} the identity $u \leq v$ if and only if for every morphism $\varphi : \Sigma^* \to M$ we have $\varphi(u) \leq_M \varphi(v)$. Let $\mathbf{V}(u,v)$ be the variety of ordered monoids that satisfy the identity $u \leq v$. Then given a sequence of pair of words $(u_n,v_n)_{n>0}$, $\mathbf{W} := \lim \mathbf{V}(u_n,v_n)$ is said to be \index{ultimately defined ! ordered monoids} \emph{ultimately defined} by this sequence.

\begin{theorem}[\cite{PW96}]
Every variety of ordered monoids is ultimately defined by some sequence of identities.
\end{theorem}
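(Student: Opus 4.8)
The plan is to follow the pattern of the unordered analog already quoted, but to route the separation of non-members from $\mathbf{V}$ through a Birkhoff-type lemma linking division into finite powers with satisfaction of identities, and then to assemble the defining sequence by a diagonal argument. Two structural facts will drive everything: first, up to isomorphism there are only countably many finite ordered monoids, so I may fix an enumeration $(M_i)_{i\geq 1}$ of all of them; second, if $P \in \mathbf{V}$ then every ordered monoid dividing a finite power $P^s$ again lies in $\mathbf{V}$, since $\mathbf{V}$ is closed under finite products and under division.

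First I would prove the key lemma: for finite ordered monoids $N$ and $P$, one has $N \prec P^s$ for some $s \geq 1$ if and only if every identity $u \leq v$ satisfied by $P$ is also satisfied by $N$. The forward direction is immediate, since satisfaction of an identity is inherited under finite products and under division. For the converse I would take the finite alphabet $A$ whose letters are the elements of $N$, let $\eta : A^* \to N$ be the canonical surjection, and let $\Psi = (\psi_1,\dots,\psi_s)$ collect all (finitely many) morphisms $\psi_\ell : A^* \to P$. Then, by the product order, $\Psi(u) \leq \Psi(v)$ holds precisely when $P$ satisfies $u \leq v$, and in that case the hypothesis forces $\eta(u) \leq \eta(v)$; hence $\Psi(w) \mapsto \eta(w)$ is a well-defined, surjective, order-preserving morphism from the submonoid $\Psi(A^*)$ of $P^s$ onto $N$, witnessing $N \prec P^s$. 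Carrying the inequalities through the induced map, rather than mere equalities, is exactly the adaptation required for the ordered setting.

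From this lemma I would extract the finite separation I need: if $N \notin \mathbf{V}$ and $P \in \mathbf{V}$, then $N \not\prec P^s$ for every $s$, so the lemma produces an identity $u \leq v$ that $P$ satisfies but $N$ does not. Now I would build the defining sequence by diagonalization. Running over the pairs $(n,j)$ with $1 \leq j \leq n$, ordered by $n$ so that $n \to \infty$ while each fixed $j$ recurs infinitely often, I set $P_n := \prod \{\, M_\ell : \ell \leq n,\ M_\ell \in \mathbf{V}\,\}$, which lies in $\mathbf{V}$ (the empty product being the trivial ordered monoid, a member of $\mathbf{V}$). At step $(n,j)$ I output a trivial identity $\epsilon \leq \epsilon$ if $M_j \in \mathbf{V}$, and otherwise, applying separation to $N = M_j$ and $P = P_n$, an identity satisfied by $P_n$ but failed by $M_j$. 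Listing the outputs as one sequence $(u_k \leq v_k)_k$ defines $\mathbf{W} := \lim \mathbf{V}(u_k,v_k)$.

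Finally I would verify $\mathbf{W} = \mathbf{V}$. For a member $M_i \in \mathbf{V}$: as soon as $n \geq i$, the monoid $M_i$ is a factor of $P_n$, so $M_i \prec P_n$ and $M_i$ satisfies every identity issued at the steps with that $n$; since $n \to \infty$, $M_i$ satisfies all but finitely many $u_k \leq v_k$, whence $M_i \in \mathbf{W}$. For a non-member $M_j \notin \mathbf{V}$: the pair $(n,j)$ occurs for every $n \geq j$, and at each such step $M_j$ fails the chosen identity, so $M_j$ fails infinitely many $u_k \leq v_k$ and thus $M_j \notin \mathbf{W}$. I expect the main obstacle to be the converse of the key lemma, namely the relatively free construction and the check that the induced map transports the order and not merely equality; the diagonalization is then bookkeeping, arranged precisely so that each non-member is excluded infinitely often while each member is eventually captured for good.
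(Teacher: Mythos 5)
The paper contains no proof of this statement: it is quoted from \cite{PW96} exactly as its unordered analog is quoted from \cite{Pin86}, so there is no in-paper argument to measure yours against. What you have written is a self-contained proof, and it is correct: it is the classical Eilenberg--Sch\"{u}tzenberger diagonalization for pseudovarieties, transported to the ordered setting. The two points where real work happens are both handled properly. First, in the key lemma, the relatively free construction uses the product order on $P^s$, so that $\Psi(u)\leq\Psi(v)$ holds exactly when $P$ satisfies $u\leq v$; well-definedness of $\Psi(w)\mapsto\eta(w)$ then follows by antisymmetry from the two inequalities $u\leq v$ and $v\leq u$, and order-preservation (not just multiplicativity) of this map is exactly what the paper's notion of division of ordered monoids from Section 3.2 demands, so the lemma really does produce $N\prec P^s$. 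Second, the bookkeeping is arranged so that each member $M_i\in\mathbf{V}$ satisfies every identity issued at steps with $n\geq i$ (being a direct factor of $P_n$, hence a divisor, and identities pass to divisors), while each non-member fails infinitely many; this matches the paper's definition $\mathbf{W}=\bigcup_{m>0}\bigcap_{n\geq m}\mathbf{V}(u_n,v_n)$, under which membership means satisfying all but finitely many identities. One convention you should make explicit: your assertion that the empty product, the trivial ordered monoid, belongs to $\mathbf{V}$ requires $\mathbf{V}\neq\emptyset$ (then the trivial monoid divides any member, so division-closure puts it in $\mathbf{V}$); the literally empty family satisfies the paper's two closure axioms yet cannot be ultimately defined, since the trivial monoid satisfies every identity, so the theorem tacitly assumes this non-degeneracy and your proof should state it.
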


Now we define positive variety of languages. A set of languages in $\Sigma^*$ that is closed under finite intersection and finite union is called a \index{positive boolean algebra} \emph{positive boolean algebra}. So it differs from a boolean algebra because we do not require the set to be closed under complementation. A class of languages $\mathcal{V}$ is called a \index{variety ! positive} \emph{positive variety of languages} if it is a positive boolean algebra, is closed under inverse morphisms and is closed under left and right quotients. 

For a given variety of finite ordered monoids $\mathbf{V}$, let $\mathcal{V}(\Sigma)$ be the set of languages over $\Sigma$ whose syntactic ordered monoid belongs to $\mathbf{V}$. As before, this is equivalent to saying that $\mathcal{V}(\Sigma)$ is the set of languages over $\Sigma$ that are recognized by an ordered monoid in $\mathbf{V}$.

\begin{theorem}[The Variety Theorem \cite{Pin95}]
$\mathcal{V}$ is a positive variety of languages and the mapping $\mathbf{V} \mapsto \mathcal{V}$ is one to one.
\end{theorem}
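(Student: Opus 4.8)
The plan is to treat the two assertions separately, modeling each step on its unordered counterpart while paying attention to the places where order ideals must replace arbitrary accepting sets. Throughout I would use the characterization stated just before the theorem: $L \in \mathcal{V}(\Sigma)$ if and only if $L$ is recognized by some ordered monoid in $\mathbf{V}$. This equivalence, which rests on the preceding proposition that $(M(L),\leq_L)$ divides any ordered monoid recognizing $L$ together with closure of $\mathbf{V}$ under division, is what lets me build witnesses freely rather than argue about syntactic monoids directly.

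For the first assertion, that each $\mathcal{V}(\Sigma)$ is a positive boolean algebra closed under inverse morphisms and quotients, I would verify the four closure properties by hand. If $L_1 = \Phi_1^{-1}(I_1)$ and $L_2 = \Phi_2^{-1}(I_2)$ are recognized by $(M_1,\leq_1),(M_2,\leq_2)\in\mathbf V$, then the product map $(\Phi_1,\Phi_2)\colon\Sigma^*\to M_1\times M_2$ recognizes $L_1\cap L_2$ with accepting set $I_1\times I_2$ and $L_1\cup L_2$ with accepting set $(I_1\times M_2)\cup(M_1\times I_2)$; both are order ideals for the product order, and $M_1\times M_2\in\mathbf V$ since varieties are closed under finite direct product, so intersection and union stay in $\mathcal V$. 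Closure under an inverse morphism $\eta\colon\Sigma^*\to\Gamma^*$ is immediate because $\Phi\circ\eta$ recognizes $\eta^{-1}(L)$ with the \emph{same} ordered monoid and the \emph{same} order ideal. For the quotient $s^{-1}L$ I would use the same $\Phi$ with accepting set $I' := \{m\in M : \Phi(s)\,m \in I\}$; stability of $\leq_M$ makes $I'$ an order ideal, since $m'\leq_M m\in I'$ gives $\Phi(s)m'\leq_M\Phi(s)m\in I$ and $I$ is an ideal. The right quotient is symmetric. Hence $\mathcal V$ is a positive variety of languages.

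The substance of the theorem is injectivity of $\mathbf V\mapsto\mathcal V$, and here the plan is to recover $\mathbf V$ from $\mathcal V$ by proving
\[
\mathbf V = \{(M,\leq_M) : \text{every language recognized by }(M,\leq_M)\text{ lies in }\mathcal V\}.
\]
The inclusion $\subseteq$ is the definition of $\mathcal V$. For $\supseteq$ I would fix an ordered monoid $(M,\leq_M)$ all of whose recognized languages lie in $\mathcal V$ and exhibit a division of $(M,\leq_M)$ by a finite product of syntactic ordered monoids drawn from $\mathcal V$; closure of $\mathbf V$ under finite direct product and division then forces $(M,\leq_M)\in\mathbf V$. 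Concretely I would take the alphabet $\Sigma=M$ and the evaluation morphism $\pi\colon M^*\to M$ extending the identity on $M$, and for each of the finitely many order ideals $I$ of $M$ set $L_I:=\pi^{-1}(I)$. Each $L_I$ is recognized by $(M,\leq_M)$, hence lies in $\mathcal V$, so each syntactic ordered monoid $(M(L_I),\leq_{L_I})$ belongs to $\mathbf V$. Bundling the syntactic morphisms into $\eta:=(\eta_I)_I\colon M^*\to\prod_I M(L_I)$ and setting $N:=\eta(M^*)$, I would define $\Upsilon\colon N\to M$ by $\eta(w)\mapsto\pi(w)$ and check it is a surjective morphism of ordered monoids, which is exactly the division claimed.

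The crux — and the step I expect to be the main obstacle — is verifying that $\Upsilon$ is well defined and order preserving, since this is where the ordered case genuinely diverges from the unordered argument. The device is to test the inequality against \emph{principal} order ideals. Given $\eta(u)\leq\eta(v)$ in the product, I would read off the coordinate indexed by $I_0:=\langle\pi(v)\rangle=\{m:m\leq_M\pi(v)\}$: the relation $[u]_{L_{I_0}}\leq_{L_{I_0}}[v]_{L_{I_0}}$ unwinds to $u\preceq_{L_{I_0}}v$, and evaluating at the empty context $u=v=\epsilon$ gives $v\in L_{I_0}\Rightarrow u\in L_{I_0}$. Since $\pi(v)\in I_0$ we have $v\in L_{I_0}$, whence $u\in L_{I_0}$, i.e.\ $\pi(u)\leq_M\pi(v)$; this is order preservation, and antisymmetry of $\leq_M$ yields well-definedness as the special case of equality in both coordinates. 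Surjectivity of $\Upsilon$ follows from surjectivity of $\pi$ and the factorization $\pi=\Upsilon\circ\eta$, while the morphism property is routine exactly as in the proof of the preceding syntactic proposition. With the division established, both inclusions hold, $\mathbf V$ is determined by $\mathcal V$, and the map is one to one.
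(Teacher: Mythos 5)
Your proposal is correct, but there is nothing in the paper to compare it against: the thesis states this theorem as a citation to \cite{Pin95} and moves directly on to the two explicit correspondences, giving no proof of either half. So the comparison is with the literature, and what you give is essentially Pin's ordered adaptation of Eilenberg's original argument, with the details checking out. In the positive-variety half, the places where order genuinely matters are exactly the ones you flag: $I_1\times I_2$ and $(I_1\times M_2)\cup(M_1\times I_2)$ are order ideals for the componentwise product order, and stability of $\leq_M$ is precisely what makes $I'=\{m\in M : \Phi(s)m\in I\}$ an ideal for the quotient. In the injectivity half, your recovery formula reduces to the key lemma that any $(M,\leq_M)$ all of whose recognized languages lie in $\mathcal{V}$ divides $\prod_I (M(L_I),\leq_{L_I})$, the product over the finitely many order ideals $I$ of $M$ with $L_I=\pi^{-1}(I)$; finiteness of $M$ keeps this a finite direct product, which is all that closure gives you, so that point matters and you handle it. The crux, well-definedness and order preservation of $\Upsilon:\eta(w)\mapsto\pi(w)$, is done correctly by reading off the coordinate of the principal ideal $I_0=\langle\pi(v)\rangle$ and evaluating the syntactic preorder at the empty context; this uses the paper's convention that $[x]\leq_L[y]$ iff $uyv\in L\Rightarrow uxv\in L$ for all contexts, and antisymmetry of $\leq_M$ then yields well-definedness as a special case, as you say. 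Your argument also meshes nicely with the thesis itself: $\Upsilon$ is the same device used in the ordered analogue of Proposition \ref{syntactic}, so your proof makes the chapter self-contained where the thesis defers to the literature. One phrasing nit: you speak of exhibiting ``a division of $(M,\leq_M)$ by a finite product,'' but what your construction establishes (and what closure under division requires) is that $(M,\leq_M)$ \emph{divides} the product; the construction itself is unambiguous and right.
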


Now we give two explicit correspondences. The interested reader can find the proofs in \cite{Pin95}.

Let $\Gamma$ be a subset of the alphabet $\Sigma$. Define $L(\Gamma)$ as
\[ L(\Gamma) := \bigcap_{a \in \Gamma} \Sigma^* a \Sigma^*. \]
This is equivalent to saying that $L(\Gamma)$ is the set of words that contain at least one occurrence of each letter in $\Gamma$.

A monoid is idempotent if every element in the monoid is idempotent.

\begin{theorem}
A language in $\Sigma^*$ is a finite union of languages of the form $L(\Gamma)$ for $\Gamma \subseteq \Sigma$ if and only if it is recognized by a finite commutative idempotent ordered monoid $(M, \leq_M)$ in which the identity is the greatest element with respect to the order.
\end{theorem}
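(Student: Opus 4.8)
The plan is to route everything through the \emph{content} morphism. For a word $w \in \Sigma^*$ write $\alpha(w) \subseteq \Sigma$ for the set of letters occurring in $w$; then $\alpha : \Sigma^* \to (\mathcal{P}(\Sigma), \cup)$ is a morphism onto the power-set monoid under union, whose identity is $\emptyset$. The first thing I would record is a purely combinatorial normal form for the target class: since $w \in L(\Gamma)$ iff $\Gamma \subseteq \alpha(w)$, a finite union $\bigcup_i L(\Gamma_i)$ is exactly $\{w : \alpha(w) \in \mathcal{U}\}$ where $\mathcal{U} = \{A \subseteq \Sigma : A \supseteq \Gamma_i \textrm{ for some } i\}$ is the upward closure (under inclusion) of the $\Gamma_i$. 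Conversely every upward-closed family $\mathcal{U} \subseteq \mathcal{P}(\Sigma)$ arises this way, taking the $\Gamma_i$ to be its minimal elements, and the union is finite because $\Sigma$ is finite. Thus the theorem reduces to: $L$ is recognized by a monoid in the variety if and only if $L = \{w : \alpha(w) \in \mathcal{U}\}$ for some upward-closed $\mathcal{U}$.

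For the ``only if'' direction of this reduction I would simply exhibit a recognizer. Equip $\mathcal{P}(\Sigma)$ with the reverse-inclusion order, $A \leq B$ iff $A \supseteq B$. Then $(\mathcal{P}(\Sigma), \cup, \supseteq)$ is commutative and idempotent (since $A \cup A = A$), the order is stable (as $A \supseteq B$ gives $C \cup A \supseteq C \cup B$), and the identity $\emptyset$ is the greatest element, because $\emptyset \supseteq B$ forces $B = \emptyset$, i.e. $B \leq \emptyset$ for every $B$. The order ideals for $\supseteq$ are precisely the inclusion-upward-closed families, so $\mathcal{U}$ is an order ideal and $L = \alpha^{-1}(\mathcal{U})$ is recognized by this ordered monoid, which lies in the variety.

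For the ``if'' direction, let $\Phi : \Sigma^* \to M$ recognize $L = \Phi^{-1}(I)$ with $(M,\leq_M)$ commutative, idempotent, and $1_M$ greatest, $I$ an order ideal. The key algebraic fact is that $\Phi$ factors through the content: grouping the letters of $w$ and using commutativity together with $\Phi(a)^k = \Phi(a)$ (idempotence) gives
\[ \Phi(w) = \prod_{a \in \alpha(w)} \Phi(a), \]
so $\Phi(w)$, and hence membership in $L$, depends only on $\alpha(w)$. The order then supplies monotonicity: if $\alpha(w) \subseteq \alpha(w')$ then $\Phi(w') = \Phi(w) \cdot c$ with $c = \prod_{a \in \alpha(w') \setminus \alpha(w)} \Phi(a)$, and since $c \leq_M 1_M$ stability yields $\Phi(w') = \Phi(w)\,c \leq_M \Phi(w)\,1_M = \Phi(w)$. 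As $I$ is an order ideal, $w \in L$ together with $\alpha(w) \subseteq \alpha(w')$ force $w' \in L$. Hence $\mathcal{U} := \{\alpha(w) : w \in L\}$ is well defined and upward closed, and $L = \{w : \alpha(w) \in \mathcal{U}\}$, completing the reduction.

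I do not expect a serious obstacle: the content collapse and the order argument are both short. The one place to stay alert is the \emph{direction} of the order. It is essential that adding letters moves the image \emph{down} (that is, $\Phi(w') \leq_M \Phi(w)$), which is exactly what ``the identity is the greatest element'' secures; with the opposite convention the argument would prove the wrong closure property. The remaining care is bookkeeping: checking that every $A \subseteq \Sigma$ is realized as some $\alpha(w)$, so that $\mathcal{U}$ is genuinely a family of subsets, and noting the degenerate cases $\alpha(\epsilon) = \emptyset$ and $L(\emptyset) = \Sigma^*$, which correspond to $\epsilon \in L$ forcing $\mathcal{U} = \mathcal{P}(\Sigma)$ and $L = \Sigma^*$.
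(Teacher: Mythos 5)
Your proof is correct. Note that the paper itself does not prove this theorem: it is stated as one of two explicit correspondences, with the proofs deferred to \cite{Pin95}. So there is no in-paper argument to compare against, and your write-up serves as a self-contained replacement. The route you take is the natural elementary one: collapse recognition through the content morphism $\alpha$, identify the target class as $\alpha^{-1}(\mathcal{U})$ for inclusion-upward-closed $\mathcal{U}$, and use $(\mathcal{P}(\Sigma), \cup, \supseteq)$ as the canonical recognizer. Both directions are sound and use exactly the right hypotheses in the right places: the content collapse $\Phi(w) = \prod_{a \in \alpha(w)} \Phi(a)$ uses precisely commutativity and idempotence, and the monotonicity step $\Phi(w') \leq_M \Phi(w)$ uses precisely stability together with ``the identity is the greatest element,'' consistent with the paper's convention that order ideals are downward closed and serve as accepting sets. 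One sentence needs repair, though it is cosmetic: to show that $\emptyset$ is the greatest element of $(\mathcal{P}(\Sigma), \supseteq)$ you wrote that ``$\emptyset \supseteq B$ forces $B = \emptyset$,'' which is the wrong condition --- it shows $\emptyset$ is not strictly below any element (maximality), not that everything lies below it. The correct one-liner is that $B \supseteq \emptyset$ holds for every $B$, hence $B \leq \emptyset$ for all $B$. With that phrase fixed, the argument is complete, including the bookkeeping you flag at the end (every $A \subseteq \Sigma$ is realized as some $\alpha(w)$, and the degenerate cases $\alpha(\epsilon) = \emptyset$, $L(\emptyset) = \Sigma^*$).
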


A language $L$ is a \emph{shuffle ideal} if it satisfies the following property: if a word $w$ has a subword in $L$, then $w$ is in $L$.

\begin{theorem}
A language is a shuffle ideal if and only if it is recognized by a finite ordered monoid in which the identity is the greatest element.
\end{theorem}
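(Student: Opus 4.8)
The plan is to prove both implications directly by translating the order-theoretic condition ``$1_M$ is the greatest element'' into the combinatorial statement that inserting letters into a word can only \emph{decrease} its image in the monoid, and then pairing this with the fact that any accepting set is an order ideal (downward closed). The two directions will turn out to be essentially dual.

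For the ``if'' direction, suppose $L = \Phi^{-1}(I)$ for a morphism $\Phi : \Sigma^* \to M$ and an order ideal $I \subseteq M$, where $1_M$ is the greatest element of $(M, \leq_M)$. The key lemma I would establish first is: whenever $u$ is a subword of $w$, one has $\Phi(w) \leq_M \Phi(u)$. Writing $w = x_0 a_1 x_1 \cdots a_n x_n$ with $u = a_1 \cdots a_n$, each inserted factor satisfies $\Phi(x_i) \leq_M 1_M$, so by repeatedly invoking the stability of $\leq_M$ I can peel off the factors $\Phi(x_i)$ one at a time, replacing each by $1_M$, and obtain $\Phi(w) \leq_M \Phi(u)$ via transitivity. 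Then, if $w$ has a subword $u \in L$, we have $\Phi(u) \in I$, and since $I$ is downward closed and $\Phi(w) \leq_M \Phi(u)$, it follows that $\Phi(w) \in I$, i.e. $w \in L$. Hence $L$ is a shuffle ideal.

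For the ``only if'' direction, I would work with the syntactic ordered monoid $(M(L), \leq_L)$, which recognizes $L$ by the preceding proposition (and is finite, since a shuffle ideal over a finite alphabet is regular); it then suffices to show its identity $[\epsilon]$ is the greatest element. Unwinding the definition of the syntactic order, $[x] \leq_L [\epsilon]$ means $x \preceq_L \epsilon$, i.e. for all $u, v \in \Sigma^*$, $uv \in L \implies uxv \in L$. But $uv$ is a subword of $uxv$ (simply drop the middle factor $x$), so the shuffle-ideal property of $L$ delivers precisely this implication. Thus $[x] \leq_L [\epsilon]$ for every $x \in \Sigma^*$, so $[\epsilon] = 1_{M(L)}$ is the greatest element.

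There is no serious obstacle here; the proof is short. The only points requiring care are the iterated application of stability in the subword lemma, where one must delete the inserted factors $\Phi(x_i)$ in a consistent order and use both the left and right compatibility of $\leq_M$, and, in the converse, the observation that the single well-chosen specialization $y = \epsilon$ in the syntactic preorder is exactly the subword relation $uv \preceq uxv$.
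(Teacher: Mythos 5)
The paper itself does not prove this theorem: it is stated as one of two explicit correspondences whose proofs are deferred to \cite{Pin95}, so there is no internal proof to compare your attempt against. On its own merits, your argument is essentially the standard one and both directions are sound. The ``if'' direction is complete: from $\Phi(x_i) \leq_M 1_M$ and stability of the order you correctly deduce $\Phi(w) \leq_M \Phi(u)$ whenever $u$ is a subword of $w$, and downward closure of the order ideal $I$ then yields $w \in L$. The order-theoretic content of the ``only if'' direction is also right: in the syntactic ordered monoid, $x \preceq_L \epsilon$ unwinds to ``$uv \in L \implies uxv \in L$ for all $u,v$,'' which is exactly the shuffle-ideal property since $uv$ is a subword of $uxv$; hence $[\epsilon] = 1_{M(L)}$ is the greatest element of $(M(L), \leq_L)$.

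The one point that is a genuine gap is the parenthetical claim that ``a shuffle ideal over a finite alphabet is regular,'' which you need in order to conclude that $M(L)$ is \emph{finite}. This is true but far from obvious: it is a consequence of Higman's lemma, which says the subword order on $\Sigma^*$ is a well-quasi-order, so every upward-closed set (i.e.\ every shuffle ideal) has finitely many minimal elements and is therefore a finite union of languages of the form $\Sigma^* a_1 \Sigma^* a_2 \cdots a_k \Sigma^*$, hence regular. Without this input, your syntactic-monoid construction only produces a possibly infinite ordered monoid recognizing $L$ with the identity as greatest element, which is strictly weaker than what the theorem asserts. So you should either invoke Higman's lemma explicitly or prove the finite-union characterization of shuffle ideals; with that one citation supplied, the rest of your proposal stands as written.
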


We conclude this chapter by pointing out that our main interest is in positive varieties of languages (and consequently in ordered monoids) because regular languages having $O(f)$ non-deterministic communication complexity form a positive variety of languages (see next chapter). For the communication models studied in \cite{TT03}, regular languages having $O(f)$ communication complexity form a variety of languages and so the theory of ordered monoids is not necessary. From now on, we abandon unordered monoids and work with the more general theory of ordered monoids.


\chapter{Communication Complexity of Regular Languages}\label{complexityofregularlanguages}

The main goal of this chapter is to prove upper and lower bounds on the non-deterministic communication complexity of regular languages. In Section 4.1, we formally define the communication complexity of finite ordered monoids and regular languages. We prove two theorems that establish the soundness of an algebraic approach to the communication complexity of regular languages. In Section 4.2, we present a form of the definition of rectangular reductions and introduce local rectangular reductions. Then we present upper and lower bound results for regular languages in which the lower bounds are established using rectangular reductions from three functions we have seen in Chapter 2. We also state an intriguing conjecture that gives an exact characterization of the non-deterministic communication complexity of regular languages.

\section{Algebraic Approach to Communication Complexity}

In Chapter 2, we studied the communication complexity of functions that have 2 explicit inputs, each being an $n$-bit string. In order to define the communication complexity of a monoid and a regular language, we need to generalize the definition of communication complexity to include functions that have a single input string. Suppose a function $f$ has one $n$-bit string $x_1...x_n$ as input and let $A \cup B$ be a partition of $[n]$. Then the communication complexity of $f$ with respect to this partition is the communication complexity of $f$ when Alice receives the bits $x_i$ for all $i \in A$ and Bob receives the bits $x_j$ for all $j \in B$. For instance, in the non-deterministic model we denote this by $N^1_{AB} (f)$. In this case, the non-deterministic communication complexity of $f$ is defined as
\[ N^1(f) := \max_{A,B} N^1_{AB}(f) \]
where the maximum is taken over all possible partitions of $[n]$. The partition that achieves this maximum is called a \emph{worst case partition}.

Note that the communication complexity definitions and results seen thus far apply to functions that have inputs that are strings of length $n$ over any fixed alphabet. That is, the requirement of bit strings as inputs can be relaxed.

\index{communication complexity ! of a monoid}
We define the communication complexity of a finite ordered monoid using the worst-case partitioning notion. The communication complexity of a pair $(M,I)$ where $M$ is a finite ordered monoid and $I$ is an order ideal in $M$ is the communication complexity of the monoid evaluation problem corresponding to $M$ and $I$: Alice is given $m_1,m_3,...,m_{2n-1}$ and Bob is given $m_2,m_4,...,m_{2n}$ such that each $m_i \in M$. They want to decide if the product $m_1m_2...m_{2n}$ is in $I$. The communication complexity of $M$ is the maximum communication complexity of $(M,I)$ where $I$ ranges over all order ideals in $M$. Observe that if for example Alice were to receive $m_i$ and $m_{i+1}$, then she could multiply these monoid elements and treat them as one monoid element. This is why for a worst-case partition, Alice and Bob should not get consecutive monoid elements.


\index{communication complexity ! of a language}
Similarly, we define the communication complexity of a regular language $L \subseteq \Sigma^*$ as the communication complexity of the language problem corresponding to $L$: Alice is given $a_1, a_3, ..., a_{2n-1}$ and Bob is given $a_2, a_4, ... , a_{2n}$ such that each $a_i \in \Sigma \cup \{\epsilon\}$ where $\epsilon$ represents the empty word in $\Sigma^*$ (also referred to as the empty letter). They want to determine if $a_1a_2...a_{2n} \in L$. The way the input is distributed corresponds to the worst-case partition since we allow $a_i$ to be empty letters.

As mentioned in Chapter 1, our aim is to find functions $f_1(n), ... , f_k(n)$ such that each regular language has $\Theta (f_i(n))$ non-deterministic communication complexity for some $i \in \{1,2,...,k\}$. We would also like a characterization of the languages with $\Theta(f_i(n))$ complexity for all $i \in \{1,2,...,k\}$. The next two results show that such a characterization can be obtained by looking at the algebraic properties of regular languages.

\begin{theorem}\label{equalcomplexity}
Let $L \subseteq \Sigma^*$ be a regular language with $M(L) = M$. We have $N^1(M) = \Theta(N^1(L))$.
\end{theorem}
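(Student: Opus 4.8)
The plan is to prove the two bounds $N^1(L) \le N^1(M)$ and $N^1(M) = O(N^1(L))$ by exhibiting local, partition-respecting reductions between the language evaluation problem for $L$ and the monoid evaluation problems for the pairs $(M,J)$. Write $\Phi : \Sigma^* \to M$ for the syntactic morphism; it is surjective and $L = \Phi^{-1}(I)$ for the syntactic order ideal $I$. The key point to keep in mind is that $N^1(M) = \max_J N^1(M,J)$, the maximum over \emph{all} order ideals $J$ of $M$, so the real content is to control every ideal, not just $I$.

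For the direction $N^1(L) \le N^1(M)$, I would use that $\Phi$ gives a length-preserving local reduction: each player replaces every input letter $a_i$ by $\Phi(a_i) \in M$, and since $a_1\cdots a_{2n} \in L$ iff $\Phi(a_1)\cdots\Phi(a_{2n}) \in I$, any protocol for $(M,I)$ solves the $L$-problem, whence $N^1(L) \le N^1(M,I) \le N^1(M)$. For the matching inequality on the ideal $I$, I would reduce $(M,I)$ to the $L$-problem by expanding each monoid element into a representative word: since $\Phi$ is surjective and $M$ is finite, there is a constant $c$ so that every $m \in M$ has a representative $\tilde w_m$ of length exactly $c$ over $\Sigma \cup \{\epsilon\}$ with $\Phi(\tilde w_m) = m$ (pad short representatives with the empty letter, which $\Phi$ sends to $1_M$). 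The one delicate point is preserving the alternating worst-case partition: naively expanding one element into $c$ letters scrambles the odd/even ownership, so I would interleave the $c$ real letters of each block with $c$ empty letters, giving each monoid position $2c$ slots whose parities place all real letters of an Alice-owned (resp.\ Bob-owned) element into Alice's (resp.\ Bob's) slots. This is a local construction that blows up the instance size only by the constant factor $2c$, giving $N^1(M,I) = O(N^1(L))$, using that the relevant complexities are stable under constant rescaling of the input length.

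The main obstacle is upgrading this from the single ideal $I$ to the maximum $N^1(M) = \max_J N^1(M,J)$, and here I would exploit the definition of the syntactic order. Writing $C(m) := \{(a,b) \in M\times M : amb \in I\}$, the syntactic order satisfies $m \le_L m'$ iff $C(m) \supseteq C(m')$, so each principal ideal is exactly $\langle g\rangle = \bigcap_{(a,b)\in C(g)} a^{-1} I b^{-1}$, where $a^{-1} I b^{-1} := \{p : apb \in I\}$, and an arbitrary order ideal decomposes as $J = \bigcup_{g \in G_J}\langle g\rangle$ over its (constantly many) maximal elements $G_J$. Each two-sided quotient problem $a^{-1} I b^{-1}$ reduces to $(M,I)$ with additive constant overhead, since Alice can absorb $a$ into her first element and Bob can absorb $b$ into his last. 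The crucial observation is that non-determinism absorbs the Boolean structure of this decomposition almost for free: the outer union is handled by guessing the maximal element $g$ (a constant number of proof bits), and the inner intersection of the conditions $\{apb \in I\}_{(a,b)\in C(g)}$ is verified by concatenating their individual proofs and accepting only if every sub-protocol accepts. Because $|G_J|$ and each $|C(g)|$ are bounded by constants depending only on $M$, this yields $N^1(M,J) = O(N^1(M,I))$ for every $J$, hence $N^1(M) = O(N^1(M,I)) = O(N^1(L))$, completing the argument.

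The step I expect to require the most care is precisely this last one: checking that guessing $g$ (a union/OR) together with proof-concatenation over $C(g)$ (an intersection/AND of constantly many $N^1$-verifiable conditions) is itself $N^1$-verifiable with only constant-factor blow-up, and that the decomposition $\langle g\rangle = \bigcap_{(a,b)\in C(g)} a^{-1} I b^{-1}$ is an exact equality rather than a mere inclusion — this exactness is built into the very definition of the syntactic order via contexts. The partition-preserving $\epsilon$-padding and the constant-factor rescaling of the input length are routine, but must be stated carefully so that the claimed $\Theta$ is genuinely achieved.
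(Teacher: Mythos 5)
Your proposal is correct and follows essentially the same route as the paper's proof: the easy direction via applying the syntactic morphism letter by letter, and the hard direction via the context characterization of the syntactic order (your $C(g)$ plays exactly the role of the paper's finite witness set $K$), with ideal membership expressed as an OR over generators of an AND over constantly many contexts, each verified by an $L$-membership query on a linearly long, $\epsilon$-padded word built from constant-length representatives. The only cosmetic difference is that you factor through the intermediate problem $(M,I)$ and the quotient ideals $a^{-1}Ib^{-1}$ with monoid-level contexts, whereas the paper reduces directly to $L$-instances using word-level witnesses; the underlying argument is the same.
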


\begin{theorem}\label{formsvariety}
For any increasing function $f: \mathbb{N} \to \mathbb{N}$, the class of ordered monoids $\mathbf{V}$ such that each monoid $M \in \mathbf{V}$ satisfies $N^1(M)=O(f)$ forms a variety of ordered monoids.
\end{theorem}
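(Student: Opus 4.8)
The plan is to verify directly the two defining closure properties of a variety of ordered monoids, namely closure under division and closure under finite direct product. In each case the strategy is the same: fix an order ideal of the ``smaller'' monoid, and exhibit a non-deterministic protocol for its evaluation problem whose cost is $O(f)$, built out of protocols guaranteed by the hypothesis $N^1=O(f)$ on the ``larger'' monoid(s). The key structural observation is that the reductions I use are \emph{local} — each player rewrites only the symbols it already holds and no monoid element ever crosses from Alice to Bob — so the number of elements and the alternating worst-case partition are both preserved, and the resulting bound on $N^1$ (the maximum over order ideals and partitions) really is $O(f)$.

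\textbf{Closure under division.} Suppose $N \prec M$ with $N^1(M)=O(f)$, witnessed by a submonoid $M' \leq M$ and a surjective morphism of ordered monoids $\pi : M' \to N$. Fix an arbitrary section $\sigma : N \to M'$ (a right inverse of $\pi$) and an order ideal $I \subseteq N$. Given an instance $n_1,\dots,n_{2n}$ of the evaluation problem for $(N,I)$, each player locally replaces every element $n_i$ it holds by $\sigma(n_i)$. Since $\pi$ is a morphism, $\pi(\sigma(n_1)\cdots\sigma(n_{2n})) = n_1\cdots n_{2n}$, so the product lies in $I$ iff $\sigma(n_1)\cdots\sigma(n_{2n}) \in J$, where $J := \pi^{-1}(I)$. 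Because $\pi$ preserves the order, $J$ is an order ideal of $M'$. Let $\langle J\rangle$ be the order ideal it generates in $M$. As the product stays inside the submonoid $M'$ and $J$ is downward closed in $M'$ for the restricted order, one checks that $\langle J\rangle \cap M' = J$; hence membership of the product in $J$ and in $\langle J\rangle$ coincide. Thus the $(N,I)$ problem reduces, with the same inputs and partition, to the $(M,\langle J\rangle)$ problem, giving $N^1(N,I) \leq N^1(M) = O(f)$. Taking the maximum over order ideals $I$ yields $N^1(N)=O(f)$.

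\textbf{Closure under finite direct product.} By induction it suffices to treat two factors $M_1,M_2$ with $N^1(M_1),N^1(M_2)=O(f)$. Fix an order ideal $I \subseteq M_1 \times M_2$ and, using the generating-set fact, write $I = \langle (a_1,b_1),\dots,(a_r,b_r)\rangle$ with $r$ a constant depending only on $M_1\times M_2$. By the product order, the principal ideal of $(a_j,b_j)$ equals $\langle a_j\rangle \times \langle b_j\rangle$, so a product $(P_1,P_2) = (\prod x_i, \prod y_i)$ lies in $I$ iff there is an index $j$ with $P_1 \in \langle a_j\rangle$ and $P_2 \in \langle b_j\rangle$. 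This gives the protocol: God supplies the index $j$ (using $\log_2 r = O(1)$ bits) together with proof strings for the two coordinate subproblems; Alice and Bob then run, in parallel, optimal non-deterministic protocols verifying $P_1 \in \langle a_j\rangle$ in $M_1$ and $P_2 \in \langle b_j\rangle$ in $M_2$, each on the corresponding coordinates of the pairs they hold, and accept iff both accept. Completeness and soundness follow from the equivalence above and the correctness of the coordinate protocols, and the cost is $O(1) + N^1(M_1) + N^1(M_2) = O(f)$. Hence $N^1(M_1\times M_2)=O(f)$.

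\textbf{Main obstacle.} The local lifting and the parallel composition of protocols are routine; the genuine work, and the place where the ordered setting differs from the unordered variety arguments, is the careful handling of order ideals. For division I must confirm that pulling an order ideal back through $\pi$ and then generating an ideal in all of $M$ leaves the answer unchanged on products that remain in $M'$. For the direct product I must decompose an arbitrary order ideal of $M_1 \times M_2$ — which need not be a single product of ideals — into finitely many products of principal ideals, so that one $O(1)$-bit non-deterministic guess suffices to split the task across the two factors. In both reductions it is essential that the alternating worst-case partition is preserved, since that is exactly what licenses the passage from the per-instance bounds to $N^1=O(f)$ as defined.
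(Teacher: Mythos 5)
Your proof is correct, and on the direct-product half it takes a genuinely different --- and in fact more careful --- route than the paper. The paper proves closure under direct product by asserting that \emph{every} order ideal of $M_1 \times M_2$ has the form $I \times J$ with $I, J$ order ideals of the factors, and then simply runs the two coordinate tests; but that assertion is false in general (in the product order, the union of the principal ideals of two incomparable pairs, e.g. $\langle (a,b')\rangle \cup \langle (a',b)\rangle$, is downward closed yet not a product of ideals). Your decomposition of an arbitrary ideal into finitely many principal ideals $\langle (a_j,b_j)\rangle = \langle a_j\rangle \times \langle b_j\rangle$, followed by an $O(1)$-bit non-deterministic guess of the index $j$, handles arbitrary ideals and thus repairs this gap; the price is an additive constant, giving $N^1(M_1 \times M_2) \leq N^1(M_1) + N^1(M_2) + O(1)$ instead of the paper's stated $N^1(M_1)+N^1(M_2)$, which is immaterial for the $O(f)$ statement. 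Note that your fix exploits non-determinism, so it would not carry over verbatim to a deterministic analogue, whereas a deterministic fix (e.g. running the coordinate protocols for all $r$ generators) also works here at cost $r\,(N^1(M_1)+N^1(M_2))$. On the division half you follow essentially the paper's argument (lift each element through a fixed section of the surjection $\pi$ and pull the ideal back), except that the paper splits the step into $N^1(N) \leq N^1(M')$ followed by an unproved ``it is straightforward to check that $N^1(M') \leq N^1(M)$''; your verification that $\langle J\rangle \cap M' = J$, so that the $M'$-problem can be answered by the $M$-protocol, is exactly the content of that omitted step.
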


These two theorems together with the variety theorem imply that the class of languages $\mathcal{V}$ such that for any $L \in \mathcal{V}$ we have $N^1(L) = O(f)$ forms a positive variety of languages. So a characterization in terms of positive varieties is possible. Furthermore, observe that communication complexity of monoids parametrize the communication complexity of regular languages. Bounds on monoids yield bounds on regular languages and vice versa. When proving such bounds, carefully choosing between the two directions can considerably simplify the analysis. Usually we find that upper bound arguments are easier to establish with the combinatorial descriptions of a language whereas lower bound arguments are easier to establish with the algebraic descriptions of the corresponding syntactic monoid.

\begin{proof}[Proof of Theorem \ref{equalcomplexity}]
First we show that $N^1(L) = O(N^1(M))$. For this, we present a non-deterministic protocol for $L$. Suppose Alice is given $a_1a_2...a_n$ and Bob is given $b_1b_2...b_n$. Let $\Phi$ be the syntactic morphism and let $I$ be the accepting order ideal. The protocol is as follows: Alice computes $\Phi(a_1),...,\Phi(a_n)$ and Bob computes $\Phi(b_1),...,\Phi(b_n)$. Using the protocol for the monoid evaluation problem of $(M,I)$, they can decide at $O(N^1(M))$ cost if 
\[\Phi(a_1)\Phi(b_1)...\Phi(a_n)\Phi(b_n) = \Phi(a_1b_1...a_nb_n) \in I. \] 
This determines if $a_1b_1...a_nb_n$ is in $L$ or not.

Now we show that $N^1(M) = O(N^1(L))$. We present a protocol for $(M,I)$ where $I = \langle i_1,...,i_k \rangle$ is some order ideal in $M$. Before presenting the protocol, we first fix some notation and definitions. Again let $\Phi$ be the syntactic morphism. For each monoid element $m$, fix a word that is in the preimage of $m$ under $\Phi$, and denote it by $w_m$. Let $Y_a := \{(u,v) : uav \in L\}$. Recall that $a \preceq_L b$ if for all $u,v \in \Sigma^*$, $ubv \in L \implies uav \in L$. So 
\[ \Phi(a) \leq_L \Phi(b) \textrm{ iff } a \preceq_L b \textrm{ iff } Y_b \subseteq Y_a.\]
For each $Y_a$ and $Y_b$ with $Y_b \nsubseteq Y_a$, pick $(u,v)$ such that $(u,v) \in Y_b$ but $(u,v) \notin Y_a$. Let $K$ be the set of all these $(u,v)$. One can think of $K$ as containing a witness for $Y_b \nsubseteq Y_a$ for each such pair. Note that $K$ is finite. Now pad each $w_m$ and each word appearing in a pair in $K$ with the empty letter $\epsilon$ so that each of these words have the same length. Observe that this length is a constant that does not depend on the length of the input that Alice and Bob will receive.

Now assuming that Alice and Bob have agreed upon the definitions made thus far, the protocol is as follows. Suppose Alice is given $m^a_1,m^a_2,...,m^a_n$ and Bob is given $m^b_1,m^b_2,...,m^b_n$. For each $i_j$ they want to determine if $m^a_1 m^b_1 ... m^a_n m^b_n \leq_L i_j$. This is equivalent to determining if $w_{m^a_1 m^b_1 ... m^a_n m^b_n} \preceq_L w_{i_j}$, and this is equivalent to $w_{m^a_1} w_{m^b_1} ... w_{m^a_n} w_{m^b_n} \preceq_L w_{i_j}$. If this is not the case, then $Y_{w_{i_j}} \nsubseteq Y_{w_{m^a_1} w_{m^b_1} ... w_{m^a_n} w_{m^b_n}}$ and so there will be a witness of this in $K$, i.e. there exists $(u,v)$ such that $u w_{i_j} v \in L$ but $ u w_{m^a_1} w_{m^b_1} ... w_{m^a_n} w_{m^b_n} v \notin L$. If indeed $w_{m^a_1} w_{m^b_1} ... w_{m^a_n} w_{m^b_n} \preceq_L w_{i_j}$ then for each $(u,v) \in K$ with $u w_{i_j} v \in L$, we will have $u w_{m^a_1} w_{m^b_1} ... w_{m^a_n} w_{m^b_n} v \in L$. Using the protocol for $L$, Alice and Bob can check which of the two cases is true. The following shows how Alice and Bob's inputs look like before running the protocol for $L$. Note that each block has the same constant length.
\begin{center}
\includegraphics[scale=1]{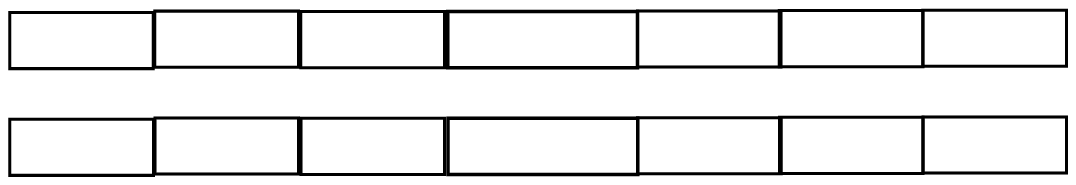}
\rput{0}(-12.5,1.75){Alice}
\rput{0}(-12.5,0.68){Bob}
\rput{0}(-10.7,1.75){$u$}
\rput{0}(-9.2,1.68){$w_{m_1^a}$}
\rput{0}(-7.75,1.75){$\epsilon \epsilon... \epsilon$}
\rput{0}(-6,1.75){...}
\rput{0}(-4.35,1.68){$w_{m_n^a}$}
\rput{0}(-2.9,1.75){$\epsilon \epsilon ... \epsilon$}
\rput{0}(-1.5,1.75){$v$}
\rput{0}(-10.7,0.68){$\epsilon \epsilon ... \epsilon$}
\rput{0}(-9.2,0.68){$\epsilon \epsilon ... \epsilon$}
\rput{0}(-7.75,0.62){$w_{m_1^b}$}
\rput{0}(-6,0.68){...}
\rput{0}(-4.35,0.68){$\epsilon \epsilon ... \epsilon$}
\rput{0}(-2.9,0.62){$w_{m_n^b}$}
\rput{0}(-1.5,0.68){$\epsilon \epsilon ... \epsilon$}
\end{center}
\end{proof}

The proof of Theorem \ref{formsvariety} follows from the following two lemmas. The first lemma shows that $\mathbf{V}$ is closed under finite direct product. The second lemma shows that $\mathbf{V}$ is closed under division of monoids.

\begin{lemma}
Let $(M,\leq_M)$ and $(N,\leq_N)$ be ordered monoids. Then $N^1(M \times N) \leq N^1(M) + N^1(N)$.
\end{lemma}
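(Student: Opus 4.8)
The plan is to exhibit a non-deterministic protocol for the evaluation problem of $(M \times N, I)$ for an arbitrary order ideal $I$ of $M \times N$, whose cost is $N^1(M) + N^1(N)$ (up to an additive constant depending only on $M$ and $N$). The starting point is to understand the structure of order ideals in a direct product. Since the order on $M \times N$ is componentwise and every order ideal in a finite monoid has a finite generating set, I can write $I = \langle (p_1,q_1), \ldots, (p_k,q_k) \rangle$, and then $(m,n) \le (p_j, q_j)$ holds exactly when $m \le_M p_j$ and $n \le_N q_j$. Hence
\[ I = \bigcup_{j=1}^{k} \big( \langle p_j \rangle_M \times \langle q_j \rangle_N \big), \]
where $\langle p_j \rangle_M$ and $\langle q_j \rangle_N$ denote the principal order ideals in $M$ and $N$. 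Thus, writing $P$ for the $M$-component of the full product $m^a_1 m^b_1 \cdots m^a_n m^b_n$ and $Q$ for its $N$-component, the input is accepted iff there is some $j$ with $P \in \langle p_j \rangle_M$ and $Q \in \langle q_j \rangle_N$.

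This rewriting reduces the product problem to a disjunction (over the constantly many generators $j$) of a conjunction of two independent evaluation problems, one in $M$ and one in $N$. The protocol then runs as follows. Alice's and Bob's inputs in $M \times N$ split coordinatewise into their $M$-components and their $N$-components, and crucially this split respects the alternating (worst-case) partition, so the induced $M$- and $N$-subproblems are again evaluation problems with the worst-case partition. God guesses an index $j$ (a constant number of bits, since $k$ depends only on $M \times N$) together with proof strings for the two subproblems; Alice and Bob then run the non-deterministic protocol witnessing $P \in \langle p_j \rangle_M$ and, in parallel, the one witnessing $Q \in \langle q_j \rangle_N$, accepting iff both accept. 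Because $\langle p_j \rangle_M$ is an order ideal of $M$, the first protocol costs at most $N^1(M, \langle p_j \rangle_M) \le N^1(M)$ by the definition of $N^1(M)$ as the maximum over order ideals; similarly the second costs at most $N^1(N)$. Summing yields the claimed bound.

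For correctness I would check both directions against the decomposition of $I$. If the product lies in $I$, some generator $j$ satisfies $P \in \langle p_j \rangle_M$ and $Q \in \langle q_j \rangle_N$, so accepting proofs for both subprotocols exist and God can supply them. If the product is not in $I$, then for every generator $j$ at least one of $P \in \langle p_j \rangle_M$, $Q \in \langle q_j \rangle_N$ fails; whatever index God announces, the corresponding subproblem is a negative instance, and its (sound) verification protocol rejects on every proof, so the combined protocol rejects.

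The main point to get right is the structural observation that an order ideal of a product is a finite union of products of principal ideals --- this is what turns the single product problem into independent $M$- and $N$-problems and lets non-determinism do the work by guessing the generator. The only cost not accounted for by $N^1(M) + N^1(N)$ is the constant needed to name the generator; since $k$ is a fixed constant of $M \times N$, this additive $O(1)$ is harmless, and it is precisely the kind of slack absorbed when passing to the $O(f)$ classes in Theorem \ref{formsvariety}.
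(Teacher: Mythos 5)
Your proof is correct, and it takes a genuinely different --- and in fact more careful --- route than the paper's. The paper's own proof is one line: it asserts that \emph{every} order ideal of $M \times N$ has the form $I \times J$ with $I$, $J$ order ideals of $M$ and $N$, and then tests the two coordinates separately, giving exactly $N^1(M)+N^1(N)$. That structural assertion is false in general for the componentwise order: take $M = N = \{1,e\}$ with $e$ idempotent and the stable order $e \leq 1$; then $\{(e,e),(e,1),(1,e)\}$ is an order ideal of $M \times N$ but is not of the form $I \times J$, since any product containing both $(e,1)$ and $(1,e)$ must contain $(1,1)$. Your decomposition $I = \bigcup_{j=1}^{k} \bigl( \langle p_j \rangle_M \times \langle q_j \rangle_N \bigr)$ into a finite union of products of principal ideals is the correct statement, and your protocol --- God names a generator $j$ and supplies proofs for the two coordinate subproblems, which, as you observe, inherit the alternating worst-case partition --- repairs the argument, with non-determinism absorbing the disjunction over $j$ and the conjunction handled by running both verifications and accepting iff both accept. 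What each approach buys: the paper's (flawed) argument gets the literal inequality $N^1(M \times N) \leq N^1(M) + N^1(N)$, but only for ideals that happen to factor; yours is sound for all order ideals at the price of the $\lceil \log_2 k \rceil$ extra bits naming $j$, so what you actually establish is $N^1(M \times N) \leq N^1(M) + N^1(N) + O(1)$. As you correctly note, this additive constant is harmless for Theorem \ref{formsvariety} --- closure of the class $\{M : N^1(M) = O(f)\}$ under direct product --- which is the only use made of this lemma.
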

\begin{proof}
Any order ideal in $M \times N$ will be of the form $I \times J$ where $I$ is an order ideal in $M$ and $J$ is an order ideal in $N$. Therefore testing whether a product of elements in $M \times N$ is in an order ideal $I \times J$ or not can be done by testing if the product of the first coordinate elements is in $I$ and testing if the product of the second coordinate elements is in $J$.
\end{proof}

\begin{lemma}\label{division}
Let $(M,\leq_M)$ and $(N,\leq_N)$ be ordered monoids such that $N \prec M$. Then $N^1(N) \leq N^1(M)$.
\end{lemma}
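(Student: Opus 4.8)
The plan is to mimic the unordered division argument from the proof of Proposition~\ref{syntactic}, but to track the order structure carefully so that the accepting set we build is a genuine order ideal of $M$. By definition of division of ordered monoids, there is a submonoid $M'$ of $M$ and a surjective morphism of ordered monoids $\psi : (M', \leq_M) \to (N, \leq_N)$. Since $N^1(N) = \max_J N^1(N,J)$ as $J$ ranges over all order ideals of $N$, it suffices to fix an arbitrary order ideal $J \subseteq N$ and exhibit a coordinatewise, communication-free reduction of the evaluation problem for $(N,J)$ to the evaluation problem for $(M,I)$ for a suitably chosen order ideal $I \subseteq M$.

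First I would set up the reduction on inputs. Using surjectivity of $\psi$, both players agree beforehand on a fixed preimage $\rho(s) \in M'$ with $\psi(\rho(s)) = s$ for each $s \in N$. Given an instance of $(N,J)$ in which Alice holds $n_1, n_3, \ldots, n_{2k-1}$ and Bob holds $n_2, n_4, \ldots, n_{2k}$, each player privately replaces each of their elements $n_i$ by $\rho(n_i) \in M'$. Because $\psi$ is a morphism and $M'$ is closed under the product, the element $m := \rho(n_1)\cdots\rho(n_{2k})$ lies in $M'$ and satisfies $\psi(m) = \psi(\rho(n_1))\cdots\psi(\rho(n_{2k})) = n_1 \cdots n_{2k}$. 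Thus the original product lands in $J$ if and only if $\psi(m) \in J$. This relabelling is applied coordinatewise and keeps each position with its original owner, so it respects every partition $(A,B)$, introduces no new communication, and does not change the number of monoid elements.

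Next I would choose the target order ideal. Let $I$ be the order ideal of $M$ generated by $\psi^{-1}(J)$, that is $I := \{ m'' \in M : m'' \leq_M m^* \text{ for some } m^* \in M' \text{ with } \psi(m^*) \in J \}$. By construction $I$ is downward closed, hence an order ideal of $M$, so $N^1(M,I) \leq N^1(M)$. The crucial claim, which I would verify, is that for every $m \in M'$ one has $m \in I$ if and only if $\psi(m) \in J$. The forward direction uses that $\psi$ preserves order and that $J$ is an order ideal: if $m \leq_M m^*$ with $\psi(m^*) \in J$, then $\psi(m) \leq_N \psi(m^*)$ forces $\psi(m) \in J$; the reverse direction is immediate since $m$ itself witnesses $m \in I$. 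Applying this to $m = \rho(n_1)\cdots\rho(n_{2k}) \in M'$ shows that deciding $n_1\cdots n_{2k} \in J$ is exactly deciding $m \in I$, so running an optimal non-deterministic protocol for $(M,I)$ solves $(N,J)$ at the same cost for each partition. Maximising over partitions and over $J$ then yields $N^1(N) \leq N^1(M)$.

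The main obstacle is the order-theoretic bookkeeping in the last paragraph: the evaluation problem for $M$ insists that its accepting set be an order ideal of all of $M$, whereas $\psi^{-1}(J)$ is only naturally an order ideal of the submonoid $M'$. Enlarging it to the ideal $I$ generated in $M$ is harmless precisely because the relevant product always lies in $M'$ and $\psi$ is order-preserving; the real content is checking that this enlargement does not pull any element of $M'$ whose image lies outside $J$ into $I$. Everything else is a routine relabelling and so contributes nothing to the communication cost.
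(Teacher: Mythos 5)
Your proof is correct and follows essentially the same route as the paper: fix a preimage of each element of $N$ under the dividing morphism, replace the input coordinatewise, and pull the accepting order ideal back through the morphism. The only difference is organizational --- the paper reduces to the evaluation problem for the submonoid $M'$ with ideal $\psi^{-1}(J)$ and then asserts that $N^1(M') \leq N^1(M)$ is \emph{straightforward to check}, whereas you merge those two steps by taking the downward closure of $\psi^{-1}(J)$ in $M$ and verifying it meets $M'$ exactly in $\psi^{-1}(J)$, thereby supplying explicitly the detail the paper leaves to the reader.
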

\begin{proof}
Since $N \prec M$, there is a surjective morphism $\phi$ from a submonoid $M'$ of $M$ onto $N$. Denote by $\phi^{-1}(n)$ a fixed element from the preimage of $n$.

Let $I$ be an order ideal in $N$. A protocol for $(N,I)$ is as follows. Alice is given $n_1^a, n_2^a,..., n_t^a$ and Bob is given $n_1^b, n_2^b,..., n_t^b$. They want to decide if $n_1^a n_1^b ... n_t^a n_t^b \in I$. This is equivalent to deciding if 
\[\phi^{-1}(n_1^a) \phi^{-1}(n_1^b) ... \phi^{-1}(n_t^a)\phi^{-1}(n_t^b) \in \phi^{-1}(I). \] 
It can be easily seen that $\phi^{-1}(I)$ is an order ideal in $M'$ so Alice and Bob can use the protocol for $M'$ to decide if the above is true. Therefore we have $N^1(N) \leq N^1(M')$. It is straightforward to check that $N^1(M') \leq N^1(M)$ and so $N^1(N) \leq N^1(M)$ as required.
\end{proof}

\section{Complexity Bounds for Regular Languages and Monoids}

In this section, we present upper and lower bounds for the non-deterministic communication complexity of certain classes of languages. Upper bounds are established by presenting an appropriate protocol whereas lower bound arguments are based on rectangular reductions from the following functions: LESS-THAN, PROMISE-DISJOINTNESS, INNER-PRODUCT. In Chapter 2, we have seen that each of these functions require linear communication in the non-deterministic model. We have also seen the definition of a rectangular reduction. We give here a form of this definition which specifically suits our needs in this section.

\begin{definition}
Let $f: \{0,1\}^n \times \{0,1\}^n \to \{0,1\}$, $M$ a finite ordered monoid and $I$ an order ideal in $M$. A \index{rectangular reduction} \emph{rectangular reduction} of length $t$ from $f$ to $(M,I)$ is a sequence of $2t$ functions $a_1,b_2,a_3,...,a_{2t-1},b_{2t}$ with $a_i : \{0,1\}^n \to M$ and $b_i : \{0,1\}^n \to M$ and such that for every $x,y \in \{0,1\}^n$ we have $f(x,y) = 1$ if and only if the product $a_1(x)b_2(y)...b_{2t}(y)$ is in $I$.

Such a reduction transforms an input $(x,y)$ of the function $f$ into a sequence of $2t$ monoid elements $m_1,m_2,...,m_{2t}$ where the odd-indexed $m_i$ are obtained as a function of $x$ only and the even-indexed $m_i$ are a function of $y$.
\end{definition}

We write $f \leq_r^t (M,I)$ to indicate that $f$ has a rectangular reduction of length $t$ to $(M,I)$. When $t = O(n)$ we omit the superscript $t$. It should be clear that if $f \leq_r^t (M,I)$ and $f$ has communication complexity $\Omega(g(n))$, then $(M,I)$ has communication complexity $\Omega(g(t^{-1}(n)))$.

Most of the reductions we use here are special kinds of rectangular reductions. We call these reductions \emph{local rectangular reductions}. In a local rectangular reduction, Alice converts each bit $x_i$ to a sequence of $s$ monoid elements $m_{i,1}^a , m_{i,2}^a , ... , m_{i,s}^a$ by applying a fixed function $a: \{0,1\} \to M^s$. Similarly Bob converts each bit $y_i$ to a sequence of $s$ monoid elements $m_{i,1}^b , m_{i,2}^b , ... , m_{i,s}^b$ by applying a fixed function $b: \{0,1\} \to M^s$. $f(x,y) = 1$ if and only if
\[ m_{1,1}^a m_{1,1}^b ... m_{1,s}^a m_{1,s}^b ...... m_{n,1}^a m_{n,1}^b ... m_{n,s}^a m_{n,s}^b \in I \]
We often view the above product as a word over $M$. The reduction transforms an input $(x,y)$ into a sequence of $2sn$ monoid elements. Let $a(z)_k$ denote the $k^{\textrm{th}}$ coordinate of the tuple $a(z)$. We specify this kind of local transformation with a $2 \times 2s$ matrix:
\[
\begin{array}{|c|c|c|c|c|c|}
\hline
a(0)_1 & b(0)_1 & ... & ... & a(0)_s & b(0)_s \\
\hline
a(1)_1 & b(1)_1 & ... & ... & a(1)_s & b(1)_s \\
\hline
\end{array}.
\]
It is convenient to see what happens for all possible values of $x_i$ and $y_i$ and the following table shows the word that corresponds to these possibilities. For simplicity let us assume $s$ is even.
\[
\begin{array}{|c c|c|}
\hline
x_i & y_i & \textrm{corresponding word} \\
\hline
0 & 0 & a(0)_1 b(0)_1 ... a(0)_s  b(0)_s \\
\hline
0 & 1 & a(0)_1 b(1)_1 a(0)_2 b(1)_2 ... a(0)_s b(1)_s \\
\hline
1 & 0 & a(1)_1 b(0)_1 a(1)_2 b(0)_2 ... a(1)_s b(1)_s \\
\hline
1 & 1 & a(1)_1 b(1)_1 ... a(1)_s b(1)_s \\
\hline
\end{array}
\]

Now we are ready to present the upper and lower bound results.

\begin{lemma}\label{commutativeupperbound}
If $M$ is commutative then $N^1(M) = O(1)$.
\end{lemma}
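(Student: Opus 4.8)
The plan is to exhibit a \emph{deterministic} protocol of constant cost and then invoke the fact that non-determinism is at least as powerful as determinism, so that $N^1 \leq D$. (Indeed, any deterministic protocol for $f$ is a non-deterministic protocol in which the proof string is simply ignored: when $f=1$ the protocol already outputs $1$, and when $f=0$ no proof can make it output $1$.) The central observation is that commutativity lets each player compress their entire share of the input into a single element of $M$.

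First I would fix an arbitrary order ideal $I \subseteq M$ and consider the evaluation problem in which Alice holds $m^a_1,\ldots,m^a_n$ and Bob holds $m^b_1,\ldots,m^b_n$, the goal being to decide whether $m^a_1 m^b_1 \cdots m^a_n m^b_n \in I$. Because $M$ is commutative, the value of this product is invariant under permutation of its factors, so it equals $(m^a_1 \cdots m^a_n)(m^b_1 \cdots m^b_n)$. Writing $p_A := m^a_1 \cdots m^a_n$ and $p_B := m^b_1 \cdots m^b_n$, the product lies in $I$ if and only if $p_A p_B \in I$. Thus all information in Alice's input relevant to the decision is captured by the single element $p_A$, and similarly $p_B$ captures Bob's.

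The protocol is then immediate: Alice computes $p_A$ (recall the players have unlimited computational power) and sends its name to Bob; Bob computes $p_B$, checks whether $p_A p_B \in I$, and announces the answer. Since $M$ is a fixed finite monoid, naming an element of $M$ costs $\lceil \log_2 |M| \rceil$ bits, a constant that does not depend on $n$. Hence the deterministic cost of $(M,I)$ is $O(1)$, and therefore so is $N^1$ of $(M,I)$. As this bound is uniform over all order ideals $I$, taking the maximum over $I$ yields $N^1(M) = O(1)$.

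There is essentially no obstacle here; the only point worth verifying is that commutativity genuinely permits the reordering across the alternating (worst-case) partition, which it does precisely because the product of all $2n$ elements does not depend on their order in a commutative monoid. This is exactly the feature that fails for noncommutative monoids, where the interleaving of Alice's and Bob's factors cannot be decoupled, and it explains why a nontrivial analysis is needed in the general case treated later.
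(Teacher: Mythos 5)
Your proposal is correct and matches the paper's own proof essentially verbatim: both exploit commutativity to rewrite the interleaved product as $(m^a_1\cdots m^a_n)(m^b_1\cdots m^b_n)$, have Alice send her partial product (a constant $\lceil\log_2|M|\rceil$ bits), and let Bob finish the check against the order ideal $I$. The paper leaves implicit the observation that this deterministic protocol also bounds $N^1$, which you spell out, but the argument is the same.
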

\begin{proof}
Let $I$ be an order ideal in $M$. Suppose Alice is given $m_1^a,...,m_n^a$ and Bob is given $m_1^b,...,m_n^b$. They want to decide if $m_1^a m_1^b ... m_n^a m_n^b \in I$. Since $M$ is commutative, this is equivalent to determining if $m_1^a m_2^a ... m_n^a m_1^b m_2^b ... m_n^b \in I$. So Alice can privately compute the product $m_1^a ... m_n^a$ and send the result $m$ to Bob. Observe that this requires a constant number of bits to be communicated since the size of $M$ is a constant. Bob can check if $m m_1^b ... m_n^b \in I$ and send the outcome to Alice.
\end{proof}

\begin{lemma}
If $M$ is not commutative then for any order on $M$ we have $N^1(M) = \Omega(\log n)$.
\end{lemma}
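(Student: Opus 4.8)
The plan is to exploit non-commutativity by embedding a LESS-THAN instance on indices into the monoid evaluation problem, and then to read off an $\Omega(\log n)$ lower bound from a fooling set of polynomial size. Since $M$ is not commutative, fix $a,b \in M$ with $ab \neq ba$. The first step is to produce, for the given stable order $\leq_M$, an order ideal $I$ that separates these two products. Because $ab$ and $ba$ are distinct, antisymmetry of $\leq_M$ forbids $ab \leq_M ba$ and $ba \leq_M ab$ from holding simultaneously. Hence at least one of the principal ideals $\langle ab \rangle = \{m : m \leq_M ab\}$ or $\langle ba \rangle$ contains exactly one of the two elements. If $ba \not\leq_M ab$ take $I = \langle ab \rangle$; otherwise $ba \leq_M ab$, so $ab \not\leq_M ba$, and I take $I = \langle ba \rangle$ and interchange the names of $a$ and $b$. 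After this relabelling I may assume there is an order ideal $I$ with $ab \in I$ and $ba \notin I$.

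Next I would describe the family of evaluation instances, all of word length $2n$, that realizes the order predicate on indices. Recall that in the monoid problem Alice controls the odd positions $m_1,m_3,\dots,m_{2n-1}$ and Bob the even positions $m_2,m_4,\dots,m_{2n}$, which is exactly the worst-case alternating partition used to define $N^1(M)$. Given a parameter $k \in [n]$, Alice sets her $k$-th element $m_{2k-1}$ to $a$ and all her other elements to $1_M$; given $\ell \in [n]$, Bob sets his $\ell$-th element $m_{2\ell}$ to $b$ and all his others to $1_M$. Since every remaining factor is the identity, the product collapses to $a$ and $b$ multiplied in the order in which they occur: if $k \leq \ell$ then $a$ (at position $2k-1$) precedes $b$ (at position $2\ell$) and the product equals $ab \in I$, whereas if $k > \ell$ the product equals $ba \notin I$. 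Thus the instance is accepting precisely when $k \leq \ell$, so the evaluation problem for $(M,I)$ computes the LESS-THAN-OR-EQUAL predicate on $[n]\times[n]$.

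Finally I would invoke the (non-deterministic form of the) fooling set method of Lemma~\ref{foolingset}. Take $F = \{(k,k) : k \in [n]\}$; every pair in $F$ is accepting since $k \leq k$. For distinct $(k,k)$ and $(k',k')$ with $k < k'$, the crossed input in which Alice plays $k'$ and Bob plays $k$ has $k' > k$ and is therefore rejecting, so the fooling set condition is satisfied. Consequently no two elements of $F$ can lie in a common $1$-monochromatic rectangle, so $C^1(M,I) \geq |F| = n$; using $N^1(M,I) \geq \log_2 C^1(M,I)$ this yields $N^1(M) \geq N^1(M,I) \geq \log_2 n = \Omega(\log n)$.

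The combinatorial core — the fooling set bound — is immediate once the construction is in place. The step I expect to require the most care is the first one: checking that for an \emph{arbitrary} stable order one can always select the order ideal and the labelling of $a,b$ so that $ab \in I$ while $ba \notin I$, and confirming that the alternating partition driving the construction is genuinely the worst-case partition in the definition of $N^1(M)$, so that the fooling set legitimately lower-bounds the complexity of the monoid rather than of some easier partition.
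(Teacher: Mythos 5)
Your proof is correct and takes essentially the same approach as the paper: the paper packages your construction as a length-$2^n$ rectangular reduction from LESS-THAN (after the same use of antisymmetry to assume WLOG $ba \nleq_M ab$ and take $I = \langle ab \rangle$, Alice places $a$ at the position indexed by her number and Bob places $b$ at his, so the product is $ab$ iff the indices are ordered), then invokes the known $\Omega(n)$ bound for $LT$, which itself rests on the very fooling set $\{(k,k)\}$ you use directly. The only difference is presentational: you inline the fooling-set argument on the monoid instances instead of routing through the reduction framework.
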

\begin{proof}
Since $M$ is not commutative, there must be $a,b \in M$ such that $ab \neq ba$. Therefore either $ab \nleq_M ba$ or $ba \nleq_M ab$. Without loss of generality assume $ba \nleq_M ab$. Let $I = \langle ab \rangle$. We show that $LT \leq_r^{2^n} (M, I)$. Alice gets $x$ and constructs a sequence of $2^n$ monoid elements in which $a$ is in position $x$ and $1_M$ is in everywhere else. Bob gets $y$ and constructs a sequence of $2^n$ monoid elements in which $b$ is in position $y$ and $1_M$ is in everywhere else. If $x \leq y$ then the product of the monoid elements will be $ab$ which is in $I$. If $x > y$ then the product will be $ba$ which is not in $I$.
\end{proof} 

Denote by $\mathcal{C}\it{om}$ the positive language variety corresponding to the variety of commutative monoids $\mathbf{Com}$. The above two results show that regular languages that have constant non-deterministic communication complexity are exactly those languages in $\mathcal{C}\it{om}$.

The next step is to determine if there are regular languages outside of $\mathcal{C}\it{om}$ that have $O(\log n)$ non-deterministic complexity. For this, we first need the definition of a polynomial closure.

The \index{polynomial closure} \emph{polynomial closure} of a set of languages $\mathcal{L}$ in $\Sigma^*$ is a family of languages such that each of these languages are finite unions of languages of the form
\[ L_0 a_1 L_1 ... a_k L_k \]
where $k \geq 0$, $a_i \in \Sigma$ and $L_i \in \mathcal{L}$. If $\mathcal{V}$ is a variety of languages, then we denote by $Pol(\mathcal{V})$ the class of languages such that for every alphabet $\Sigma$, $Pol(\mathcal{V})(\Sigma)$ is the polynomial closure of $\mathcal{V}(\Sigma)$. $Pol(\mathcal{V})$ is a positive variety of languages (\cite{PW95}).

\begin{lemma}
If $L$ is a language of $Pol(\mathcal{C}\it{om})$ then $N^1(L) = O(\log n)$.
\end{lemma}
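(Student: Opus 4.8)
The plan is to exploit the combinatorial shape of languages in $Pol(\mathcal{C}\it{om})$ together with the fact, already available to us, that every language in $\mathcal{C}\it{om}$ has constant non-deterministic complexity. Indeed, if $K \in \mathcal{C}\it{om}$ then $M(K)$ is commutative, so Lemma \ref{commutativeupperbound} gives $N^1(M(K)) = O(1)$ and Theorem \ref{equalcomplexity} then yields $N^1(K) = O(1)$. By definition a language $L \in Pol(\mathcal{C}\it{om})$ is a finite union of ``monomials''
\[ L_0 a_1 L_1 a_2 \cdots a_k L_k, \]
with $k$ a constant, each $a_i \in \Sigma$ and each $L_i \in \mathcal{C}\it{om}$. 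I would first treat a single monomial and then reassemble the union.

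For a single monomial $L_0 a_1 L_1 \cdots a_k L_k$, recall that the input word $a_1 a_2 \cdots a_{2n}$ (of length $2n$, with letters from $\Sigma \cup \{\epsilon\}$) is split between Alice (odd positions) and Bob (even positions). A word lies in the monomial exactly when it admits a factorisation $u_0 a_1 u_1 \cdots a_k u_k$ with $u_i \in L_i$. The non-deterministic protocol is then: God guesses the $k$ positions $p_1 < p_2 < \cdots < p_k$ at which the marker letters $a_1, \ldots, a_k$ are read and announces them to both players. Encoding $k$ positions in $\{1, \ldots, 2n\}$ costs $k \cdot O(\log n) = O(\log n)$ bits, and this is the only super-constant part of the whole proof. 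Knowing $p_1, \ldots, p_k$, each player checks locally that the letter it holds at position $p_j$ equals $a_j$, and both players know precisely which of their letters form each factor $u_i = a_{p_i+1} \cdots a_{p_{i+1}-1}$. They then verify $u_i \in L_i$ by running the constant-cost protocol for $L_i$ on $u_i$, for each of the $k+1$ factors. Since $k$ is a constant, all the letter checks and all $k+1$ factor verifications together communicate only $O(1)$ bits, so the total cost is $O(\log n)$.

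The union is handled by one more non-deterministic guess: God additionally names which of the constantly many monomials $L^{(j)}$ the word is claimed to belong to (costing $O(1)$ bits) and supplies the proof for that monomial; the players run the corresponding protocol. This adds only a constant, so $N^1(L) = O(\log n)$, as required.

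The step that needs the most care is the verification of the factors, and in particular the \emph{soundness} of the composite proof. For completeness, if $L$ contains the word then a genuine factorisation exists, God names the correct monomial and marker positions, the local letter checks succeed, and each $L_i$-protocol accepts because it is complete on inputs of $L_i$. For soundness I must argue that if \emph{some} guess leads to acceptance then the word really does lie in $L$: acceptance forces the markers to be correct and each factor to be genuinely in the corresponding $L_i$, which exhibits the required factorisation. One technical point to nail down is that each factor $u_i$ is a contiguous sub-word whose letters inherit a partition between Alice and Bob from the alternating distribution on the whole word; since $N^1(L_i)$ is defined as the maximum over all partitions (the alternating-with-$\epsilon$ convention capturing every partition), this induced partition is verified at cost at most $N^1(L_i) = O(1)$, which is exactly what the argument needs.
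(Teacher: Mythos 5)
Your proof is correct and follows essentially the same route as the paper's: God names the monomial $L_0 a_1 L_1 \cdots a_k L_k$ ($O(1)$ bits) and the $k$ marker positions ($O(\log n)$ bits), and the players verify each factor with the constant-cost protocol for languages in $\mathcal{C}\it{om}$ coming from Lemma \ref{commutativeupperbound}. Your write-up is in fact slightly more careful than the paper's, since you make explicit both the bridge from monoid to language complexity (via Theorem \ref{equalcomplexity}) and the point that the induced partition of each factor is covered by the worst-case-partition definition of $N^1(L_i)$.
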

\begin{proof}
Suppose $L$ is a union of $t$ languages of the form $L_0 a_1 L_1 ... a_k L_k$. Alice and Bob know beforehand the value of $t$ and the structure of each of these $t$ languages. So a protocol for $L$ is as follows.

Assume Alice is given $x_1^a ,..., x_n^a$ and Bob is given $x_1^b , ... , x_n^b$. God communicates to Alice and Bob which of the $t$ languages the word $x_1^a x_1^b ... x_n^a x_n^b$ resides in. This requires a constant number of bits to be communicated since $t$ is a constant. Now that Alice and Bob know the $L_0 a_1 L_1 ... a_k L_k$ the word is in, God communicates the positions of each $a_i$. This requires $k \log n$ bits of communication where $k$ is a constant. The validity of the information communicated by God can be immediately checked by Alice and Bob. All they have to do is check if the words in between the $a_i$'s belong to the right languages. Since these languages are in $\mathcal{C}\it{om}$, this can be done in constant communication as proved in Lemma \ref{commutativeupperbound}. Therefore in total we require only $O(\log n)$ communication. 
\end{proof}

From the above proof, we see that we can actually afford to communicate $O(\log n)$ bits to check that the words between the $a_i$'s belong to the corresponding language. In other words, we could have $L_i \in Pol(\mathcal{C}\it{om})$. Note that this does not matter since $Pol(Pol(\mathcal{C}\it{om})) = Pol(\mathcal{C}\it{om})$.


Denote by ${x \choose L_0 a_1 L_1 ... a_k L_k}$ the number of factorizations of the word $x$ as $x = w_0 a_1 w_1 ... a_k w_k$ with $w_i \in L_i$. When the $a_i$ and the $L_i$ are such that for any $x$ we have ${x \choose L_0 a_1 L_1 ... a_k L_k} \in \{0,1\}$, then we say that the concatenation $L_0 a_1 L_1 ... a_k L_k$ is \emph{unambiguous}. We denote by $UPol(\mathcal{V})$ the variety of languages that is \emph{disjoint} unions of the unambiguous concatenations $L_0 a_1 L_1 ... a_k L_k$ with $L_i \in \mathcal{V}$ (in some sense, there is only one witness for $x$ in $UPol(\mathcal{V})$). Similarly we denote by $M_pPol(\mathcal{V})$ the language variety generated by the languages
\[
\{ x | {x \choose L_0 a_1 L_1 ... a_k L_k} = j \mod p \}
\]
for some $0 \leq j \leq p-1$ and $L_i \in \mathcal{V}$. Observe that for $Pol(\mathcal{C}\it{om})$ we have ${x \choose L_0 a_1 L_1 ... a_k L_k}$ unrestricted.

Denote by $UP$ the subclass of $NP$ in which the number of accepting paths (or number of witnesses) is exactly one. We know that $UP^{cc} = P^{cc}$ (\cite{Yan91}). From $\cite{TT03}$ we know that regular languages having $O(\log n)$ deterministic communication complexity are exactly those languages in $UPol(\mathcal{C}\it{om})$ and regular languages having $O(\log n)$ Mod$_p$ counting communication complexity are exactly those languages in $M_pPol(\mathcal{C}\it{om})$. Furthermore, it was shown that any regular language outside of $UPol(\mathcal{C}\it{om})$ has linear deterministic complexity and any regular language outside of $M_pPol(\mathcal{C}\it{om})$ has linear Mod$_p$ counting complexity. So with respect to regular languages, $UP^{cc} = P^{cc} = UPol(\mathcal{C}\it{om})$ and $Mod_pP^{cc} = M_pPol(\mathcal{C}\it{om})$.  Similarly we conjecture that with respect to regular languages $NP^{cc} = Pol(\mathcal{C}\it{om})$. 

\begin{conjecture}\label{conj}
If $L \subseteq \Sigma^*$ is a regular language that is not in $Pol(\mathcal{C}\it{om})$, then $N^1(L) = \Omega(n)$. Thus we have
\[
N^1(L) = \left\{
          \begin{array}{ll}
          O(1) & \quad \textrm{if and only if $L \in \mathcal{C}\it{om}$;}  \\
          \Theta(\log n) & \quad \textrm{if and only if $L \in Pol(\mathcal{C}\it{om})$ but not in $\mathcal{C}\it{om}$;} \\
	  \Theta(n) & \quad \textrm{otherwise.}
          \end{array}
          \right.
\]
\end{conjecture}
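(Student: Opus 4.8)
The three upper bounds in the displayed classification are already in hand: Lemma~\ref{commutativeupperbound} gives $N^1(M)=O(1)$ when $M$ is commutative, the non-commutativity lemma gives the matching $\Omega(\log n)$ bound, and the $Pol(\mathcal{C}\it{om})$ lemma gives the $O(\log n)$ upper bound. Together with Theorem~\ref{equalcomplexity} these settle the ``$O(1)$ iff $L\in\mathcal{C}\it{om}$'' line and the ``$\Theta(\log n)$'' line (the matching $\Omega(\log n)$ coming from non-commutativity of $M(L)$, which holds whenever $L\notin\mathcal{C}\it{om}$). So the entire remaining content of the conjecture is the single implication $L\notin Pol(\mathcal{C}\it{om})\implies N^1(L)=\Omega(n)$, and by Theorem~\ref{equalcomplexity} it suffices to prove $N^1(M)=\Omega(n)$ for $M=M(L)$ the ordered syntactic monoid.

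My plan is to route this through the description of $Pol(\mathcal{C}\it{om})$ as a variety of \emph{ordered} monoids. Since $Pol(\mathcal{C}\it{om})$ is a positive variety (via \cite{PW95}), the variety theorem assigns it an ordered-monoid variety $\mathbf{W}$, and by the Pin--Weil theory of polynomial closure this $\mathbf{W}$ is ultimately defined by an explicit sequence of identities $u_n\le v_n$ (the characteristic ones having the shape $x^{\omega}\le x^{\omega}y x^{\omega}$ restricted to pairs generating a commutative image). I would first pin down this identity system for $\mathbf{Com}$, so that $M(L)\notin Pol(\mathcal{C}\it{om})$ becomes the statement that $(M(L),\le_L)$ \emph{violates} one of these identities: some substitution of the variables by monoid elements yields $p=\Phi(u)$ and $q=\Phi(v)$ with $p\not\le_L q$. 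Choosing the order ideal $I=\langle q\rangle$ then separates $p$ from $q$, and the witnessing data (an idempotent $e=x^{\omega}$ together with an element $y$ satisfying $e\not\le_L eye$) is exactly the raw material for a reduction.

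The second step is to convert such a witness into a \emph{local} rectangular reduction of linear length from one of the three hard functions. When the obstruction is mere order-theoretic non-commutativity one already gets $LT\le_r^{2^n}(M,I)$, but this only yields $\Omega(\log n)$; the linear bound must come from $PDISJ$ or $IP_q$, which admit constant-length-per-bit encodings. The idea is to use $e$ as a neutral background: Alice and Bob encode the coordinates of their sets (resp.\ vectors) by inserting $y$-type and $e$-type blocks at the appropriate positions, so that the global product equals an $I$-element exactly when the sets are disjoint (resp.\ the inner product vanishes mod $q$) and falls on the other side of $I$ as soon as a coordinate witnesses an intersection (resp.\ a nonzero residue); the correct orientation of $I$ relative to the two product values is dictated by which side the witness $p\not\le_L q$ places us on. Because each bit is handled by a fixed constant-size block, the reduction has length $t(n)=O(n)$, so the $\Omega(n)$ hardness of $PDISJ$ and $IP_q$ transfers through $t^{-1}(n)=\Omega(n)$, giving $N^1(M)=\Omega(n)$.

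The hard part will be step two carried out uniformly over all failure modes, and this is exactly why the statement is still a conjecture rather than a theorem. A finite ordered monoid can mix aperiodic (``threshold counting'') and group (``modular counting'') behaviour, and the way an identity of $\mathbf{W}$ fails need not factor cleanly into one or the other: the aperiodic failures seem to call for a $PDISJ$-style reduction while the modular failures call for an $IP_q$-style one, and there may be genuinely mixed obstructions requiring a new hard function or a combined reduction. Moreover, the one-sidedness of the non-deterministic model forces every reduction to respect the direction of the order ideal, so a symmetric $DISJ$-style fooling argument does not automatically apply and must be replaced by the promise variant $PDISJ$ with its rectangle-covering bound. Establishing that \emph{some} hard-function reduction can always be extracted from a single violated identity, simultaneously across the aperiodic and modular worlds, is the principal obstacle; the linear lower-bound results proved in this chapter cover important sub-cases of this program but not its entirety.
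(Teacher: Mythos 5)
The statement you were asked to prove is labelled a \emph{conjecture} in the paper, and the paper offers no proof of it: it proves only the upper-bound side of the classification and three partial linear lower bounds (the shuffle condition of Lemma~\ref{shuffle}, non-commutative groups, and $T_q$ monoids), then explicitly leaves the implication $L \notin Pol(\mathcal{C}\it{om}) \implies N^1(L) = \Omega(n)$ open. Your proposal is an accurate reconstruction of the paper's program --- the bookkeeping of the $O(1)$, $O(\log n)$, $\Omega(\log n)$ bounds and the reduction to the syntactic monoid via Theorem~\ref{equalcomplexity} are exactly right, and your plan of extracting a witness $u,v$ with $eval(u^{\omega}vu^{\omega}) \nleq eval(u^{\omega})$ from the Pin--Weil description (the paper's Lemma~\ref{description}) and converting it into a local rectangular reduction from $PDISJ$ or $IP_q$ is precisely how the paper obtains its partial results. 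But it is not a proof, and you say so yourself: your ``step two'' is where the argument stops, and no uniform way of turning an arbitrary violated identity into a linear reduction is given.

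It is worth noting that this gap is not merely an unfinished technicality but is demonstrably the hard core of the problem: the paper's appendix exhibits the language $L_5$ (outside $Pol(\mathcal{C}\it{om})$, yet inside $Pol(\mathcal{N}\it{il}_2)$) and proves in Proposition~\ref{noreduction} that \emph{no} local reduction from $PDISJ$ of the natural shape dictated by Lemma~\ref{description} exists for it. So the ``mixed obstruction'' failure mode you anticipate actually occurs, and the paper's proposed escape route is a new promise function $PIP_2$ whose hardness is itself open. In short, your assessment of where the difficulty lies agrees with the paper's; just be aware that what you have written is a research plan that coincides with the authors' own, not a proof, and that the plan is already known to require a genuinely new hard function or lower-bound technique beyond the three reductions available in the paper.
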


As mentioned in Chapter 2, the gap between deterministic and non-deterministic communication complexity of a function can be exponentially large. However, it has been shown that the deterministic communication complexity of a function $f$ is bounded above by the product $c N^0(f) N^1(f)$ for a constant $c$ (Theorem \ref{detvsnondet}), and that this bound is optimal. The above conjecture, together with the result of \cite{TT03} implies the following much tighter relation for regular languages.

\begin{corollary}[to Conjecture \ref{conj}]
If $L$ is a regular language then $D(L) = \max \{N^1(L), N^0(L)\}$.
\end{corollary}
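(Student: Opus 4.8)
The plan is to prove the two inequalities separately, reading the equality asymptotically (every complexity in sight lies in $\{\Theta(1),\Theta(\log n),\Theta(n)\}$, so $\max$ is well defined up to constants). The easy inequality $D(L)\geq\max\{N^1(L),N^0(L)\}$ holds for any function: a deterministic protocol for $L$ yields non-deterministic protocols for both the $1$-inputs and the $0$-inputs (take the empty proof string and accept according to the deterministic output), so $D(L)\geq N^1(L)$ and $D(L)\geq N^0(L)$. The content is the reverse bound $D(L)=O(\max\{N^1(L),N^0(L)\})$, which I would obtain by fitting together the three trichotomies now available: the one for $N^1$ from Conjecture \ref{conj}, the one for $N^0$ derived below, and the one for $D$ from \cite{TT03}.

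First I would set up the dictionary for $N^0$. The function computed for $\overline{L}=\Sigma^*\setminus L$ is the negation of the one computed for $L$, so $N^0(L)=N^1(\overline{L})$, and Conjecture \ref{conj} applies to $\overline{L}$. The syntactic ordered monoid of $\overline{L}$ has the \emph{same} underlying monoid $M(L)$ with the reversed order; in particular $\overline{L}\in\mathcal{C}\it{om}$ iff $M(L)$ is commutative iff $L\in\mathcal{C}\it{om}$ (by Lemma \ref{commutativeupperbound} the order is irrelevant for membership in $\mathcal{C}\it{om}$). Hence, applying Conjecture \ref{conj} to $\overline{L}$,
\[
N^0(L)=\begin{cases}\Theta(1) & \overline{L}\in\mathcal{C}\it{om},\\[2pt] \Theta(\log n) & \overline{L}\in Pol(\mathcal{C}\it{om})\setminus\mathcal{C}\it{om},\\[2pt] \Theta(n) & \overline{L}\notin Pol(\mathcal{C}\it{om}).\end{cases}
\]

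Next I would run a case analysis driven by the \cite{TT03} trichotomy for $D$, namely $D(L)=\Theta(1)$ when $L\in\mathcal{C}\it{om}$, $D(L)=\Theta(\log n)$ when $L\in UPol(\mathcal{C}\it{om})\setminus\mathcal{C}\it{om}$, and $D(L)=\Theta(n)$ otherwise. The algebraic fact that reconciles the three pictures is the identity
\[
UPol(\mathcal{C}\it{om})=\{\,L : L\in Pol(\mathcal{C}\it{om}) \text{ and } \overline{L}\in Pol(\mathcal{C}\it{om})\,\}.
\]
Granting it, the cases are routine. If $L\in\mathcal{C}\it{om}$, all three quantities are $\Theta(1)$. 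If $L\in UPol(\mathcal{C}\it{om})\setminus\mathcal{C}\it{om}$, then both $L$ and $\overline{L}$ lie in $Pol(\mathcal{C}\it{om})$ and neither lies in $\mathcal{C}\it{om}$, so $N^1(L)=N^0(L)=\Theta(\log n)=D(L)$. If $L\notin UPol(\mathcal{C}\it{om})$, then by the identity $L\notin Pol(\mathcal{C}\it{om})$ or $\overline{L}\notin Pol(\mathcal{C}\it{om})$, forcing $N^1(L)=\Theta(n)$ or $N^0(L)=\Theta(n)$ respectively, which matches $D(L)=\Theta(n)$.

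The main obstacle is precisely the inclusion that $L\in Pol(\mathcal{C}\it{om})$ and $\overline{L}\in Pol(\mathcal{C}\it{om})$ together imply $L\in UPol(\mathcal{C}\it{om})$; the converse inclusions are immediate, since $UPol(\mathcal{C}\it{om})$ is a variety of languages and hence closed under complementation. This is the one place where genuine structural information about the polynomial-closure operators enters, and I would invoke it as the known characterization of the unambiguous polynomial closure of $\mathcal{C}\it{om}$ rather than reprove it. I would also double-check the one soft point in the $N^0$ dictionary: that passing to the complement reverses exactly the order of the syntactic ordered monoid while leaving the underlying monoid (and thus membership in $\mathcal{C}\it{om}$, and the $Pol/\overline{Pol}$ bookkeeping) behaving as claimed.
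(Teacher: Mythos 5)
Your proposal is correct and follows essentially the same route the paper indicates: combining the trichotomy from Conjecture \ref{conj} (applied to both $L$ and $\overline{L}$, using $N^0(L)=N^1(\overline{L})$), the deterministic trichotomy of \cite{TT03}, and the identity $Pol(\mathcal{V})\cap co\textrm{-}Pol(\mathcal{V})=UPol(\mathcal{V})$ from \cite{Pin97}, which is exactly the fact the paper invokes immediately after stating the corollary. Your case analysis and the observation that membership in $\mathcal{C}\it{om}$ is complement-invariant fill in the details the paper leaves implicit, with no gaps.
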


For any variety $\mathcal{V}$, we have that $Pol(\mathcal{V}) \cap co-Pol(\mathcal{V}) = UPol(\mathcal{V})$ (\cite{Pin97}). This implies that $N^1(L) = O(\log n)$ and $N^0(L) = O(\log n)$ iff $D(L) = O(\log n)$, proving a special case of the above corollary.

An important question that arises in this context is the following. What does it mean to be outside of $Pol(\mathcal{C}\it{om})$? In order to prove a linear lower bound for the regular languages outside of $Pol(\mathcal{C}\it{om})$, we need a convenient algebraic description for the syntactic monoids of these languages since (ignoring the exceptions) lower bound arguments rely on these algebraic properties. One such description exists based on a result of \cite{PW95} that describes the ordered monoid variety corresponding to $Pol(\mathcal{C}\it{om})$. Before stating this description, we fix some notation.

If $M$ is a monoid, we write $M = \langle G,R\rangle$ to indicate that $M$ has the presentation $\langle G,R\rangle$ where $G$ is the generating set and $R$ is the set of relations. For instance, a cyclic group of order $n$ has the presentation $\langle \{x\}, x^n = 1 \rangle$ and the dihedral group of order $2n$ has the presentation $\langle \{x,y\}, x^n = 1, y^2 = 1, xyx = y \rangle$. For any $w \in G^*$, we denote by $eval(w)$ the element of $M$ that $w$ corresponds to. Observe that the transformation monoid corresponding to an automaton has a presentation in which the generating set consists of the letters of the alphabet. The relations depend on the particular automaton and can be determined by analyzing the state transition function each word induces. 


\begin{lemma}\label{description}
Suppose $L$ is not in $Pol(\mathcal{C}\it{om})$ and $M = \langle G,R \rangle$ is the syntactic ordered monoid of $L$ with exponent $\omega$. Then there exists $u,v \in G^*$ such that
\begin{enumerate}
\item[\emph{(i)}] for any monoid $M' \in \mathbf{Com}$ and any morphism $\phi: M \to M'$, we have $\phi(eval(u)) = \phi(eval(v))$ and $\phi(eval(u)) = \phi(eval(u^2))$,
\item[\emph{(ii)}] $eval(u^\omega v u^\omega) \nleq eval(u^\omega)$.
\end{enumerate}
\end{lemma}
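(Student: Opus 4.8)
The plan is to read the conclusion off directly from the Pin--Weil description of the ordered monoid variety attached to $Pol(\mathcal{C}\it{om})$, so that the argument is a translation exercise rather than a computation. Recall from \cite{PW95} that the variety of ordered monoids $\mathbf{W}$ corresponding to the positive variety $Pol(\mathcal{C}\it{om})$ is defined by the family of inequalities $x^\omega y x^\omega \le x^\omega$, imposed exactly on those substitutions $x \mapsto s$, $y \mapsto t$ for which $s$ and $t$ become equal in every commutative quotient and $s$ becomes idempotent there. Concretely, writing $\eta$ for the canonical projection of $M$ onto its largest commutative quotient (every morphism $\phi$ from $M$ into a commutative monoid factors through $\eta$, since $\phi(mm')=\phi(m'm)$), membership $(M,\le)\in\mathbf{W}$ is equivalent to the statement: for all $s,t\in M$ with $\eta(s)=\eta(t)$ and $\eta(s)=\eta(s^2)$ one has $s^\omega t s^\omega \le s^\omega$.

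The first step is to pass from the language-side hypothesis to the monoid-side. Since $Pol(\mathcal{C}\it{om})$ is a positive variety of languages, the Variety Theorem for ordered monoids (\cite{Pin95}) gives that $L\in Pol(\mathcal{C}\it{om})$ if and only if its syntactic ordered monoid $(M,\le)=(M(L),\le_L)$ lies in $\mathbf{W}$. The hypothesis $L\notin Pol(\mathcal{C}\it{om})$ therefore means $(M,\le)\notin\mathbf{W}$, so the characterizing condition above must fail for some pair: there exist $s,t\in M$ satisfying $\eta(s)=\eta(t)$ and $\eta(s)=\eta(s^2)$, yet $s^\omega t s^\omega \nleq s^\omega$.

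It then remains to lift $s$ and $t$ to words and to check that the two failed conditions are precisely (i) and (ii). Because $G$ generates $M$, the evaluation map $eval:G^*\to M$ is onto, so I may choose $u,v\in G^*$ with $eval(u)=s$ and $eval(v)=t$. For (i): given any commutative $M'$ and morphism $\phi:M\to M'$, $\phi$ factors as $\phi=\psi\circ\eta$, so $\eta(s)=\eta(t)$ forces $\phi(eval(u))=\phi(eval(v))$, while $\eta(s)=\eta(s^2)$ forces $\phi(eval(u))=\phi(eval(u^2))$ (using $eval(u^2)=s^2$). For (ii): since $eval$ is a morphism and $\omega$ is an exponent of $M$, we have $eval(u^\omega v u^\omega)=s^\omega t s^\omega$ and $eval(u^\omega)=s^\omega$, so $s^\omega t s^\omega\nleq s^\omega$ is exactly $eval(u^\omega v u^\omega)\nleq eval(u^\omega)$.

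The only genuine difficulty is importing the characterization of $\mathbf{W}$ in the correct element-wise form and in the correct direction of inequality; everything after that is bookkeeping. I would take care to confirm from \cite{PW95} that the side conditions selecting which inequalities are imposed are exactly ``$\mathbf{Com}$-equivalent and $\mathbf{Com}$-idempotent'', since these are what make the failed instance match (i) verbatim, and to confirm that the inequality is $x^\omega y x^\omega\le x^\omega$ rather than its reverse, so that its failure is exactly the relation $\nleq$ asserted in (ii).
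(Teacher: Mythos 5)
Your proposal is correct and matches the paper's treatment: the paper gives no independent proof of Lemma \ref{description}, stating it as a direct consequence of the Pin--Weil description (cited from \cite{PW95}) of the ordered monoid variety corresponding to $Pol(\mathcal{C}\it{om})$, which is exactly the characterization you invoke before the variety-theorem and word-lifting bookkeeping. The single point you flag as needing care---that the side conditions are ``equal and idempotent in every commutative image'' and that the inequality runs as $x^\omega y x^\omega \le x^\omega$---is likewise the only content the paper imports from \cite{PW95}, and it is stated there in the same element-wise form (see the restricted version quoted in the Appendix).
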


Although we cannot yet prove the conjecture, we can still show linear lower bounds for certain classes of regular languages outside of $Pol(\mathcal{C}\it{om})$. Our first lower bound captures regular languages that come very close to the description given in the previous lemma.

A word $w$ is a \index{shuffle} \emph{shuffle} of $n$ words $w_1,...,w_n$ if 
\[ w = w_{1,1}w_{2,1}...w_{n,1}w_{2,1}w_{2,2}...w_{n,2}w_{1,k}w_{2,k}...w_{n,k} \]
with $k \geq 0$ and $w_{i,1}w_{i,2}...w_{i,k} = w_i$ is a partition of $w_i$ into subwords for $1 \leq i \leq n$.

\begin{lemma}\label{shuffle}
If $M = \langle G,R \rangle$ and $u,v \in G^*$ is such that 
\begin{enumerate}
\item[\emph{(i)}] $u = w_1w_2$ for $w_1,w_2 \in G^*$,
\item[\emph{(ii)}] $v$ is a shuffle of $w_1$ and $w_2$,
\item[\emph{(iii)}] $eval(u)$ is an idempotent,
\item[\emph{(iv)}] $eval(u v u) \nleq eval(u)$,
\end{enumerate}
then $N^1(M) = \Omega(n)$.
\end{lemma}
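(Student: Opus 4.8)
The plan is to exhibit a linear-length rectangular reduction from PROMISE-DISJOINTNESS to $(M,I)$, where $I = \langle eval(u)\rangle$ is the order ideal generated by the idempotent $e := eval(u)$. Writing $N$ for the input length of $PDISJ$, I want to transform an instance $(x,y)$ into a product of $O(N)$ monoid elements so that the product evaluates to $e$ (hence lies in $I$) exactly when $x\cap y=\emptyset$, and evaluates to $eval(uvu)$ (which lies outside $I$ by hypothesis (iv)) when $|x\cap y|=1$. Since $N^1(PDISJ)=\Omega(n)$ (established in Chapter~2) and the reduction has length $O(N)$, this yields $N^1(M,I)=\Omega(n)$ and therefore $N^1(M)=\Omega(n)$.

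The heart of the construction is a constant-size local gadget, one per coordinate $i$, that produces the word $u$ when $(x_i,y_i)\neq(1,1)$ and the word $v$ when $(x_i,y_i)=(1,1)$; the global word is the concatenation of these $N$ gadgets, flanked by one fixed copy of $u$ on each side. The gadget exploits hypotheses (i) and (ii) as follows. Alice is responsible for emitting the blocks $w_{1,1},\dots,w_{1,k}$ of $w_1$ and Bob for the blocks $w_{2,1},\dots,w_{2,k}$ of $w_2$, each player keeping the internal order of his own blocks fixed. Each player has an ``early'' layout and a ``late/spread'' layout for his blocks, selected by his bit, and the cell positions are chosen so that a crossing which reorders the blocks into the shuffle pattern $w_{1,1}w_{2,1}\cdots w_{1,k}w_{2,k}=v$ occurs precisely when both bits are $1$; every other combination leaves the blocks merged as $w_1w_2=u$. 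Since the cells belong to a fixed interleaved template and are filled by each player from his own bit alone (padding unused cells with the empty letter $\epsilon$, i.e.\ $1_M$), this is a local rectangular reduction.

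Correctness then follows from the idempotency hypothesis (iii): every ``good'' gadget contributes the factor $u$, which evaluates to $e$, so if $x\cap y=\emptyset$ the whole product evaluates to a power of $e$, namely $e\in I$. If $|x\cap y|=1$ there is exactly one ``bad'' gadget, contributing $v$, and since it is flanked by copies of $u$ (guaranteed by the end-padding even when the intersection sits at a boundary coordinate), the product evaluates to $e\cdot eval(v)\cdot e=eval(uvu)$, which by (iv) is not below $e$ and hence lies outside $I$; this is exactly why the promise (at most one intersection) is all we need. I expect the main obstacle to be the gadget itself: realizing an \emph{arbitrary} shuffle $v$ (arbitrary block count $k$) by a single fixed cell template is the delicate point, since a naive ``swap the two halves'' layout only reproduces the full reversal $w_2w_1$ and breaks down once $k\geq 3$. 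The fix is to place the early and spread block-positions of the two players on a common line so that the three non-crossing combinations all read off $w_1w_2$ while the crossing combination reads off the prescribed interleaving, and then to instantiate those positions as an alternating sequence of Alice- and Bob-cells.
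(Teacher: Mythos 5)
Your proposal is correct and takes essentially the same route as the paper: a local rectangular reduction from $PDISJ$ to $(M,I)$ with $I=\langle eval(u)\rangle$, whose per-coordinate gadget places Alice's $w_1$-blocks and Bob's $w_2$-blocks on alternating cells so that the crossing case $(x_i,y_i)=(1,1)$ reads off the shuffle $v$ while the other three cases read off $u$, with $u$-flanking and idempotency giving $eval(u)\in I$ versus $eval(uvu)\notin I$ under the promise. The paper's explicit matrix (Alice: $w_1$ compact when $x_i=0$, spread when $x_i=1$; Bob: $w_2$-blocks late when $y_i=0$, early when $y_i=1$) is precisely the ``common line'' template you describe.
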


Observe that the conditions of this lemma imply the conditions of Lemma \ref{description}: since $eval(u)$ is idempotent, for any monoid $M' \in \mathbf{Com}$ and any morphism $\phi:M \to M'$, we have $\phi(eval(u)) = \phi(eval(u^2))$ and since $v$ is a shuffle of $w_1$ and $w_2$ we have $\phi(eval(u)) = \phi(eval(v))$. Also, since $eval(u)$ is idempotent, $eval(u^\omega) = eval(u)$, and in this case $eval(u v u) \nleq eval(u)$ is equivalent to $eval(u^\omega v u^\omega) \nleq eval(u^\omega)$.

\begin{proof}[Proof of Lemma \ref{shuffle}]
We show that $PDISJ \leq_r (M,I)$ where $I = \langle eval(u) \rangle$. Since $v$ is a shuffle of $w_1$ and $w_2$, there exists $k \geq 0$ such that 
\[v = w_{1,1}w_{2,1}w_{1,2}w_{2,2}...w_{1,k}w_{2,k}. \] 
The reduction is essentially linear and is given by the following matrix when $k=3$. The transformation easily generalizes to any $k$.
\[
\begin{array}{|c|c|c|c|c|c|c|c|c|c|}
\hline
w_1 & \epsilon & \epsilon & \epsilon & \epsilon & w_{2,1} & \epsilon & w_{2,2} & \epsilon & w_{2,3} \\
\hline
w_{1,1} & w_{2,1} & w_{1,2} & w_{2,2} & w_{1,3} & w_{2,3} & \epsilon & \epsilon & \epsilon & \epsilon \\
\hline
\end{array}.
\]
\[
\begin{array}{|c c|c|}
\hline
x_i & y_i & \textrm{corresponding word} \\
\hline
0 & 0 & w_1 w_{2,1} w_{2,2} w_{2,3} = u \\
\hline
0 & 1 & w_1 w_{2,1} w_{2,2} w_{2,3} = u \\
\hline
1 & 0 & w_{1,1} w_{1,2} w_{1,3} w_{2,1} w_{2,2} w_{2,3} = u\\
\hline
1 & 1 & w_{1,1} w_{2,1} w_{1,2} w_{2,2} w_{1,3} w{2,3} = v \\
\hline
\end{array}
\]
After $x$ and $y$ have been transformed into words, Alice prepends her word with $u$ and appends it with $|u|$ many $\epsilon$'s, where $|u|$ denotes the length of the word $u$. Bob prepends his word with $|u|$ many $\epsilon$'s and appends it with $u$. Let $a(x)$ be the word Alice has and let $b(y)$ be the word Bob has after these transformations. Observe that if $PDISJ(x,y) = 0$, then there exists $i$ such that $x_i = y_i = 1$. By the transformation, this means that $a(x)_1b(x)_1a(x)_2b(x)_2 ... a(x)_sb(x)_s$ is of the form $u...uvu...u$ and since $eval(u)$ is idempotent, $eval(a(x)_1b(x)_1a(x)_2b(x)_2 ... a(x)_sb(x)_s) = eval(u v u) \nleq eval(u)$. On the other hand if $PDISJ(x,y) = 1$, then by the transformation, $a(x)_1b(x)_1a(x)_2b(x)_2 ... a(x)_sb(x)_s$ is of the form $u...u$ and so 
\[ eval(a(x)_1b(x)_1a(x)_2b(x)_2 ... a(x)_sb(x)_s) = eval(u). \]
\end{proof}

The above result gives us a corollary about the monoid $BA_2^+$ which is defined to be the syntactic ordered monoid of the regular language recognized by the automaton in Figure \ref{bicycle}. The unordered syntactic monoid of the same language is denoted by $BA_2$ and is known as the Brandt monoid (see \cite{Pin97}).

\begin{figure}
\begin{center}
\includegraphics[scale=0.8]{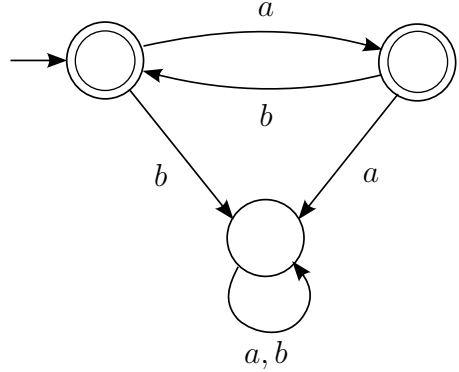}
\rput{0}(-4.1,4.8){$a$}
\rput{0}(-4.1,3.4){$b$}
\rput{0}(-5.5,2.6){$b$}
\rput{0}(-2.7,2.6){$a$}
\rput{0}(-4.1,0.2){$a,b$}
\caption{The minimal automaton recognizing the language whose syntactic ordered monoid is $BA_2^+$.}
\label{bicycle}
\end{center}
\end{figure}

\begin{corollary}\label{corbi}
$N^1(BA_2^+) = \Omega(n)$.
\end{corollary}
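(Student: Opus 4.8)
The plan is to obtain this as a direct instance of Lemma~\ref{shuffle}, so the entire task reduces to exhibiting a presentation $BA_2^+ = \langle G, R\rangle$ together with words $u,v \in G^*$ satisfying hypotheses (i)--(iv). I would present the underlying Brandt monoid as $BA_2 = \langle \{a,b\},\ a^2 = b^2 = 0,\ aba = a,\ bab = b\rangle$, whose six elements are $1, a, b, ab, ba, 0$, and then equip it with the syntactic order of the language recognized by the automaton of Figure~\ref{bicycle}. Reading that automaton, $a$ and $b$ act as the partial maps $1 \mapsto 2$ and $2 \mapsto 1$ (all remaining transitions falling into the sink state), so with $1$ accepting the recognized language is $(ab)^*$ and the syntactic accepting set is the order ideal $F = \{1, ab\}$; the zero $0$ is the constant map to the sink.

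First I would set $w_1 = a$ and $w_2 = b$, so that $u = w_1 w_2 = ab$, and take $v = ba$. Conditions (i) and (ii) are then immediate: $u = w_1w_2$ by definition, and $v = ba$ is a shuffle of the single letters $a$ and $b$ (split $w_1 = \epsilon\,a$, $w_2 = b\,\epsilon$ and interleave). For (iii), the relation $bab = b$ gives $(ab)^2 = a(bab) = ab$, so $eval(u) = ab$ is idempotent. Finally I would compute $eval(uvu)$: since $b^2 = 0$ and $0$ is absorbing, $uvu = (ab)(ba)(ab) = a\,b^2\,a\,a\,b = 0$, hence $eval(uvu) = 0$.

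The crux is condition (iv), $eval(uvu) = 0 \nleq eval(u) = ab$, which is exactly the place where the \emph{order} (rather than the bare monoid) must be invoked. Here the clean argument is that $F$ is an order ideal, hence downward closed: it contains $ab$ but not $0$, so if $0 \leq_L ab$ held, downward closure would force $0 \in F$, a contradiction; equivalently, $1\cdot ab\cdot 1 \in F$ while $1\cdot 0\cdot 1 = 0 \notin F$ witnesses $0 \nleq ab$. With all four hypotheses verified, Lemma~\ref{shuffle} yields $PDISJ \leq_r (BA_2^+, \langle ab\rangle)$ and therefore $N^1(BA_2^+) = \Omega(n)$. I expect the only real obstacle to be pinning down the syntactic order accurately enough to certify (iv) --- that is, correctly extracting the accepting ideal from Figure~\ref{bicycle} and confirming $0 \notin \langle ab\rangle$ --- since conditions (i)--(iii) are purely formal identities inside the Brandt monoid.
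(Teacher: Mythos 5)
Your proposal is correct and matches the paper's proof essentially step for step: both instantiate Lemma~\ref{shuffle} with $u = ab$, $v = ba$ (so $w_1 = a$, $w_2 = b$), verify idempotence via $bab = b$, and compute $eval(uvu) = eval(abbaab) = 0$, the zero of the monoid (which the paper writes as $eval(aa)$ in its presentation). The only cosmetic difference is in certifying condition (iv): the paper notes that the zero is the greatest element of the syntactic order (since $w_1aaw_2$ is never in $L$, making $eval(aa) \nleq eval(ab)$ follow from antisymmetry), whereas you use the equivalent observation that the accepting set $\{1, ab\}$ is an order ideal containing $ab$ but not $0$ --- both are direct unwindings of the definition of the syntactic ordered monoid.
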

\begin{proof}
It is easy to verify by looking at the transformation monoid of the automaton that $BA_2^+ = \langle \{a,b\}, aa = bb, aab = aa, baa = aa, aaa = a, aba=a, bab=b \rangle$. The only thing we need to know about the order relation is that $eval(aa)$ is greater than any other element. This can be derived from the definition of the syntactic ordered monoid (Subsection 3.2.2) since for any $w_1$ and $w_2$, $w_1aaw_2$ is not in $L$. So $w_1aaw_2 \in L \implies w_1xw_2 \in L$ trivially holds for any word $x$. Let $u = ab$ and $v = ba$. These $u$ and $v$ satisfy the four conditions of the previous lemma. The last condition is satisfied because $eval(uvu) = eval(abbaab) = eval(aa)$ and $eval(ab) \neq eval(aa)$. Therefore $N^1(BA_2^+) = \Omega(n)$.
\end{proof}

Denote by $U^-$ the syntactic ordered monoid of the regular language $(a \cup b)^*aa(a \cup b)^*$, and denote by $U$ the unordered syntactic monoid. Also let $U^+$ be the syntactic ordered monoid of the complement of $(a \cup b)^*aa(a \cup b)^*$. Observe that $N^1(U^-) = O(\log n)$ since all we need to do is check if there are two consecutive $a$'s. By an argument similar to the one for Corollary \ref{corbi}, one can show that $N^1(U^+) = \Omega(n)$.

Our next linear lower bound result is for non-commutative groups.

\begin{lemma}
If $M$ is a non-commutative group then $N^1(M) = \Omega(n)$.
\end{lemma}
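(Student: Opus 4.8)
The plan is to reduce the INNER-PRODUCT function to the monoid evaluation problem for $M$, exploiting the commutator of two non-commuting elements. Since $M$ is non-commutative there exist $a,b \in M$ with $ab \neq ba$, so the commutator $c := a^{-1}b^{-1}ab$ is not the identity; let $q := \mathrm{ord}(c)$, which is a constant ($q$ divides $|M|$) and satisfies $q>1$. I would first record that a finite group carries only the trivial (equality) stable order: if $1 \leq g$ then stability gives $1 \leq g \leq g^2 \leq \cdots \leq g^{\mathrm{ord}(g)} = 1$, and antisymmetry forces $g = 1$; reducing the general case $x \leq y$ to $1 \leq x^{-1}y$ shows $x = y$. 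Consequently every subset of $M$ is an order ideal, so $I := \{1_M\}$ is a legitimate order ideal and it suffices to show $N^1(M,I) = \Omega(n)$.

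Next I would exhibit a local rectangular reduction $IP_q \leq_r (M,I)$ with $s = 2$, given by the matrix
\[
\begin{array}{|c|c|c|c|}
\hline
1 & 1 & 1 & 1 \\
\hline
a^{-1} & b^{-1} & a & b \\
\hline
\end{array}
\]
so that Alice's map sends $0 \mapsto (1,1)$, $1 \mapsto (a^{-1},a)$, and Bob's map sends $0 \mapsto (1,1)$, $1 \mapsto (b^{-1},b)$. For each index $i$ the corresponding factor of the word is $a^{-x_i} b^{-y_i} a^{x_i} b^{y_i}$, which is $c$ when $x_i = y_i = 1$ and $1_M$ otherwise; that is, the $i$-th factor equals $c^{x_i y_i}$. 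Since every factor is either $1_M$ or $c$, these factors commute, and so the entire word evaluates to $\prod_{i=1}^{n} c^{x_i y_i} = c^{\sum_i x_i y_i}$, independently of the non-commutativity of $M$. The reduction has length $t = 2n = O(n)$.

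Finally, $c^{\sum_i x_i y_i} = 1_M$ if and only if $q \mid \sum_i x_i y_i$, i.e. $\sum_i x_i y_i \equiv 0 \pmod q$, which is exactly $IP_q(x,y) = 1$; hence the word lies in $I = \{1_M\}$ precisely when $IP_q(x,y) = 1$, as required for the reduction. Since $N^1(IP_q) = \Omega(n)$ for the fixed constant $q$ (Example \ref{reduction}), the linear-length reduction gives $N^1(M,I) = \Omega(n)$ and therefore $N^1(M) = \Omega(n)$. The one step requiring care — and the reason INNER-PRODUCT, rather than plain DISJOINTNESS, is the right source problem — is the wrap-around of the powers of $c$: matching the modulus $q$ of the inner product to the order of the commutator is exactly what makes membership in $\{1_M\}$ detect $\sum_i x_i y_i \bmod q$, whereas a reduction from ordinary disjointness would fail whenever the intersection size were a nonzero multiple of $q$.
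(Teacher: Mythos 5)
Your proof is correct and follows essentially the same route as the paper: a local rectangular reduction from $IP_q$, with $q$ the order of the commutator $[a,b]$, using exactly the same transformation matrix. The only difference is minor: the paper targets the order ideal $\{m\}$ for a minimal element $m$ of the ordered monoid and has Alice append $m$ to her transformed input (thereby avoiding any appeal to order-triviality), whereas you first prove that a stable order on a finite group must be equality and then target $\{1_M\}$ directly --- both are valid, and your variant dispenses with the appending step at the cost of proving the triviality proposition, which the paper states separately anyway for its $T_q$ lemma.
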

\begin{proof}
Since $M$ is non-commutative, there exists $a,b \in M$ such that the commutator $[a,b] = a^{-1}b^{-1}ab \neq 1$. This means that $[a,b]$ has order $q > 1$. Let $m \in M$ be such that there is no $m' \in M$ with $m' \neq m$ and $m' \leq m$. Denote by $I$ the order ideal that just contains $m$. There is a reduction from $IP_q$ to $(M,I)$. The reduction is essentially local. Alice and Bob will apply the transformation given by the following matrix.
\[
\begin{array}{|c|c|c|c|}
\hline
1 & 1 & 1 & 1 \\
\hline
a^{-1} & b^{-1} & a & b \\
\hline
\end{array}.
\]
\[
\begin{array}{|c c|c|}
\hline
x_i & y_i & \textrm{corresponding word} \\
\hline
0 & 0 & 1 \\
\hline
0 & 1 & b^{-1}b = 1 \\
\hline
1 & 0 & a^{-1}a = 1 \\
\hline
1 & 1 & a^{-1} b^{-1} a b \\
\hline
\end{array}
\]
After, Alice will append $m$ to her transformed input and Bob will append $1$ to his. Observe that the product of the monoid elements evaluates to $m$ if and only if $\sum_{1 \leq i \leq n} x_iy_i \equiv 0 \mod q$ i.e. the product is $\leq m$ if and only if $\sum_{1 \leq i \leq n} x_iy_i \equiv 0 \mod q$.
\end{proof}

To obtain our last linear lower bound result, we need the following fact.

\begin{proposition}
Any stable order defined on a group $G$ must be the trivial order (equality).
\end{proposition}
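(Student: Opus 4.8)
The plan is to show that whenever $x \leq y$ in such an order we must have $x = y$; since $\leq$ is by assumption reflexive and antisymmetric, this is precisely the assertion that $\leq$ coincides with equality. First I would exploit left-invariance to reduce the problem to the identity: multiplying $x \leq y$ on the left by $x^{-1}$ (which exists because $G$ is a group) and using stability gives $1 \leq x^{-1}y$. So it suffices to prove that $1 \leq g$ implies $g = 1$ for every $g \in G$; the general case follows by taking $g = x^{-1}y$.

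The key observation is that finiteness of $G$ forces every element to have finite order. Let $n$ be the order of $g$, so that $g^{n} = 1$. Starting from $1 \leq g$ and repeatedly left-multiplying by powers of $g$ (each step legitimate by stability), I would assemble the chain
\[ 1 \leq g \leq g^{2} \leq \cdots \leq g^{n} = 1. \]
Concretely, applying stability to $1 \leq g$ with the factor $g^{k}$ yields $g^{k} \leq g^{k+1}$ for each $k$, and transitivity links these inequalities into the displayed chain. In particular $g \leq g^{2} \leq \cdots \leq g^{n} = 1$ gives $g \leq 1$, while we already have $1 \leq g$; antisymmetry then yields $g = 1$.

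Combining the two steps, $x^{-1}y = 1$, hence $x = y$, as required. I do not expect a genuine obstacle here: beyond the order axioms the only ingredient is that $G$ is finite, so that $g^{n} = 1$ for some $n$, and this is exactly what closes the cyclic chain back to the identity. The step I would be most careful to state explicitly is this use of finiteness, since the result genuinely fails for infinite groups — for example $(\mathbb{Z}, +)$ carries the usual nontrivial stable order — so the argument must lean on $g$ having finite order rather than on the order axioms alone.
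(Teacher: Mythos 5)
Your proof is correct and is essentially the paper's own argument: both reduce $x \leq y$ to $1 \leq g$ with $g = x^{-1}y$ via stability, build the chain $1 \leq g \leq g^2 \leq \cdots \leq g^n = 1$ using finiteness of $G$, and conclude $g = 1$. Your version is merely phrased directly rather than by contradiction, and you make explicit the reliance on finite order (with the $(\mathbb{Z},+)$ caveat) that the paper leaves implicit.
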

\begin{proof}
Suppose the claim is false. So there exists $a,b \in G$ such that $a \neq b$ and $a \leq b$. This implies $1 \leq a^{-1}b =: g$. If $1 \leq g$ then $g \leq g^2$, $g^2 \leq g^3$ and so on. Therefore we have $1 \leq g \leq g^2 \leq ... \leq g^k = 1$. This can only be true if $1 = g$, i.e. $a = b$.
\end{proof}

We say that $M$ is a $T_q$ monoid if there exists idempotents $e,f \in M$ such that $(ef)^qe = e$ but $(ef)^re \neq e$ when $q$ does not divide $r$.

\begin{lemma}
If $M$ is a $T_q$ monoid for $q > 1$ then $N^1(M) = \Omega(n)$.
\end{lemma}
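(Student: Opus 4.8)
The plan is to deduce the bound from the shuffle lower bound of Lemma~\ref{shuffle} by producing two words $u,v$ over the generators of $M$ that meet its four hypotheses; this lets me avoid constructing an inner-product gadget from scratch. It is worth saying why a direct reduction from $IP_q$ (the natural guess, given the modulus $q$) is awkward: the commutator reduction used for non-commutative groups relied on inverses to make a single player's contribution collapse to $1_M$, and here $e,f$ are idempotents with no inverses, so a ``solo'' contribution cannot be cancelled. Reducing instead from PROMISE-DISJOINTNESS through Lemma~\ref{shuffle} is cleaner, since the promise guarantees at most one collision and hence only a single increment ever needs to be detected.

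First I would record the cyclic structure hidden in the definition. Writing $s_r := (ef)^r e$, the relation $(ef)^q e = e$ gives $s_{r+q} = s_r$, so there are at most $q$ elements $s_0 = e, s_1 = efe, \dots, s_{q-1}$, and right multiplication by $fe$ sends $s_r$ to $s_{r+1}$; the $T_q$ hypothesis says precisely that $s_r \neq s_0$ for $q \nmid r$. Two facts drive the argument. (a) The element $h := (ef)^q$ is idempotent: from $(ef)^q e = e$ one checks $he = e$, $eh = h$ and $hf = h$, and then $h^2 = h(ef)^q = (h\,ef)(ef)^{q-1} = ef(ef)^{q-1} = h$, using $h\,ef = (he)f = ef$. (b) For every $r \not\equiv 0 \pmod q$ one has $s_r \nleq s_0$. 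Indeed, if $s_r \le s_0$ then, left-multiplying repeatedly by $ef$ and using stability of the order, $s_{r+j} \le s_j$ for all $j$; chaining these along multiples of $r$ yields $s_0 \ge s_r \ge s_{2r} \ge \cdots \ge s_0$, a cycle of inequalities that forces $s_r = s_0 = e$, contradicting $T_q$.

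With this in hand I would apply Lemma~\ref{shuffle} to $u := (ef)^q$ and $v := (fe)^q$, split as $u = w_1 w_2$ with $w_1 = e$ and $w_2 = f(ef)^{q-1}$. Then (i) the split is as required; (ii) since $(fe)^q = f(ef)^{q-1} e$, the word $v$ is obtained by appending the single letter $e$ of $w_1$ after all of $w_2$, so $v$ is a shuffle of $w_1$ and $w_2$; (iii) $eval(u) = h$ is idempotent by fact (a). For (iv), a short computation using $hf = h$, $he = e$ and $eh = h$ gives $eval(uvu) = h\,(fe)^q\,h = s_{q-1}\,h = (ef)^{q-1}$; were $(ef)^{q-1} \le h = (ef)^q$, right-multiplying by $e$ would give $s_{q-1} \le s_0$, contradicting fact (b) since $q \nmid (q-1)$. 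Hence $eval(uvu) \nleq eval(u)$, all four hypotheses hold, and Lemma~\ref{shuffle} yields $N^1(M) = \Omega(n)$.

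I expect the order inequality (iv) to be the main obstacle: unlike in the $BA_2^+$ corollary, where the rejecting value was the top element and so incomparable with the accepting one for free, here the rejecting values $s_1,\dots,s_{q-1}$ are not handed to us as incomparable with $e$, and this has to be extracted from $T_q$ itself. The stability-chain argument in fact (b) is the heart of the matter, and it is exactly what shows the accepting ideal $\langle e\rangle$ separates $s_0$ from every other $s_r$. The only routine points are rewriting $e$ and $f$ as generator words so that $u,v \in G^*$, and the word-level identities for $h$, which all follow directly from $(ef)^q e = e$ together with idempotency of $e$ and $f$.
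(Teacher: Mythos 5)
Your proof is correct, but it takes a genuinely different route from the paper's. The paper proves this lemma by a \emph{direct} local rectangular reduction from $IP_q$ to $(M,\langle e\rangle)$: the gadget sends $(x_i,y_i)$ to the pairs $(e(ef)^q,(ef)^qe)$ and $(e,fe)$, so the whole product evaluates to $(ef)^{\sum x_iy_i}e$, and membership in $\langle e\rangle$ is controlled by observing that $\{e,efe,\dots,(ef)^{q-1}e\}$ is a subgroup and invoking the proposition that any stable order on a group is trivial. You instead route everything through Lemma~\ref{shuffle} (hence ultimately through $PDISJ$), exhibiting $u=(ef)^q$, $v=(fe)^q$ with the split $w_1=e$, $w_2=f(ef)^{q-1}$; note that $v=w_2w_1$ is exactly the degenerate shuffle the paper itself uses in the $BA_2^+$ corollary, so hypothesis (ii) is unproblematic. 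Your verification of (iv) is sound, including the step $s_{q-1}h=(ef)^{q-1}$, which holds because the defining relation applies across pair boundaries: $(ef)^{2q-1}=\bigl[(ef)^qe\bigr]f(ef)^{q-2}=(ef)^{q-1}$ for $q\ge 2$; and your ``fact (b)'' chain argument is a hand re-derivation, for the cyclic subgroup $\{s_0,\dots,s_{q-1}\}$, of the paper's group-order proposition (you could have cited it instead). What your route buys is a structural unification: it shows the $T_q$ hypothesis actually implies the hypotheses of Lemma~\ref{shuffle}, so the third condition in the paper's final theorem is subsumed by the first. What the paper's route buys is directness and the explicit modular-counting picture; it also refutes your motivating intuition, since it cancels ``solo'' contributions not with inverses but by absorption into the idempotent $(ef)^q$ (e.g.\ $e(ef)^q\cdot fe=e$), which is precisely the trick you assumed was unavailable without a group structure.
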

\begin{proof}
Observe that $\{e, efe, (ef)^2e, ... , (ef)^{q-1}e\}$ forms a subgroup with identity $e$ because since $e$ is idempotent, we have $(ef)^ie \cdot (ef)^je = (ef)^{i+j}e$. Therefore any order on $M$ must induce an equality order on this set. Let $I = \langle e \rangle$. We show $IP_q \leq_r (M,I)$ via the following local reduction.
\[
\begin{array}{|c|c|}
\hline
e(ef)^q & (ef)^q e \\
\hline
e & fe \\
\hline
\end{array}.
\]
\[
\begin{array}{|c c|c|}
\hline
x_i & y_i & \textrm{corresponding word} \\
\hline
0 & 0 & e(ef)^q(ef)^qe = e \\
\hline
0 & 1 & e(ef)^qfe = e \\
\hline
1 & 0 & e(ef)^qe = e \\
\hline
1 & 1 & efe \\
\hline
\end{array}
\]
Observe that the product of the monoid elements evaluates to 
\[(ef)^{\sum_{1\leq i \leq n} x_iy_i \equiv 0 \mod q} e, \] 
which is equal to $e$ if and only if $IP_q(x,y) = 1$.
\end{proof}

Combining our linear lower bound results together with Lemma \ref{division}, we can conclude the following.

\begin{theorem}\label{}
If $M$ is a $T_q$ monoid for $q > 1$ or is divided by one of $BA_2^+$, $U^+$ or a non-commutative group, then $N^1(M) = \Omega(n)$.
\end{theorem}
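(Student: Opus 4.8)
The plan is to prove the statement as a straightforward synthesis of the four linear lower bounds already established in this section, using the division lemma (Lemma \ref{division}) as the transfer mechanism. I would split into cases according to the disjunction in the hypothesis.

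First, if $M$ is a $T_q$ monoid for some $q > 1$, then the conclusion is immediate from the preceding lemma establishing $N^1(M) = \Omega(n)$ for exactly such monoids; no further argument is required. For the remaining three cases the idea is uniform: each of $BA_2^+$, $U^+$, and the non-commutative groups has already been shown to have linear non-deterministic complexity. If $N$ denotes any one of these and $N \prec M$, then Lemma \ref{division} gives $N^1(N) \leq N^1(M)$, so the linear lower bound on $N^1(N)$ transfers upward to $M$. Concretely, for $BA_2^+$ I would cite Corollary \ref{corbi}, for a non-commutative group the corresponding lemma, and for $U^+$ the linear lower bound stated in the text (itself obtained by the shuffle argument of Lemma \ref{shuffle} applied to $(a \cup b)^*aa(a \cup b)^*$, mirroring the proof of Corollary \ref{corbi}). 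In each instance $\Omega(n) = N^1(N) \leq N^1(M)$ yields $N^1(M) = \Omega(n)$, completing that case.

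There is no genuine obstacle here: the theorem is essentially a corollary of the lemmas proved immediately before it, assembled through Lemma \ref{division}. The only points demanding care are verifying that the direction of division matches the hypothesis---the phrase ``$M$ is divided by $N$'' means $N \prec M$, which is precisely the hypothesis of Lemma \ref{division}, so the bound propagates from the smaller monoid $N$ to the larger $M$ and not the reverse---and ensuring the base-case bound $N^1(U^+) = \Omega(n)$ is fully justified, since the surrounding text only sketches it as analogous to the $BA_2^+$ computation. Writing out that analogous verification (choosing the appropriate $u,v$ and checking the four conditions of Lemma \ref{shuffle} for $U^+$) would be the one piece worth spelling out explicitly before the final assembly.
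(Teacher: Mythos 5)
Your proposal is correct and takes essentially the same route as the paper, which states this theorem with no separate proof precisely because it is the assembly you describe: the $T_q$ lemma, the non-commutative group lemma, Corollary \ref{corbi} for $BA_2^+$, and the $U^+$ bound, propagated upward through Lemma \ref{division}. One small correction to your parenthetical: $U^+$ is the syntactic ordered monoid of the \emph{complement} of $(a \cup b)^*aa(a \cup b)^*$, so the analogue of the Corollary \ref{corbi} argument must be carried out for that complement rather than for $(a \cup b)^*aa(a \cup b)^*$ itself.
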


We underline the relevance of the above result by stating a theorem which we borrow from \cite{TT05}.

\begin{theorem}\label{deterministic}
If $M$ is such that $D(M) \neq O(\log n)$ then $M$ is either a $T_q$ monoid for some $q > 1$ or is divided by one of $BA_2$, $U$ or a non-commutative group.
\end{theorem}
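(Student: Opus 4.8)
The plan is to recast the statement as a membership problem in an (unordered) variety of finite monoids and then read the four obstructions off the failure of that variety's defining identities. By the deterministic analogue of the results recalled above from \cite{TT03}, a monoid $M$ satisfies $D(M) = O(\log n)$ if and only if $M$ lies in the variety of finite monoids $\mathbf{W}$ corresponding to the language variety $UPol(\mathcal{C}\it{om})$ (deterministic complexity defines an ordinary variety, so no order is needed here). Thus the theorem is equivalent to its contrapositive in algebraic form: \emph{if $M \notin \mathbf{W}$ then $M$ is a $T_q$ monoid for some $q > 1$ or is divided by one of $BA_2$, $U$, or a non-commutative group.} The argument therefore splits into (i) obtaining a workable algebraic description of $\mathbf{W}$, and (ii) showing that any finite monoid violating this description already contains one of the four named obstructions.

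For step (i) I would use the identity $UPol(\mathcal{V}) = Pol(\mathcal{V}) \cap co\text{-}Pol(\mathcal{V})$ from \cite{Pin97}, so that $\mathbf{W}$ is the intersection of the variety attached to $Pol(\mathcal{C}\it{om})$ with its left--right dual, taking Lemma~\ref{description} as the seed. Note first that $\mathbf{W} \supseteq \mathbf{Com}$, since $\mathcal{C}\it{om} \subseteq UPol(\mathcal{C}\it{om})$; in particular the identities of $\mathbf{W}$ must \emph{permit} abelian groups. What they forbid, by contrast, is threefold: a non-abelian subgroup, a cyclic group generated \emph{between two idempotents} (the $T_q$ pattern, $(ef)^q e = e$ but $(ef)^r e \neq e$ for $q \nmid r$), and the two aperiodic patterns carried by $BA_2$ and by $U$. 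Making these pseudo-identities precise from \cite{PW95} is the bookkeeping that turns the problem into a clean structural dichotomy.

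Step (ii) is the crux, and I would carry it out by a Green's-relations case analysis on the elements witnessing a failed identity. If some maximal subgroup of $M$ is non-abelian, then $M$ is divided by a non-commutative group and we are done. Otherwise every subgroup is abelian, and if two linked idempotents $e,f$ generate a nontrivial cyclic action — detected by $(ef)^q e = e$ with $(ef)^r e \neq e$ for some $q \nmid r$ — this is exactly a $T_q$ configuration with $q > 1$. In the remaining case the obstruction to $\mathbf{W}$ is aperiodic in nature, and the witnesses furnish idempotents $e, f$ with $e f e \neq e$; here I would build the division explicitly, verifying that the submonoid generated by the witnessing elements has the multiplication table of the Brandt monoid $BA_2$ when the failure is symmetric in left and right, and of $U$ (the syntactic monoid of $(a\cup b)^* aa (a\cup b)^*$) when it is one-sided.

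The main obstacle is precisely this last translation of a combinatorial identity-failure into an \emph{honest} division onto one of the small named monoids, together with the proof that the cases are \emph{exhaustive}. Choosing the right generating submonoid, confirming its multiplication table, and checking that the relevant idempotents realize the required relations is delicate, and resolving the aperiodic two-sided versus one-sided dichotomy without gaps is where essentially all the work lives. This is why the statement is imported from \cite{TT05} rather than reproved here: the communication-complexity content is entirely absorbed into the single equivalence $D(M) = O(\log n) \iff M \in \mathbf{W}$ from \cite{TT03}, while the remaining effort is the structural extraction just outlined.
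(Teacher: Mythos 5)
The first thing to note is that the paper does not prove this theorem at all: it is stated as a result borrowed verbatim from \cite{TT05}, so there is no internal proof to compare yours against. The comparison is therefore between your sketch and the argument living in \cite{TT05} (and its companion \cite{TT03}). Your decomposition is the right one and does mirror how the cited work proceeds: first the equivalence $D(M) = O(\log n) \iff M \in \mathbf{W}$, where $\mathbf{W}$ is the (unordered) monoid variety corresponding to $UPol(\mathcal{C}\it{om})$ --- the deterministic counterpart of Theorems \ref{equalcomplexity} and \ref{formsvariety} combined with the classification of \cite{TT03} --- and second a purely algebraic structure theorem saying that a finite monoid outside $\mathbf{W}$ is a $T_q$ monoid for some $q>1$ or is divided by $BA_2$, $U$, or a non-commutative group.

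As a proof, however, your proposal has a genuine gap, located exactly at the point you yourself flag as ``where essentially all the work lives.'' The second step is not bookkeeping: it is a forbidden-divisor characterization of $\mathbf{W}$ (in the literature, $\mathbf{DO} \cap \mathbf{Ab}$), and establishing it requires a real Green's-relations analysis. One must show the case split (non-abelian subgroup / $T_q$ configuration / aperiodic failure) is exhaustive, and in the aperiodic case actually exhibit a submonoid and a surjective morphism realizing $BA_2$ or $U$; your sketch does neither. Saying you ``would build the division explicitly, verifying the multiplication table'' names the task rather than performs it, and in particular nothing in the sketch rules out an aperiodic monoid outside $\mathbf{W}$ avoiding both $BA_2$ and $U$ as divisors --- which is precisely what the theorem asserts cannot happen. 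Your closing remark concedes this by deferring to \cite{TT05}; that deferral is legitimate, but it means your proposal ultimately collapses to the same citation the paper makes rather than constituting an independent proof.
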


The three linear lower bound results imply the following result, which gives us three sufficient conditions for not being in $Pol(\mathcal{C}\it{om})$.

\begin{theorem}
Let $L$ be a regular language with syntactic ordered monoid $M = \langle G, R \rangle$. If one of the following holds, then $L$ is not in $Pol(\mathcal{C}\it{om})$.
\begin{enumerate}
\item There exists $u,v \in G^*$ such that $u = w_1w_2$, $v$ is a shuffle of $w_1$ and $w_2$, $eval(u)$ is an idempotent and $eval(uvu) \nleq eval(u)$.
\item $M$ is divided by a non-commutative group.
\item $M$ is a $T_q$ monoid for $q > 1$.
\end{enumerate}
\end{theorem}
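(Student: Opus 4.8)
The plan is to prove the contrapositive by assembling the three linear lower bounds with the $O(\log n)$ upper bound already established for $Pol(\mathcal{C}\it{om})$. The essential bridge is Theorem~\ref{equalcomplexity}, which gives $N^1(L) = \Theta(N^1(M))$, so it suffices to show that each of the three hypotheses forces $N^1(M) = \Omega(n)$. Once that is done, $N^1(L) = \Omega(n)$ follows; and since every language in $Pol(\mathcal{C}\it{om})$ has $O(\log n)$ non-deterministic complexity, the contrapositive of that upper bound immediately rules out $L \in Pol(\mathcal{C}\it{om})$.

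Next I would dispatch the three conditions one at a time. Condition~1 is verbatim the hypothesis of Lemma~\ref{shuffle}, so $N^1(M) = \Omega(n)$ is immediate. Condition~3 is verbatim the hypothesis of the $T_q$-monoid lemma, which again gives $N^1(M) = \Omega(n)$ directly. Condition~2 needs one extra step: if a non-commutative group $N$ divides $M$, then by the proposition that the only stable order on a group is equality, $N$ is necessarily equipped with the trivial order, so the lemma asserting that non-commutative groups have linear non-deterministic complexity applies and yields $N^1(N) = \Omega(n)$; Lemma~\ref{division} then propagates this upward via $N^1(N) \le N^1(M)$, giving $N^1(M) = \Omega(n)$ as well.

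Since all the genuine content lives in the previously proved lemmas, there is no real obstacle here and the statement is essentially a bookkeeping corollary. The one point deserving care is invoking the upper/lower-bound bridge in the correct direction: I use the $Pol(\mathcal{C}\it{om})$ result as an upper bound whose contrapositive excludes membership, and Theorem~\ref{equalcomplexity} to transfer the monoid-level bound $N^1(M) = \Omega(n)$ to the language-level bound $N^1(L) = \Omega(n)$. A secondary subtlety, already flagged, is that the division hypothesis in condition~2 is phrased for ordered monoids, so I must first observe that the dividing group can only carry the equality order before the group lemma is applicable; after that observation the remainder is routine.
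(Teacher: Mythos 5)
Your proposal is correct and follows essentially the same route the paper intends (the paper leaves the proof implicit in the phrase ``the three linear lower bound results imply the following result''): each condition forces $N^1(M)=\Omega(n)$ --- condition~1 via Lemma~\ref{shuffle}, condition~2 via the non-commutative group lemma propagated upward by Lemma~\ref{division}, condition~3 via the $T_q$ lemma --- and then Theorem~\ref{equalcomplexity} together with the $O(\log n)$ upper bound for languages in $Pol(\mathcal{C}\it{om})$ excludes $L$ from $Pol(\mathcal{C}\it{om})$. Your extra observation that the dividing group can only carry the equality order is harmless but not strictly needed, since the group lemma is stated, and proved via a minimal element, for an arbitrary stable order.
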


In particular, if $M(L)$ is a $T_q$ monoid or is divided by one of $BA_2^+$, $U^+$ or a non-commutative group, then $L$ is not in $Pol(\mathcal{C}\it{om})$.
\chapter{Conclusion}\label{chapter5}

The focus of this thesis has been the non-deterministic communication complexity of regular languages. Regular languages are, in some sense, the simplest languages with respect to the usual time/space complexity framework, but in the communication complexity model, they require a non-trivial study as there are complete regular languages for every level of the communication complexity polynomial hierarchy. This fact can be derived from the results in \cite{Bar86} and \cite{BT87}. In \cite{TT03}, a complete characterization of the communication complexity of regular languages was established in the deterministic, simultaneous, probabilistic, simultaneous probabilistic and Mod$_p$-counting models. In order to get a similar algebraic characterization for the non-deterministic model, one needs the notion of ordered monoids, a more general theory than the one used in \cite{TT03}, to be able to deal with classes of languages that are not closed under complementation. This thesis presents the fundamentals of communication complexity, monoid theory as well as ordered monoid theory and obtains bounds on the non-deterministic communication complexity of regular languages.

Our results constitute the first steps towards a complete classification for the non-deterministic communication complexity of regular languages. We know exactly which regular languages have constant non-deterministic communication complexity. We know that there is a considerable complexity gap between those languages having constant non-deterministic complexity and the rest of the regular languages since if a regular language does not have constant complexity than it has $\Omega(\log n)$ complexity. We also obtain three linear lower bound results and the importance of these results are highlighted by Theorem \ref{deterministic}. These results also provide us with several sufficient conditions for not being in the variety $Pol(\mathcal{C}\it{om})$, which is a result very interesting from an algebraic automata theory point of view. This result also exemplifies how computational complexity can be used to make progress in semigroup theory.

Our ultimate objective is to get a complete characterization of the non-deterministic communication complexity of regular languages. We conjecture that regular languages in $Pol(\mathcal{C}\it{om})$ are the only languages that have $O(\log n)$ complexity and any other regular language must have $\Omega(n)$ complexity. The linear lower bound argument presents a real challenge. A natural next step to take is to explicitly find a regular language that is not in $Pol(\mathcal{C}\it{om})$ for which our current linear lower bound arguments do not apply and try to either prove a linear lower bound for this specific language or show that it requires $O(n^\epsilon)$ complexity for a constant $\epsilon < 1$ (which would disprove our conjecture). A linear lower bound argument for this language is likely to apply to some other languages outside of $Pol(\mathcal{C}\it{om})$, if not all. The regular language recognized by the automaton in Figure \ref{automaton} is an example of a regular language that is outside of $Pol(\mathcal{C}\it{om})$ and for which we cannot get a linear lower bound nor a sublinear upper bound. We call this language $L_5$. In the Appendix, we prove that $L_5$ is not in $Pol(\mathcal{C}\it{om})$ and various other facts about $L_5$.

\begin{figure}
\begin{center}
\includegraphics[scale=0.8]{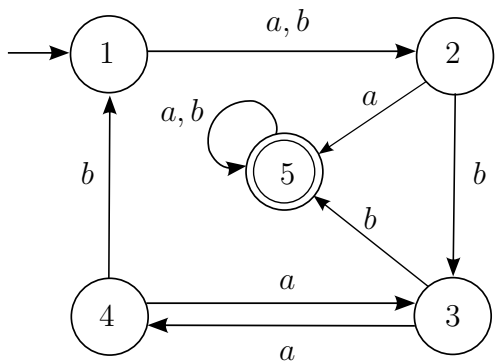}
\rput{0}(-4,4.9){$a,b$}
\rput{0}(-4,1.45){$a$}
\rput{0}(-4,0.5){$a$}
\rput{0}(-1.45,2.9){$b$}
\rput{0}(-6.66,2.9){$b$}
\rput{0}(-5.4,3.7){$a,b$}
\rput{0}(-2.9,3.9){$a$}
\rput{0}(-2.9,2.3){$b$}
\rput{0}(-6.4,4.5){1}
\rput{0}(-1.8,4.5){2}
\rput{0}(-1.8,1){3}
\rput{0}(-6.4,1){4}
\rput{0}(-4,2.9){5}
\end{center}
\caption{An automaton recognizing a language $L_5$ outside of $Pol(\mathcal{C}\it{om})$.}
\label{automaton}
\end{figure}

An interesting property of $L_5$ is that it belongs to $Pol(\mathcal{N}\it{il}_2)$ where $\mathcal{N}\it{il}_2$ denotes the variety of languages that correspond to the variety of nilpotent groups of class 2. Nilpotent groups of class 2 are usually considered as ``almost" commutative groups. In some sense, this says that even though $L_5$ is not in $Pol(\mathcal{C}\it{om})$, it is very ``close" to it.

We propose several intuitive reasons of why proving a linear lower bound for this regular language can be challenging (assuming that the linear lower bound is indeed true). We also suggest possible approaches to overcome the difficulties. All of these tie with the importance of the problem we are studying from a communication complexity point of view as well as from a semigroup theory point of view.

First of all, from Chapter 2 we know that the best lower bound technique we have for non-determinism is the rectangle size method. Inherent in this method is the requirement to find the best possible distribution. Needless to say, this can be quite hard. And even if the best distribution was known, bounding the size of any 1-monochromatic rectangle can be a non-trivial task. Putting these two things together, the rectangle size method does not seem to considerably simplify our task of bounding the size of the optimum covering of the 1-inputs.

Consider the set $S$ of all functions having $\Omega(n)$ non-deterministic communication complexity. Define an equivalence relation on these functions: $f \equiv g$ if there is a rectangular reduction of length $O(n)$ from $f$ to $g$ and from $g$ to $f$. We can turn $S/\equiv$ into a partially ordered set (poset) by defining the order $[f] < [g]$ if there is a rectangular reduction of length $O(n)$ from $f$ to $g$. It certainly would not be surprising if there were regular languages appearing in the lower levels of a chain in this poset and this would suggest that obtaining a lower bound for these languages can be difficult. 

If the above is indeed true, then what can be done about this? A natural step would be to find functions that are at the same level or below the regular language at hand, and try to get a reduction that would prove the language has linear non-deterministic complexity. This raises our interest in promise functions.

Let $f$ be a boolean function with the domain $\{0,1\}^n \times \{0,1\}^n$. A promise function $Pf$ is a function that has a domain $D$ that is a strict subset of $f$'s domain and is such that for any $(x,y) \in D$, $Pf(x,y) = f(x,y)$. An example of a promise function is the PROMISE-DISJOINTNESS function, $PDISJ$. Promise functions are interesting because through a promise, we can define functions that reside in the lower levels of a chain. This in return can make a reduction possible from the promise function to the regular language of interest. For instance, $PDISJ$ is a promise function which lies below $DISJ$ and $IP_q$ (Example \ref{reduction}). Of course an important point when defining a promise function is that we need the promise function to have $\Omega(n)$ complexity. In some sense, through the promise, we would like to eliminate the easy instances and keep the instances that make the function hard. At first, there might be no reason to believe that obtaining a linear lower bound for the promise function is any easier than obtaining a lower bound for the regular language. Nevertheless, the purpose of this line of attack is the following. By putting a promise on a well-known, well-studied function (that makes a reduction possible), we may be able to utilize (or improve) the various techniques and ideas developed for the analysis of the original function to prove a lower bound on the promise function.

\index{PROMISE-INNER-PRODUCT}
Now we define a promise function, PROMISE-INNER-PRODUCT ($PIP_2$), such that there is a reduction from this function to $L_5$ (see Appendix). $PIP_2$ is the same function as $IP_2$ but has a restriction on the $(x,y)$ for which $IP_2(x,y) = 0$. We only allow the 0-inputs which satisfy the following two conditions:
\[ \forall i, \; x_i = 0 \textrm{ and } y_i = 1 \implies IP_2(x_1...x_{i-1}, y_1...y_{i-1}) = 0 \]
\[ \forall i, \; x_i = 1 \textrm{ and } y_i = 0 \implies IP_2(x_1...x_{i-1}, y_1...y_{i-1}) = 1 \]
It remains an open problem to prove a linear lower bound, or a sublinear upper bound on $PIP_2$.

The fact is that little is known about promise functions. One promise function we know of is $PDISJ$. As a consequence of the celebrated work of Razborov (\cite{Raz92}), which shows that the distributional communication complexity of the DISJOINTNESS function is $\Omega(n)$, we know that $N^1(PDISJ) = \Omega(n)$ as well. Given the description of regular languages outside of $Pol(\mathcal{C}\it{om})$ (Lemma \ref{description}), $PDISJ$ is one of the first functions one tries to get a reduction from, where the reduction is as in the proof of Lemma \ref{shuffle}. This hope is hurt by the fact that such a reduction does not exist from $PDISJ$ to $L_5$ (see Appendix).

We believe that more attention should be given to promise functions since the study of these functions is likely to force us to develop new techniques in communication complexity and give us more insight in this area. Furthermore, given the connection of communication complexity with many other areas in computer science, promise functions are bound to have useful applications. For instance, in a very recent work of G\'{a}l and Gopalan (\cite{GG07}), communication complexity bounds for a promise function is used to prove bounds on streaming algorithms.

We have looked at our question from a communication complexity perspective. Now we look at it from a semigroup theory perspective. The key to making progress on our question can be finding a more convenient description of what it means to be outside of $Pol(\mathcal{C}\it{om})$. The description that we have (Lemma \ref{description}) actually applies to $Pol(\mathcal{V})$ for any variety $\mathcal{V}$, and it is based on a complicated result of \cite{PW95} that makes use of a deep combinatorial result of semigroup theory (\cite{Sim89},\cite{Sim90},\cite{Sim92}). Since we are only interested in $Pol(\mathcal{C}\it{om})$ where $\mathcal{C}\it{om}$ is a relatively simple variety, it may be possible to obtain a more useful description that allows us to show communication complexity bounds.

We conclude that, in any case, the resolution of our question will probably lead to advances in either communication complexity or semigroup theory, if not both.
\appendix
\chapter{Facts About $L_5$}

\begin{center}
\includegraphics[scale=0.8]{automaton.eps}
\rput{0}(-4,4.9){$a,b$}
\rput{0}(-4,1.45){$a$}
\rput{0}(-4,0.5){$a$}
\rput{0}(-1.45,2.9){$b$}
\rput{0}(-6.66,2.9){$b$}
\rput{0}(-5.4,3.7){$a,b$}
\rput{0}(-2.9,3.9){$a$}
\rput{0}(-2.9,2.3){$b$}
\rput{0}(-6.4,4.5){1}
\rput{0}(-1.8,4.5){2}
\rput{0}(-1.8,1){3}
\rput{0}(-6.4,1){4}
\rput{0}(-4,2.9){5}
\end{center}

In this appendix, we prove some of the facts about the regular language $L_5$ that we mentioned in Chapter 5. We start with the fact that $L_5$ is not in $Pol(\mathcal{C}\it{om})$. For this, we need a result that describes the ordered monoid variety corresponding to $Pol(\mathcal{C}\it{om})$. This description involves the Mal'cev product and topological issues which we choose to avoid for simplicity. The interested reader can find the necessary information about these in \cite{Pin97}. Here we will state a restricted version of this result which suffices for our needs.

\begin{lemma}
Let $L$ be a language in $Pol(\mathcal{C}\it{om})$ and let $M = \langle G, R \rangle$ be the syntactic ordered monoid of $L$ with exponent $\omega$. Then for any $u, v \in G^*$ with the property that any monoid $M' \in \mathbf{Com}$ and any morphism $\phi: M \to M'$ satisfies both $\phi(eval(u)) = \phi(eval(v))$ and $\phi(eval(u)) = \phi(eval(u^2))$, we must have $eval(u^\omega vu^\omega) \leq eval(u^\omega)$.
\end{lemma}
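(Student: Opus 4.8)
The plan is to obtain the statement as a direct specialization of the Pin--Weil description of the ordered monoid variety corresponding to $Pol(\mathcal{C}\it{om})$ (\cite{PW95}), fed through the Variety Theorem for ordered monoids (\cite{Pin95}). First I would observe that, since $L \in Pol(\mathcal{C}\it{om})$ and $M = \langle G,R\rangle$ is its syntactic ordered monoid, the Variety Theorem forces $M$ to lie in the variety of ordered monoids $\mathbf{V}$ that corresponds to the positive variety $Pol(\mathcal{C}\it{om})$. Everything then reduces to reading off the identities that define $\mathbf{V}$ and checking that the assumed property of $u$ and $v$ is exactly the instance under which one of these identities applies.

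The key input is the Pin--Weil theorem, which (in the restricted form we need) states that $\mathbf{V}$ is ultimately defined by the family of identities $x^\omega y x^\omega \leq x^\omega$ ranging over all pairs $x,y$ such that $\mathbf{Com}$ satisfies $x = x^2 = y$. I would then apply this identity along the substitution $x \mapsto eval(u)$, $y \mapsto eval(v)$. The premise of the identity under this substitution asks that $eval(u)$ and $eval(v)$ become equal, and that $eval(u)$ become idempotent, in \emph{every} commutative quotient of $M$; since a commutative quotient of $M$ is precisely the data of a morphism $\phi : M \to M'$ with $M' \in \mathbf{Com}$, and since $\phi(eval(u^2)) = \phi(eval(u))^2$, this premise is word-for-word the pair of conditions $\phi(eval(u)) = \phi(eval(v))$ and $\phi(eval(u)) = \phi(eval(u^2))$ assumed in the lemma. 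Hence the identity fires, giving $eval(u)^\omega\, eval(v)\, eval(u)^\omega \leq eval(u)^\omega$; because $\omega$ is an exponent of the finite monoid $M$, the element $eval(u)^\omega = eval(u^\omega)$ is genuinely idempotent and the left-hand side equals $eval(u^\omega v u^\omega)$, so this is exactly the desired inequality $eval(u^\omega v u^\omega) \leq eval(u^\omega)$.

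The main obstacle is not the algebra of the substitution but the correct invocation of \cite{PW95}: that result is naturally phrased with the Mal'cev product and profinite identities, and the real content of the proof is the translation of its profinite premise ``$\mathbf{Com}$ satisfies $x = y$'' into the concrete, quantifier-over-morphisms statement used here. One must check that quantifying over all morphisms from $M$ into finite commutative monoids is the right finite incarnation of evaluating the identity in the relatively free pro-$\mathbf{Com}$ monoid. It is instructive to note why a naive combinatorial argument does not replace this step: writing $L$ as a finite union of products $L_0 a_1 L_1 \cdots a_k L_k$ with $L_i \in \mathcal{C}\it{om}$ and trying to insert $v$ inside a long $u^\omega$ block fails, because the commutative monoids recognizing the individual factors $L_i$ need not be quotients of $M(L)$, so the hypothesis on $u,v$ cannot be applied to them. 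Bridging precisely this gap is what the deep machinery behind \cite{PW95} accomplishes, which is why we are content to quote it.
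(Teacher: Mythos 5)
The paper never actually proves this lemma: it is stated as a ``restricted version'' of the Pin--Weil description of the ordered variety corresponding to $Pol(\mathcal{C}\it{om})$, with the Mal'cev product and profinite machinery explicitly suppressed, and the reader is sent to \cite{PW95}, \cite{Pin97}. Your proposal follows the same route (Variety Theorem plus the Pin--Weil identities), so in spirit it matches the paper; the problem is that the one step you declare to be ``word-for-word'' is precisely where your derivation breaks. In Pin--Weil's theorem the identities $x^\omega y x^\omega \leq x^\omega$ range over \emph{profinite words} $x,y$, and the side condition ``$\mathbf{Com}$ satisfies $x = y = x^2$'' is evaluated in the free pro-commutative monoid, i.e.\ under \emph{all} morphisms from $G^*$ (equivalently, all substitutions of the letters) into finite commutative monoids --- not merely under morphisms that factor through $M$. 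Two consequences. First, your substitution $x \mapsto eval(u)$, $y \mapsto eval(v)$ is not even admissible: a nonempty finite word never has idempotent image in the free pro-commutative monoid (its letter counts would have to vanish modulo every integer), so the side condition can only be met by genuinely profinite words such as $x = u^\omega$, and one must then \emph{manufacture} a profinite word $y$ with $\psi(y) = eval(u^\omega v u^\omega)$ whose pro-commutative image is idempotent. Second, verifying the side condition for such a $y$ (e.g.\ $y = \lim_n u^{n!} v\, u^{n!-1}$, which works) requires that $u$ and $v$ have the same image under \emph{every} morphism $G^* \to M'$ with $M'$ commutative; this is strictly stronger than the lemma's hypothesis, which quantifies only over morphisms $\phi : M \to M'$. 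For instance, in the syntactic monoid of $a\Sigma^*$ the words $u = a$, $v = b$ satisfy the lemma's hypothesis (their images coincide and are idempotent in every commutative image of $M$) but have different commutative images as words, so your identification of the two premises is simply false.

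So the gap is concrete: translating the lemma's ``morphisms from $M$'' premise into Pin--Weil's ``free pro-$\mathbf{Com}$'' premise is not a bookkeeping check but the actual mathematical content (it amounts to showing that such pairs are $\mathbf{Com}$-idempotent-pointlike, a statement about relational morphisms, which allow re-choosing word representatives and are finer than quotients of $M$). Your proposal asserts this equivalence rather than proving it. Note that the weaker, honestly-derivable statement --- same hypothesis but with $\phi$ ranging over morphisms $G^* \to M'$, $M' \in \mathbf{Com}$, together with idempotency of $eval(u)$ in $M$ --- is exactly what is verified in every application in the paper (Lemma \ref{shuffle}, the $BA_2^+$ and $L_5$ arguments all use words $u,v$ with equal commutative images as words), and for that version the profinite-word construction above completes the proof. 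Either restate the lemma with the stronger premise and give that derivation, or supply the missing pointlike argument; as written, the proof does not go through.
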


\begin{proposition}
$L_5$ is not in $Pol(\mathcal{C}\it{om})$.
\end{proposition}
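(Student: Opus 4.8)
The plan is to use the lemma just stated in contrapositive form. That lemma says that if $L_5$ were in $Pol(\mathcal{C}\it{om})$, then every pair $u,v\in G^*$ whose images coincide and are idempotent in the commutative quotient of $M$ would be forced to satisfy $eval(u^\omega vu^\omega)\le eval(u^\omega)$. So it suffices to exhibit a single pair $u,v$ over the generators $\{a,b\}$ of the syntactic ordered monoid $M=M(L_5)$ such that (i) $\phi(eval(u))=\phi(eval(v))$ and $\phi(eval(u))=\phi(eval(u^2))$ for every commutative $M'$ and every morphism $\phi\colon M\to M'$, yet (ii) $eval(u^\omega vu^\omega)\nleq eval(u^\omega)$. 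For (i) it is enough to choose $u,v$ with the same Parikh vector (equal numbers of $a$'s and of $b$'s), which forces equal images in every commutative quotient, and to take $eval(u)$ idempotent in $M$ (so that $eval(u^2)=eval(u)$ and, moreover, $eval(u^\omega)=eval(u)$); both are routine finite checks once $M$ is in hand. The whole statement then reduces to producing explicit witness words and one distinguishing context.

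First I would read the transition function off the automaton in Figure~\ref{automaton}, fix the initial state and the accepting set, and compute the transformation monoid $M$ together with its syntactic order $\le_L$; this records, for each short word $w$, the induced map on $\{1,\dots,5\}$ and hence $eval(w)$. I would then search among short words for a $u$ with $eval(u)$ idempotent (so that condition (ii) collapses to $eval(uvu)\nleq eval(u)$) and a companion $v$ of the same Parikh vector that is \emph{not} a shuffle of any factorization of $u$. Excluding the shuffle case is essential here: a shuffle witness would already give $N^1(L_5)=\Omega(n)$ by Lemma~\ref{shuffle}, contrary to $L_5$ being the borderline example for which no linear lower bound is known. Since $L_5$ lies in $Pol(\mathcal{N}\it{il}_2)$, the natural candidates are two words that agree in every commutative statistic but differ in a class-two nilpotent one --- for instance differing in the parity of the number of scattered $ab$-subwords. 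Concretely I would test small such pairs (say a $u$ assembled from $ab$-blocks against a rearrangement $v$ of the same letter multiset) against the computed monoid.

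The main obstacle is verifying (ii), i.e.\ the strict failure $eval(u^\omega vu^\omega)\nleq eval(u^\omega)$ in the \emph{ordered} monoid. Unwinding the definition of $\le_L$, this amounts to finding context words $p,q\in\{a,b\}^*$ with $p\,u^\omega\,q\in L_5$ but $p\,u^\omega vu^\omega\,q\notin L_5$: an accepting run for $p\,u^\omega\,q$ that is destroyed when $v$ is inserted between the two idempotent blocks. Producing such a context forces one to pin down the accepting states and the exact action of $u$ and $v$ on them, and this is the delicate step of the argument. Once suitable $p,q$ are identified, membership of the two words in $L_5$ is checked directly on the automaton, the conditions in (i) follow from the Parikh-vector and idempotency checks above, and the contrapositive of the preceding lemma yields $L_5\notin Pol(\mathcal{C}\it{om})$.
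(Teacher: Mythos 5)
Your overall strategy coincides with the paper's: apply the appendix lemma in contrapositive, reduce condition (i) to equality of Parikh vectors plus idempotency of $eval(u)$ (which also gives $eval(u^\omega)=eval(u)$), and reduce condition (ii) to producing context words $p,q$ with $p\,u^\omega\,q \in L_5$ but $p\,u^\omega v u^\omega\,q \notin L_5$. Those reductions are all correct, and your side remark that $v$ had better not be a shuffle of a factorization of $u$ (else Lemma~\ref{shuffle} would already give $N^1(L_5)=\Omega(n)$, contradicting the status of $L_5$ as the open borderline example) is sound and is in fact consistent with the witnesses the paper uses.

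The genuine gap is that the proposal stops exactly where the proof has to begin: you never exhibit the pair $(u,v)$ nor the separating context, and you yourself flag that step as ``the delicate step of the argument.'' Since invoking the lemma is routine, the exhibition and verification of witnesses is the entire mathematical content of this proposition; as written, your text is a search plan, not a proof, and nothing in it certifies that the search succeeds. For comparison, the paper takes $u=abab$ and $v=bbaa$ (same Parikh vector; $eval(abab)$ is idempotent, fixing states $1$ and $3$ of the automaton in Figure~\ref{automaton} and sending the other states to the sink state), observes $eval(uvu)=eval(v)$, and then uses the context $w_1=\epsilon$, $w_2=aa$: a direct check on the automaton shows $ababaa \in L_5$ while $bbaaaa \notin L_5$, so $eval(v) \nleq eval(u)$ by the definition of the syntactic order, and the lemma yields $L_5 \notin Pol(\mathcal{C}\it{om})$. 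Everything you deferred to ``testing small pairs'' is precisely this one concrete computation; without it the argument does not establish the proposition.
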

\begin{proof}
Consider the transformation monoid of $L_5$, which is the syntactic monoid. Let $u = abab$ and $v = bbaa$. Observe that $eval(u)$ is an idempotent and this $u$ and $v$ satisfy the condition in the lemma. We show $eval(uvu) \nleq eval(u)$. Observe that $eval(uvu) = eval(v)$ so we want to show $eval(v) \nleq eval(u)$. If the opposite was true, then by the definition of the syntactic ordered monoid (Subsection 3.2.2), we must have for any $w_1$ and $w_2$, $w_1uw_2 \in L \implies w_1vw_2 \in L$. In particular, for $w_1 = \epsilon$ and $w_2 = aa$, we would have $uaa \in L \implies vaa \in L$. It is true that $uaa \in L$ but $vaa \notin L$.
\end{proof}

Now we show that the PROMISE-INNER-PRODUCT function that we defined in Chapter 5 reduces to $L_5$.

\begin{proposition}
$PIP_2 \leq L_5$
\end{proposition}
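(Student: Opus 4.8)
The plan is to build an explicit local rectangular reduction in the style of Lemma~\ref{shuffle} and the group reductions above: for each index $i$, Alice turns the bit $x_i$ into a short block of letters over $\{a,b\}$ (the alphabet of $L_5$) and Bob turns $y_i$ into an interleaved block, so that the whole interleaved word $a(x)_1 b(y)_1 \cdots a(x)_s b(y)_s$ traces a path in the automaton of Figure~\ref{automaton} whose terminal state records whether the running inner product $\sum_{j\le i} x_j y_j$ is even or odd. First I would read off the transformation monoid $M = M(L_5)$ and tabulate, for each state, the state reached under the short words I intend to use; the central state $5$ is the natural ``even-parity'' state, and I would arrange the blocks so that a coincidence $x_i = y_i = 1$ flips the tracked parity while $x_i = y_i = 0$ leaves it unchanged.

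The delicate cases are the mismatches $x_i \ne y_i$, and this is exactly where the promise defining $PIP_2$ enters. I would design the two mismatch blocks so that the transition they induce depends on the current parity state, and then observe that the two promise conditions
\[ x_i = 0,\ y_i = 1 \implies IP_2(x_1\cdots x_{i-1}, y_1\cdots y_{i-1}) = 0, \]
\[ x_i = 1,\ y_i = 0 \implies IP_2(x_1\cdots x_{i-1}, y_1\cdots y_{i-1}) = 1, \]
guarantee that at each mismatch position the automaton sits in precisely the state for which the chosen block keeps the path on the parity-tracking trajectory rather than falling into the sink. Thus the promise is not a convenience but the crux: it is what makes the reduction sound, and it is consistent with the remark in Chapter~5 that $PDISJ$ admits no such reduction to $L_5$.

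After the per-position blocks are fixed I would, as in Lemma~\ref{shuffle}, have Alice prepend and Bob append a fixed constant-length word, so that the path starts in the initial state and so that membership of the final word in $L_5$ reads off exactly the condition $\sum_i x_i y_i \equiv 0 \pmod 2$, i.e. $PIP_2(x,y)=1$. Concretely this amounts to specifying a $2\times 2s$ transformation matrix together with the four rows $x_i y_i \in \{00,01,10,11\}$ giving the corresponding words, after which correctness is a finite verification: I would prove by induction on $i$ the invariant that after processing the first $i$ blocks the automaton occupies the even- or odd-parity state according to $\sum_{j\le i} x_j y_j$, the inductive step being a case check on $(x_i,y_i)$ that invokes the promise in the two mismatch cases.

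The main obstacle I anticipate is reverse-engineering the precise transition table of $L_5$ from the automaton and choosing the four blocks (and the prefix and suffix) so that all four cases simultaneously behave correctly; in particular I must check that the mismatch blocks genuinely distinguish the two parity states and that no promise-respecting input is routed into the sink, while confirming that the discarded non-promise $0$-inputs are exactly the ones a naive parity-tracking word would mishandle. Once the table is pinned down the verification is routine, so the entire difficulty is concentrated in the design of the local transformation and in matching the two promise conditions to the two mismatch transitions.
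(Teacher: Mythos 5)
Your overall architecture matches the paper's proof: a local rectangular reduction in which interleaved constant-length blocks make the automaton of Figure~\ref{automaton} track the parity of the running inner product, with the promise invoked at mismatch positions and fixed padding added at the ends. But your correctness argument has a genuine gap. You propose to prove, by induction on $i$, an invariant asserting that \emph{every} input in the domain of $PIP_2$ keeps the automaton on the parity-tracking trajectory (``no promise-respecting input is routed into the sink''). This invariant cannot hold: the promise in the definition of $PIP_2$ restricts only the $0$-inputs, while the $1$-inputs are completely unrestricted. A $1$-input can therefore have a mismatch $x_i \neq y_i$ at a position where the prefix parity is the ``wrong'' one, and since the interleaving constraint ties the four blocks together (each mismatch block is forced once the two matched blocks are chosen), each mismatch block can be made to behave correctly from only one of the two parity states; such a $1$-input necessarily falls into the absorbing state $5$. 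Indeed, if you could design mismatch blocks that behave correctly from both parity states, the promise would be superfluous and you would have reduced the unrestricted $IP_2$ to $L_5$, which is exactly what the paper cannot do (compare Proposition~\ref{noreduction}).

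The paper's proof turns this apparent defect into the mechanism of the reduction, and this is the idea your plan is missing: the absorbing state $5$ lies on the \emph{accepting} side. A $1$-input either falls into the sink at some point (and is then accepted no matter what follows), or it never does, in which case the parity tracking is valid, the word ends in the odd-parity state $3$, and the letter $b$ appended by Alice sends $3$ to $5$. The promise is needed, and is only available, for the $0$-inputs: it guarantees that at every mismatch they sit in the unique parity state from which the forced mismatch block stays on track, so they never enter state $5$, end at state $1$, and are rejected (the final $b$ takes $1$ to $2$). So the correct argument has two asymmetric cases rather than one uniform invariant. Relatedly, your guess that the central state $5$ is ``the even-parity state'' is wrong: state $5$ carries self-loops on both letters and cannot track anything; in the paper's reduction the parity is carried by states $1$ and $3$.
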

\begin{proof}
The reduction is linear and is given by the following matrix.
\[
\begin{array}{|c|c|c|c|c|c|c|c|}
\hline
a & \epsilon & \epsilon & b & a & b & \epsilon & \epsilon \\
\hline
a & b & \epsilon & a & b & a & \epsilon & b \\
\hline
\end{array}.
\]
\[
\begin{array}{|c c|c|c|}
\hline
x_i & y_i & \textrm{corresponding word} & \textrm{state transition} \\
\hline
0 & 0 & abab & 1 \to 1, \; 2 \to 5, \; 3 \to 3, \; 4 \to 5, \; 5 \to 5 \\
0 & 1 & abaaab & 1 \to 1, \; 2 \to 5, \; 3 \to 5, \; 4 \to 5, \; 5 \to 5\\
1 & 0 & abbb &  1 \to 5, \; 2 \to 5, \; 3 \to 3, \; 4 \to 5, \; 5 \to 5 \\
1 & 1 & ababab & 1 \to 3, \; 2 \to 5, \; 3 \to 1, \; 4 \to 5, \; 5 \to 5 \\
\hline
\end{array}.
\]
After this transformation is applied, Alice appends her word with $b$ and Bob appends his word with $\epsilon$.  Now if $PIP_2(x,y) = 1$ then the transformed word must end up at state 5. This is because, from state 1, we either enter state 5 and stay there forever, or we go to state 3 when $x_i = y_i = 1$. If we were in state 3 already, $x_i = y_i = 1$ takes us back to state 1. $PIP_2(x,y)=1$ implies that there are an odd number of indices for which $x_i = y_i = 1$. So after the linear transformation, assuming we do not end up in state 5, we would end up in state 3. The $b$ appended at the end of the transformed word would ensure that we end up in state 5.  If $PIP_2(x,y) = 0$, then we do not want to end up at state 5. Observe that the promise ensures we never enter state 5 and since there are even number of indices for which $x_i = y_i = 1$, we must end up at state 1. The $b$ appended at the end of the word just takes us from state 1 to 2.
\end{proof}

Observe that we can restrict the 1-inputs of $PIP_2$ the same way we restricted the 0-inputs and the reduction would trivially work for this case as well. Putting a promise on both the 0-inputs and the 1-inputs may help analyzing the complexity of $PIP_2$.

\begin{proposition}\label{noreduction}
There is no local reduction from $PDISJ$ to $L_5$ such that the reduction is of the form
\[
\begin{array}{|c c|c|}
\hline
x_i & y_i & \textrm{corresponding word} \\
\hline
0 & 0 & u^\omega  \\
0 & 1 & u^\omega \\
1 & 0 & u^\omega  \\
1 & 1 & v  \\
\hline
\end{array}.
\]
where $u$ and $v$ satisfy the conditions of Lemma \ref{description}.
\end{proposition}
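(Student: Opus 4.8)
The plan is to derive a contradiction with condition~(ii) of Lemma~\ref{description}: I would show that for the syntactic ordered monoid $M$ of $L_5$, any $u,v$ coming from a reduction of the displayed form must satisfy $eval(u^\omega v u^\omega)\le eval(u^\omega)$. First I would unwind the local structure. Reading off the $2\times 2s$ matrix, the word produced in the $(1,1)$-row is an interleaving of Alice's blocks with Bob's blocks, so $v$ is a shuffle of the word $w_1$ formed by Alice's $(1,\cdot)$-blocks and the word $w_2$ formed by Bob's $(\cdot,1)$-blocks; matching this against the three rows equal to $u^\omega$ recovers exactly the construction in the proof of Lemma~\ref{shuffle}, namely $u=w_1w_2$ with $eval(u)=eval(u^\omega)$ idempotent and $v$ a shuffle of $w_1$ and $w_2$. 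It therefore suffices to prove: if $eval(w_1w_2)$ is idempotent and $v$ is a shuffle of $w_1$ and $w_2$, then $eval(uvu)\le eval(u)$.

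Next I would pass to the minimal automaton of Figure~\ref{automaton}, whose states $1,2,3,4$ are ``live'' and whose state $5$ is an absorbing, accepting sink, so that with $F=\{5\}$ the constant transformation $c_5$ sending everything to $5$ is the minimum of the syntactic order. Since $\delta_u$ is idempotent it is the identity on its image, and $\delta_{uvu}=\delta_u\circ\delta_v\circ\delta_u$, so $eval(uvu)\le eval(u)$ is equivalent to the statement that for every fixed point $s$ of $\delta_u$ and every $g\in M$ one has $g(s)\in F\Rightarrow g(\delta_u(\delta_v(s)))\in F$. If $s=5$ this is immediate, and if $\delta_u$ has at most one live fixed point then $\delta_u$ sends $\delta_v(s)$ back into $\{s,5\}$ and the implication holds automatically. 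Thus the whole problem collapses onto those idempotents of the finite monoid $M$ that possess two or more live fixed points.

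The crux is a finite classification together with a rigidity statement. Enumerating the idempotents of $M$, the only ones with at least two live fixed points are $eval(abab)$ (fixing $1,3$), $eval(baba)$ (fixing $2,4$) and $eval(aa)$ (fixing $3,4$), each the idempotent power of a transposition of two live states. For these I would use the survival discipline read off the transitions---a run avoids state $5$ only if it reads $b$ at every visit to state $2$ and $a$ at every visit to state $3$---together with the fact that a shuffle preserves both the length and the Parikh vector of $u$. A counting argument on the closed walks realising $\delta_u$ then shows that the only word with the Parikh vector of $u$ that survives from a given live fixed point \emph{and returns to it} is $u$ itself; hence every non-trivial shuffle $v$ drives each live fixed point to $5$, giving $\delta_{uvu}=c_5$ and so $eval(uvu)\le eval(u)$, while the trivial shuffle $v=w_1w_2$ gives equality. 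In either case condition~(ii) fails, and no reduction of the stated form can exist.

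The hard part will be this rigidity step, and it is genuinely subtle for two reasons. First, one cannot argue from survival alone: there are Parikh-equivalent words (for instance $aabb$ relative to $baba$) that survive from one live state to a \emph{different} live state, and for such a word the desired inequality would actually fail; they are excluded only because they are not shuffles of any factorisation of $u$, so the shuffle constraint must be used essentially rather than incidentally. Second, a single idempotent has arbitrarily long representatives (for example $abababab$ also realises $eval(abab)$), so the counting argument must be carried out uniformly over all factorisations of all such representatives, not just for one short word. Finally, pinning down the accepting set as $F=\{5\}$ from the automaton is a prerequisite, since the direction of the syntactic order---and hence the identification of $c_5$ as its minimum---depends on it.
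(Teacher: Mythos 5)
Your plan has two genuine gaps, and the first is an outright error. The claim that the three $u^\omega$-constraints force the reduction matrix to ``recover exactly'' the construction of Lemma~\ref{shuffle}, i.e.\ that $u^\omega=w_1w_2$ where $w_1,w_2$ are the concatenations of Alice's and Bob's $1$-blocks, is false as stated: take $s=2$ with Alice's $0$-blocks $(a,a)$ and Bob's $0$-blocks $(b,b)$, so the $(0,0)$-word is $abab=u^\omega$. The constraints $a(1)_1\,b\,a(1)_2\,b=abab$ and $a\,b(1)_1\,a\,b(1)_2=abab$ then force $a(1)=(a,a)$ and $b(1)=(b,b)$, hence $v=abab=u^\omega$, while $w_1w_2=aabb\neq u^\omega$. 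So the most one could hope to prove is a dichotomy ``either $w_1w_2=u^\omega$ or $v$ is trivial (and then condition (ii) of Lemma~\ref{description} already fails)'', and that dichotomy is itself a nontrivial combinatorics-on-words assertion for which you give no argument. The paper never makes this conversion at all: it keeps the matrix and extracts the contradiction directly from it.

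Second, your rigidity step is both unproved and, as stated, a non sequitur: from ``the only word with the Parikh vector of $u^\omega$ that survives from a live fixed point \emph{and returns to it} is $u^\omega$ itself'' you conclude ``hence every non-trivial shuffle drives each live fixed point to $5$''. But the words that make condition (ii) hold are precisely those that survive from one live fixed point to the \emph{other} live state (for $u^\omega=(abab)^2$, e.g.\ $ababbbaa$ or $bbaaabab$, both Parikh-equivalent to $u^\omega$ and mapping $1\to3$); your counting lemma says nothing about them, and you yourself note they must be excluded ``because they are not shuffles'' --- but that exclusion is exactly the missing content, which you defer as ``the hard part''. Note moreover that any shuffle of a factorization of $u^\omega$ can be packaged back into a matrix of the required form (this is precisely the construction proving Lemma~\ref{shuffle}), so your rigidity claim is essentially the proposition restated in shuffle language: the reduction to shuffles buys no simplification. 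The paper's route closes both holes at once: it pins down $\delta_{u^\omega}$ (a partial identity on at most two live states, so $u^\omega\in\{(abab)^k,(baba)^k\}$ --- your classification correctly adds $a^{2k}$, a case the paper's sketch omits, though it dies instantly because Parikh-equivalence forces $v=a^{2k}=u^\omega$), pins down the transition behaviour and hence the possible word shapes of $v$, and then double-counts a $\pm1$ letter weight over the matrix in two ways, by rows (giving $u^\omega$ and $v$) and by the two crossed interleavings (giving $u^\omega$ twice); this kills every admissible shape of $v$ without ever needing your two unproven lemmas.
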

\begin{proof}{\it(Sketch).}
Since $u^\omega$ is an idempotent, it must induce a state transition function in which either\\
1. $k \to k$, $j \to j$ and other states are sent to 5, or \\
2. $k \to k$ and other states are sent to 5, or\\
3. every state is sent to 5.\\
Observe that it cannot be the case that $u^\omega$ is a partial identity on more than two states. If $u^\omega$ satisfies condition 2, then we cannot have $eval(u^\omega v u^\omega) \nleq eval(u^\omega)$. Suppose this is the case. Then there exists $w_1,w_2$ such that $w_1 u^\omega w_2 \in L$ and $w_1 u^\omega v u^\omega w_2 \notin L$. Since the latter is true, it must be the case that $w_1$ takes state 1 to $k$ and $w_2$ must take $k$ to a state other than 5. These $w_1$ and $w_2$ do not satisfy $w_1 u^\omega w_2 \in L$, so we get a contradiction. This shows we cannot have condition 2. Similarly, one can show that $u^\omega$ cannot satisfy condition 3, which leaves us with condition 1. This means $u^\omega$ is either $(abab)^k$ or $(baba)^k$ for some $k >0$. We assume it is $(abab)^k$. The argument for $(baba)^k$ is very similar.

Given $u^\omega = (abab)^k$, and the fact that we want to satisfy $eval(u^\omega v u^\omega) \nleq eval(u^\omega)$, one can show that the state transition function induced by $v$ must be one of the following.\\
1. $1 \to 3$ and any other state is sent to 5.\\
2. $3 \to 1$ and any other state is sent to 5.\\
3. $1 \to 3$ and $3 \to 1$ and any other state is sent to 5.\\

Suppose $v$ satisfies condition 1.\\
Case 1: $v = (ab)^{2t-1}$ for $t>0$. Consider the matrix representation of the local reduction. In this matrix $A$, we count the parity of the $a$'s in two ways and get a contradiction. First we count it by looking at the rows. The first row must produce the word $u^\omega = (abab)^k$ and the second row must produce the word $v = (ab)^{2t-1}$ so in total we have odd number of $a$'s. Now we count the parity of $a$'s by looking at $A_{1,1} A_{2,2} A_{1,3} A_{2,4} ...$ and $A_{2,1} A_{1,2} A_{2,3} A_{1,4} ...$. Both of these must produce the word $(abab)^k$ so in total we must have an even number of $a$'s.\\
Case 2: $v = (ab)^{2t}bb...$ for $t \geq 0$. Let $c$ be the column where we find the second $b$ in the second row. Give value 1 to entries of $A$ which are $a$ and give value -1 to entries of $b$. Other entries (the $\epsilon$'s) get value 0. In terms of these values we have 
\[ \sum_{i = 1}^c A_{2,i} = -2\] 
and 
\[\sum_{i = 1}^c A_{1,i} \in \{0,1\}.\] 
Adding the two sums, we get a negative value. Now we count the same total in a different order. Assuming $c$ is even we have \[\sum_{i=1}^{c/2} A_{1,2i - 1} + \sum_{i=1}^{c/2} A_{2,2i} \in \{0,1\}\] 
and
\[ \sum_{i=1}^{c/2} A_{1,2i} + \sum_{i=1}^{c/2} A_{2,2i-1} \in \{0,1\}.\]
The total is positive. This is a contradiction.\\
Case 3: $v = (ab)^{2t-1}a(aa)^{2t}b...$ for $t>0$. Similar argument as above.

Same ideas show that $v$ cannot satisfy neither conditions 2 nor 3.
\end{proof}


\addcontentsline{toc}{chapter}{Bibliography}
\bibliographystyle{alpha}
\bibliography{mybib}


\printindex


\end{document}